\newcommand{\RR}{\mathbb{R}}      % for Real numbers
\newcommand{\CC}{\mathbb{C}}      % for Real numbers
\newcommand{\ZZ}{\mathbb{Z}}      % for Integers
\newcommand{\vecc}{\boldsymbol}
\newtheorem{ass}{Assumption}
\begin{document}

\title{Homogenization for a Class of Generalized Langevin Equations with an Application to Thermophoresis
%\thanks{Grants or other notes
%about the article that should go on the front page should be
%placed here. General acknowledgments should be placed at the end of the article.}
}
%\subtitle{Do you have a subtitle?\\ If so, write it here}

%\titlerunning{Short form of title}        % if too long for running head

%\author{Soon Hoe Lim} \email[]{shoelim@math.arizona.edu} \affiliation{Department of Mathematics, University of Arizona, USA} \affiliation{Program in Applied Mathematics, University of Arizona, USA} \author{Jan Wehr} \affiliation{Department of Mathematics, University of Arizona, USA} \affiliation{Program in Applied Mathematics, University of Arizona, USA} \author{Aniello Lampo} \affiliation{ICFO - Institut de Ciencies Fotoniques, The Barcelona Institute of Science and Technology, Spain}  \author{Miguel \'Angel Garc\'{\i}a-March} \affiliation{ICFO - Institut de Ciencies Fotoniques, The Barcelona Institute of Science and Technology, Spain}  \author{Maciej Lewenstein} \affiliation{ICFO - Institut de Ciencies Fotoniques, The Barcelona Institute of Science and Technology, Spain} \affiliation{ICREA - Instituci{\'o} Catalana de Recerca i Estudis Avan{\c c}ats, Spain} 

\author{Soon Hoe Lim \and Jan Wehr}

%\authorrunning{Short form of author list} % if too long for running head

\institute{Soon Hoe Lim \at
		Program in Applied Mathematics, University of Arizona \\
              \email{shoelim@math.arizona.edu}            \\			 
          \emph{Present address:} Nordita, 
KTH Royal Institute of Technology and Stockholm University,
Roslagstullsbacken 23, SE-106 91 Stockholm, Sweden
           \and 
           Jan Wehr \at
           \email{wehr@math.arizona.edu} \\
           Department of Mathematics and Program in Applied Mathematics, University of Arizona \\
       \emph{Present address:} Department of Mathematics, University of Arizona, Tucson, AZ 85721, USA  
}

\date{\today}
%\date{Received: date / Accepted: date}
% The correct dates will be entered by the editor

\maketitle

\begin{abstract}
We study a class of systems whose dynamics are described by generalized Langevin equations with state-dependent coefficients. We find that in the limit, in which all the characteristic time scales  vanish at the same rate, the position variable of the system  converges to a homogenized process, described by an equation containing additional drift terms induced by the noise. The convergence results are obtained using the main result in \cite{hottovy2015smoluchowski}, whose version is proven here under a weaker spectral assumption on the damping matrix. We apply our  results to study thermophoresis of a Brownian particle in a non-equilibrium heat bath. 
\keywords{Generalized Langevin equation  \and small mass limit \and multiscale analysis \and noise-induced drift \and thermophoresis }
 %\PACS{PACS code1 \and PACS code2 \and more}
% \subclass{MSC code1 \and MSC code2 \and more}
\end{abstract}

\maketitle
%\tableofcontents

\section{Introduction}

%Related works
%\cite{Freidlin2004}, \cite{lee14} \cite{Songfu08} \cite{boufoussi} \cite{Herzog2016}
%\cite{2013arXiv1309.5750D}, \cite{Reimann200257}, \cite{Pavlioties-Homog}, \cite{pesce2013stratonovich}, \cite{2015arXiv151005065H} \cite{VolpeGiovanni10} \cite{McDaniel14} \cite{ItovsStrat}
%\cite{Dygas} 
%\cite{volpe2016effective}
%\cite{cerrai2016smoluchowski}
%\cite{vanishfriction}
%\cite{gour1999will}

From physical sciences to social sciences, many phenomena are modeled by noisy dynamical systems. In many such systems, several widely separated time scales are present. The system obtained in the homogenization limit, in which the fast time scales go to zero, is simpler than the original one, while often retaining the essential features of its dynamics \cite{majda2001mathematical, givon2004extracting,Pavliotis-TwoFast,Pavliotis}. On the other hand, the different fast time scales may compete and this competition is reflected in the homogenized equations.

Of particular interest is the model of a Brownian particle interacting with the environment  \cite{nelson1967dynamical}. The usual model for such system neglects the memory effects, representing the interaction of the particle with the environment as a sum of an instantaneous damping force and a white noise. Although such an idealized model generally gives a good approximate description of the dynamics of the particle, there are situations where the memory effects play an important role, for instance  when the particle is subject to a hydrodynamic interaction  \cite{franosch2011resonances}, or when the particle is an atom embedded in a condensed-matter heat bath \cite{Groblacher2015}.   In addition, the stochastic forcing introduced by the environment is often more accurately modeled by a colored noise than by white noise.  

%We remark that noisy systems with time delay terms of a different type were studied in \cite{McDaniel2014} and \cite{hottovy2015small}. The model studied in the former paper is driven by harmonic noises, while the noise processes in the latter one are Ornstein-Uhlenbeck processes.

In this paper, we study a class of generalized Langevin equations (GLEs), with state-dependent drift and diffusion coefficients, driven by colored noise.  They provide a realistic description of the dynamics of a classical Brownian particle in an inhomogeneous environment; their solutions are not Markov processes.  We are interested in the limiting behavior of the particle when the characteristic time scales become small and in how  competition of the time scales, as well as inhomogeneity of the environment, impact its limiting dynamics. The main mathematical result of this paper is Theorem \ref{general_result}, in which we derive the homogenization limit for a general class of non-Markovian systems. Special cases are studied in some details to obtain more explicit results. Their physical relevance is illustrated by an application to thermophoresis models.  

The paper is organized as follows. In Section \ref{nmle}, we introduce and discuss a class of GLEs, as well as its two sub-classes, to be studied in this paper.  In Section \ref{skrevisited}, we revisit the Smoluchowski-Kramers limit for a class of SDEs with state-dependent drift and diffusion coefficients, under a weaker assumption on the spectrum of the damping matrix than that used in earlier work \cite{hottovy2015smoluchowski}.  Using this result of Section \ref{skrevisited}, we study homogenization for the GLEs in Section \ref{general_homog}. We specialize the study to the two sub-classes of models in Section \ref{homog}.  In Section \ref{sec:thermophoresis}, we apply the results obtained in the previous sections to study the thermophoresis of a Brownian particle in a non-equilibrium heat bath. We end the paper by giving the conclusions and final remarks in Section \ref{conc}. The appendices provide some technical results used in the main paper, as well as physical motivation for the form of the GLEs studied here. In Appendix \ref{appA} we provide a variant of a (heuristic) derivation of the equations studied in this paper from Hamiltonian model of a particle interacting with a system of harmonic oscillators.  Appendix \ref{proof_sketch} contains a sketch of the proof of Theorem \ref{skthm}.

%========================
\section{Generalized Langevin Equations (GLEs)} \label{nmle}
\subsection{GLEs as Non-Markovian Models} \label{nmleA}

We consider a class of non-Markovian Langevin equations, with state-dependent coefficients, that describe the dynamics of a particle moving in a force field and interacting with the environment. Let $\vecc{x}_{t} \in \RR^{d}$, $t \geq 0$, be the position of the particle. The evolution of position,  $\vecc{x}_{t}$, is given by the solution to the following stochastic integro-differential equation (SIDE):
\begin{equation} \label{genle}
 m \ddot{\vecc{x}}_{t} =  \vecc{F}(\vecc{x}_{t}) - \vecc{g}(\vecc{x}_{t}) \int_{0}^{t} \vecc{\kappa}(t-s) \vecc{h}(\vecc{x}_{s}) \dot{\vecc{x}}_{s} ds +  \vecc{\sigma}(\vecc{x}_{t}) \vecc{\xi}_{t}, 
\end{equation}
with the initial conditions (here the initial time is chosen to be  $t=0$): 
\begin{equation}
\vecc{x}_{0} = \vecc{x}, \ \ \dot{\vecc{x}}_{0} = \vecc{v}.
\end{equation}
The initial conditions $\vecc{x}$ and $\vecc{v}$ are random variables independent of  the process $\{\vecc{\xi}_{t}: \ t \geq 0\}$.  Our motivation to study the SIDE \eqref{genle} is that study of microscopic dynamics leads naturally to  equations of this form (see Appendix \ref{appA}).  

Here and throughout the paper, overdot denotes derivative with respect to time $t$, the superscript $^*$ denotes conjugate transposition of matrices or vectors and $E$ denotes expectation. In the SIDE \eqref{genle}, $m > 0$ is the mass of the particle, the matrix-valued functions $\vecc{g}: \RR^{d} \to \RR^{d \times q}$, $\vecc{h} : \RR^{d} \to \RR^{q \times d}$ and $\vecc{\sigma}: \RR^{r} \to \RR^{d \times r}$ are the state-dependent coefficients of the equation, and $\vecc{F} :\RR^{d} \to \RR^{d}$ is a force field acting on the particle. Here $d$, $q$ and $r$ are, possibly distinct, positive integers. The second term on the right hand side of \eqref{genle} represents the drag experienced by the particle and the last term models the noise.  
 
The matrix-valued function $\vecc{\kappa}: \RR \to \RR^{q \times q}$ is a memory function which is {\it Bohl}, i.e. the matrix elements of $\vecc{\kappa}(t)$ are finite linear combinations of the functions of the form $t^k e^{\alpha t} \cos(\omega t)$ and $t^k e^{\alpha t} \sin(\omega t)$, where $k$ is an integer and $\alpha$ and $\omega$ are real numbers. For properties of Bohl functions, we refer to Chapter 2 of \cite{trentelman2002control}.  The noise process $\vecc{\xi}_{t}$ is a $r$-dimensional mean zero stationary real-valued Gaussian vector process having a Bohl covariance function, $\vecc{R}(t):=E \vecc{\xi}_t \vecc{\xi}_0^* = \vecc{R}^*(-t) $, and, therefore, its spectral density, $\vecc{S}(\omega) := \int_{-\infty}^{\infty} \vecc{R}(t) e^{-i\omega t} dt$, is a rational function \cite{willems1980stochastic}. 

%We remark that a  more general form of SIDE can be treated by methods of this paper, in which  the factor multiplying the integral in $\eqref{genle}$ is replaced by $\vecc{f}(\vecc{x}_t)$ where $\vecc{f}$ and $\vecc{g}$ are two distinct functions.  

The SIDE \eqref{genle} is a non-Markovian Langevin equation, since its solution at time $t$ depends on the entire past. Two of its terms are different than those in the usual Langevin equations. One of them is the drag term, which here involves an integral over the particle's past velocities with a memory kernel $\vecc{\kappa}(t-s)$. It describes the state-dependent dissipation which comprises the back-action effects of the environment up to current time. The other term, involving a Gaussian colored noise $\vecc{\xi}_{t}$, is a multiplicative noise term, also arising from interaction of the particle with the environment. Therefore, \eqref{genle} is a generalized Langevin equation (GLE), which in its most basic form was first introduced by Mori in \cite{mori1965transport} and subsequently used to model many systems in statistical physics \cite{Kubo_fd,toda2012statistical,goychuk2012viscoelastic}.

%We will call \eqref{genle} a GLE and SIDE interchangeably throughout this paper.

As remarked by van Kampen in \cite{van1998remarks}, ``Non-Markov is the rule, Markov is the exception". Therefore, it is not surprising that non-Markovian equations (including those of form \eqref{genle}) find numerous applications and thus have been studied widely in the mathematical, physical and engineering literature.  See, for instance,  \cite{luczka2005non, samorodnitsky1994stable} for surveys of non-Markovian processes, \cite{PhysRevB.89.134303,mckinley2009transient,adelman1976generalized} for physical applications and \cite{Ottobre} for asymptotic analysis.

%A variant of SIDE \eqref{genle}, where the coefficients depend on both position and velocity,  has been studied in the context of escape rate problem in \cite{dygas1986singular}.

%The non-equilibrium  statistical mechanics, in particular fluctuation relations, of \eqref{genle} and its variants has also been studied extensively. See for instance,  \cite{jakvsic2011entropic}, \cite{jakvsi2011entropic}, \cite{gawedzki2013fluctuation}, \cite{Chetrite2008}. 

%Before introducing the general class of SIDEs to be studied, we recall a few facts from the well established theory of stochastic realization, which concerns with solving the inverse problem of stationary covariance generation.

Note that the Gaussian process $\vecc{\xi}_t$ which drives the SIDE \eqref{genle} is not assumed to be Markov.  The assumptions we made on its covariance will allow us to present it as a projection of a Markov process in a (typically higher-dimensional) space.  This approach, which originated in stochastic control theory \cite{kalman1960new}, is called {\it stochastic realization}. We describe it in detail below.

Let $\vecc{\Gamma}_1 \in \RR^{d_1 \times d_1}$, $\vecc{M}_1 \in \RR^{d_1 \times d_1}$, $\vecc{C}_1 \in \RR^{q \times d_1}$, $\vecc{\Sigma}_1 \in \RR^{d_1 \times q_1}$, $\vecc{\Gamma}_2 \in \RR^{d_2 \times d_2}$, $\vecc{M}_2 \in \RR^{d_2 \times d_2}$, $\vecc{C}_2 \in \RR^{r \times d_2}$, $\vecc{\Sigma}_2 \in \RR^{d_2 \times q_2}$ be constant matrices, where $d_1,d_2,q_1,q_2$, $q$ and $r$ are positive integers. 
In this paper, we study the class of SIDE \eqref{genle}, with the memory function defined in terms of the triple $(\vecc{\Gamma}_1,\vecc{M}_1,\vecc{C}_1)$ of matrices as follows:
\begin{equation} \label{memory_realized}
\vecc{\kappa}(t)=\vecc{C}_1e^{-\vecc{\Gamma_1}|t|}\vecc{M}_1\vecc{C}_1^*.
\end{equation}
The noise process is the mean zero, stationary Gaussian vector process,  whose covariance will be expressed in terms of the triple $(\vecc{\Gamma}_2,\vecc{M}_2,\vecc{C}_2)$.  More precisely, we define it as:
\begin{equation} \label{noise}
\vecc{\xi}_t = \vecc{C}_2 \vecc{\beta}_t,\end{equation}
where $\vecc{\beta}_t$ is the solution to the It\^o SDE:
\begin{equation} \label{realize}
d\vecc{\beta}_t = -\vecc{\Gamma}_2\vecc{\beta}_t dt + \vecc{\Sigma}_2 d\vecc{W}^{(q_2)}_t,
\end{equation}
with the initial condition, $\vecc{\beta}_0$, normally distributed with zero mean and covariance  $\vecc{M}_2$. Here, $\vecc{W}^{(q_2)}_t$ denotes a $q_2$-dimensional Wiener process and is independent of $\vecc{\beta}$. Throughout the paper the dimension of the Wiener process will be specified by the superscript.

For $i=1,2$, the matrix $\vecc{\Gamma}_i$ is {\it positive stable}, i.e. all its eigenvalues have positive real parts and $\vecc{M}_i = \vecc{M}_i^* > 0$ satisfies the following Lyapunov equation:
\begin{equation} 
\vecc{\Gamma}_i \vecc{M}_i+\vecc{M}_i \vecc{\Gamma}_i^*=\vecc{\Sigma}_i \vecc{\Sigma}_i^*.
\end{equation}
It follows from positive stability of $\vecc{\Gamma}_i$ that this equation indeed has a unique solution \cite{bellman1997introduction}. 
The covariance matrix, $\vecc{R}(t) \in \RR^{r \times r}$, of the noise process is therefore expressed in terms of  the matrices $(\vecc{\Gamma}_2,\vecc{M}_2,\vecc{C}_2)$ as follows:
\begin{equation} \label{cov}
\vecc{R}(t)=\vecc{C}_2e^{-\vecc{\Gamma_2}|t|}\vecc{M}_2\vecc{C}_2^*, 
\end{equation}
and therefore the triple $(\vecc{\Gamma}_2,\vecc{M}_2,\vecc{C}_2)$ completely specifies the probability distribution of $\vecc{\xi}_t$. It is worth mentioning  that the triples that specify the memory function in \eqref{memory_realized} and the noise process in \eqref{noise} are only unique up to the following transformations:  
\begin{equation} \label{transf_realize}
(\vecc{\Gamma}'_i=\vecc{T}_i \vecc{\Gamma}_i \vecc{T}^{-1}_i, \vecc{M}_i' = \vecc{T}_i \vecc{M}_i \vecc{T}_i^{*}, \vecc{C}'_i =  \vecc{C}_i \vecc{T}_i^{-1}),
\end{equation}
where $i=1,2$ and $\vecc{T}_i$ is any invertible matrices of appropriate dimensions.

The triple $(\vecc{\Gamma}_2,\vecc{M}_2,\vecc{C}_2)$ above is called a {\it (weak) stochastic realization} of the covariance matrix $\vecc{R}(t)$ in the well established theory of stochastic realization, which is concerned with solving the inverse problem of stationary covariance generation (see \cite{lindquist1985realization,lindquist2015linear}).  Any zero mean stationary Gaussian process, $\vecc{\xi}'_t$, having a Bohl covariance function, can be realized as a projection of a Gaussian Markov process in the above way.  Let us remark that  Gaussian processes with Bohl covariance functions are precisely those with rational spectral density \cite{willems1980stochastic}.  

%The problem of constructing the pair $(\vecc{\beta}'_t,\vecc{C}'_2)$, where $\vecc{\beta}'_t$ is a zero mean stationary Gaussian Markov process defined on a state space of minimal dimension, so that the (projected) process $\vecc{C}'_2 \vecc{\beta}'_t$ has the same covariance function as a preassigned zero mean stationary Gaussian process, $\vecc{\xi}'_t$, having a Bohl covariance function (equivalently, a rational spectral density), is equivalent to the problem of generating the preassigned covariance by passing white noise through a linear system defined by the triple $(\vecc{\Gamma}_2,\vecc{M}_2,\vecc{C}_2)$ in \eqref{noise}-\eqref{realize} \cite{willems1980stochastic}.}

Our approach allows to consider the most general Gaussian noises that can be realized in a finite-dimensional state space in the above way (i.e. as a linear transformation of a Gaussian Markov process).  In fact, the condition on the covariance function to have entries in the Bohl class is necessary and sufficient for solvability of the problem of stochastic realization of stationary Gaussian processes.  We refer to the propositions and theorems on page 303-308 of \cite{willems1980stochastic} for a brief exposition of  stochastic realization problems.

%In this paper, we take as the starting point the realization of the memory function and noise process defined by the triples $(\vecc{\Gamma}_i, \vecc{M}_i, \vecc{C}_i)$ as above, and will not be concerned with the details and algorithms associated with problems such as the spectral factorization problem in the stochastic realization theory. Our consideration, in fact, covers the largest class of systems (i.e. with a Bohl memory function and driving noise process having a rational spectral density) that one can possibly realize in a finite dimensional state space (see the propositions and theorems on page 303-308 of \cite{willems1980stochastic}). One could in principle extend the studies to the case of realizations on infinite dimensional state space, in which case the matrices become operators on a Hilbert space and a richer class \footnote{Indeed, every wide sense separable Gaussian process admits a stochastic realization on a suitable Hilbert space (see Proposition 4 in \cite{lindquist1978minimal}).} of memory functions and noise processes can be realized, but this will not be investigated here. 

\begin{remark} Physically, the choice of the matrices  $\vecc{\Gamma}_2,\vecc{M}_2,\vecc{C}_2$ specifies the characteristic time scales (eigenvalues of $\vecc{\Gamma}_2^{-1}$) present in the environment, introduces the initial state of a stationary Markovian Gaussian noise and selects the parts of the prepared Markovian noise that are (partially) observed, respectively.  In other words, we have assumed that the  noise in the SIDE \eqref{genle} is realized or ``experimentally prepared" by the above triple of matrices. 
\end{remark}

For our homogenization study of the equation \eqref{genle} we need the {\it effective damping constant}, 
\begin{equation} \label{eff_damping}
\vecc{K}_1 := \int_0^{\infty} \vecc{\kappa}(t) dt = \vecc{C}_1 \vecc{\Gamma}_1^{-1} \vecc{M}_1 \vecc{C}_1^* \in \RR^{q \times q},
\end{equation}
and the {\it effective diffusion constant}, 
\begin{equation} \label{eff_diff}
\vecc{K}_2 := \int_0^{\infty} \vecc{R}(t) dt = \vecc{C}_2 \vecc{\Gamma}_2^{-1} \vecc{M}_2 \vecc{C}_2^* \in \RR^{r \times r},
\end{equation}
to be invertible (see Section \ref{nmleB}). This is equivalent to the matrices $\vecc{C}_i$ having full rank. Homogenization for a class of systems with vanishing effective damping and/or diffusion constant \cite{bao2005non} will be explored in our future work.

With the above definitions of memory kernel and noise process, the SIDE \eqref{genle} becomes:
\begin{equation} \label{genle_general}
 m \ddot{\vecc{x}}_{t} =  \vecc{F}(\vecc{x}_{t}) - \vecc{g}(\vecc{x}_{t}) \int_{0}^{t} \vecc{C}_1 e^{-\vecc{\Gamma}_1(t-s)} \vecc{M}_1\vecc{C}_1^* \vecc{h}(\vecc{x}_{s}) \dot{\vecc{x}}_{s} ds +  \vecc{\sigma}(\vecc{x}_{t}) \vecc{C}_2 \vecc{\beta}_{t}, 
\end{equation}
where $\vecc{\beta}_t$ is the solution to the SDE \eqref{realize}.
To illustrate the results of the general study in important special cases (which will also be used later in applications), we consider two representative classes of SIDE \eqref{genle}. The driving Gaussian colored noise is Markovian in the first class and non-Markovian in the second. We set $d=d_1=d_2=q_1=q_2=q=r$ in the following  examples. 

\begin{itemize}
\item[(i)] {\it Example of a SIDE driven by a Markovian colored noise.}  The memory kernel is given by an exponential function, i.e. 
\begin{equation}
\vecc{\kappa}(t-s) = \vecc{\kappa}_{1}(t-s) := \vecc{A} e^{-\vecc{A}|t-s|},\end{equation} 
where $\vecc{A} \in \RR^{d \times d}$ is a constant diagonal matrix with positive eigenvalues.  The driving noise is  the Ornstein-Uhlenbeck (OU) process, $\vecc{\xi}_{t} = \vecc{\eta}_{t} \in \RR^d$, i.e. a mean zero stationary Gaussian process which is the solution to the SDE: 
\begin{equation} \label{ou}
d\vecc{\eta}_{t} = -\vecc{A} \vecc{\eta}_{t} dt + \vecc{A} d\vecc{W}^{(d)}_{t}.
\end{equation}
In order for the process $\vecc{\eta}_{t}$ to be stationary, the initial condition has to be distributed according to the (unique) stationary measure of the Markov process defined by the above equation, i.e. $\vecc{\eta}_{0} = \vecc{\eta}$ is normally distributed with zero mean and covariance $\vecc{A}/2$. The mean and the covariance of $\vecc{\eta}_{t}$ are given by:
\begin{equation} \label{o-u_stats}
E[\vecc{\eta}_{t}] = 0, \ \ E[\vecc{\eta}_{t} \vecc{\eta}_{s}^{*}] = \frac{1}{2}\vecc{\kappa}_{1}(t-s), \ \ s,t \geq 0.
\end{equation}

 The resulting SIDE reads:
\begin{equation} \label{side2}
m \ddot{\vecc{x}}_{t} =  \vecc{F}(\vecc{x}_{t})  - \vecc{g}(\vecc{x}_{t})\int_{0}^{t} \vecc{\kappa}_{1}(t-s) \vecc{h}(\vecc{x}_{s}) \dot{\vecc{x}}_{s} ds + \vecc{\sigma}(\vecc{x}_{t}) \vecc{\eta}_{t}.
\end{equation}
Let us note that Ornstein-Uhlenbeck processes are the only stationary, ergodic, Gaussian, Markov processes with continuous covariance functions \cite{pavliotis2014stochastic}.  When all diagonal entries of $\vecc{A}$ go to infinity, the OU process approaches the white noise.  For details on OU processes, see for instance  \cite{pavliotis2014stochastic} and Section 2 of \cite{hottovy2015small}.
\item[(ii)] {\it Example of a SIDE driven by a non-Markovian colored noise.} The memory kernel is given by an oscillatory function whose amplitude is exponentially decaying, i.e. $ \vecc{\kappa}(t-s) := \vecc{\kappa}_2(t-s)$, a diagonal matrix with the diagonal entries:
\begin{equation}\label{harmonic_memory}
(\vecc{\kappa}_{2})_{ii}(t-s) :=
\begin{cases}
    \frac{1}{\tau_{ii}} e^{-\omega_{ii}^2  \frac{|t-s|}{2 \tau_{ii}}}\left[\cos\left(\frac{\omega^0_{ii}}{\tau_{ii}} (t-s) \right) + \frac{\omega^1_{ii}}{2} \sin\left(\frac{\omega^0_{ii}}{\tau_{ii}} |t-s| \right) \right], & \text{if } |\omega_{ii}|<2 \\
  \frac{1}{\tau_{ii}} e^{-\omega_{ii}^2  \frac{|t-s|}{2 \tau_{ii}}}\left[\cosh\left(\frac{\tilde{\omega}^0_{ii}}{\tau_{ii}}(t-s) \right) + \frac{\tilde{\omega}^1_{ii}}{2} \sinh\left(\frac{\tilde{\omega}^0_{ii}}{\tau_{ii}} |t-s| \right) \right],  & \text{if } |\omega_{ii}|>2, 
\end{cases}
\end{equation}
where, for $i=1,\dots,d$, $\tau_{ii}$ is a positive constant, $\omega_{ii}$ is a real constant, $\omega^0_{ii} := \omega_{ii}\sqrt{1-\omega_{ii}^2/4}$, $\tilde{\omega}^0_{ii} := \omega_{ii}\sqrt{\omega_{ii}^2/4-1}$, $\omega_{ii}^1 := \omega_{ii}/\sqrt{1-\omega_{ii}^2/4}$, and $\tilde{\omega}_{ii}^1 := \omega_{ii}/\sqrt{\omega_{ii}^2/4-1}$.  
%where $\vecc{\tau}$ is a constant diagonal matrix with the positive eigenvalues $(\tau_{ii})_{i=1}^d$,  $\vecc{\Omega}$ is a constant diagonal matrix with the real eigenvalues $(\omega_{ii})_{i=1}^{d}$, $\vecc{\Omega}_{0}$ and $\vecc{\Omega}_0'$ are constant diagonal matrices with the eigenvalues $(\omega_{ii}\sqrt{1-\omega_{ii}^2/4})_{i=1}^{d}$ and  $(\omega_{ii}\sqrt{\omega_{ii}^2/4-1})_{i=1}^{d}$ respectively,  $\vecc{\Omega}_{1}$ and $\vecc{\Omega}_1'$ are constant diagonal matrices with the eigenvalues $(\omega_{ii}/\sqrt{1-\omega_{ii}^2/4})_{i=1}^{d}$ and $(\omega_{ii}/\sqrt{\omega_{ii}^2/4-1})_{i=1}^{d}$ respectively. 

Let  $\vecc{\tau}$ be  constant diagonal matrix with the positive eigenvalues $(\tau_{jj})_{j=1}^d$, $\vecc{\Omega}$ be  constant diagonal matrix with the real eigenvalues $(\omega_{jj})_{j=1}^{d}$, $\vecc{\Omega}_{0}$  be constant $d \times d$ diagonal matrix with the eigenvalues $\omega_{jj}\sqrt{1-\omega_{jj}^2/4}$ (if $|\omega_{jj}| < 2$) and  $i \omega_{jj}\sqrt{\omega_{jj}^2/4-1}$ (if $|\omega_{jj}|>2$), and $\vecc{\Omega}_{1}$ be  constant $d \times d$ diagonal matrix with the eigenvalues $\omega_{jj}/\sqrt{1-\omega_{jj}^2/4}$ (if $|\omega_{jj}|<2$) and $-i\omega_{jj}/\sqrt{\omega_{jj}^2/4-1}$ (if $|\omega_{jj}|>2$), where $i$ is the imaginary unit.

The driving noise is given by the harmonic noise process, $\vecc{\xi}_{t}=\vecc{h}_{t} \in \RR^d$, i.e. a mean zero stationary Gaussian process which is the solution to the SDE system: 
\begin{align}
\vecc{\tau} d\vecc{h}_{t} &=  \vecc{u}_{t} dt, \label{unscaled_har1} \\ 
\vecc{\tau} d\vecc{u}_{t} &= -\vecc{\Omega}^2 \vecc{u}_{t} dt - \vecc{\Omega}^2 \vecc{h}_{t} dt + \vecc{\Omega}^2  d\vecc{W}^{(d)}_{t}, \label{unscaled_har2}
\end{align}
with the initial conditions, $\vecc{h}_0$ and $\vecc{u}_0$, distributed according to the (unique) stationary measure of the above SDE system. The mean and the covariance of $\vecc{h}_{t}$ are given by: 
\begin{equation}
E[\vecc{h}_{t}] =  \vecc{0}, \ \ E[\vecc{h}_{t} \vecc{h}_{s}^{*}] = \frac{1}{2} \vecc{\kappa}_{2}(t-s),\ \  s, t \geq 0.\end{equation}
Note that $\vecc{h}_t$ is not a Markov process (but the process $(\vecc{h}_t, \vecc{u}_t)$ is).

The resulting SIDE reads:
\begin{equation} \label{side3}
m \ddot{\vecc{x}}_{t} =  \vecc{F}(\vecc{x}_{t}) - \vecc{g}(\vecc{x}_{t})\int_{0}^{t} \vecc{\kappa}_{2}(t-s) \vecc{h}(\vecc{x}_{s}) \dot{\vecc{x}}_{s} ds + \vecc{\sigma}(\vecc{x}_{t}) \vecc{h}_{t}.
\end{equation}

The harmonic noise is  an approximation of the white noise, smoother than  the Ornstein-Uhlenbeck process. It can be shown that in the limit $\omega_{ii} \to \infty$ (for all $i$) the process $\vecc{h}_{t}$ converges to the Ornstein-Uhlenbeck process whose $i$th component process has correlation time $\tau_{ii}$, whereas in the limit $\tau_{ii} \to 0$ (for all $i$) the process $\vecc{h}_{t}$ converges to the white noise. For detailed properties of harmonic noise process, see for instance \cite{schimansky1990harmonic} or the Appendix in \cite{McDaniel14}. We remark that the harmonic noise is one of the simplest examples of a non-Markovian process and its use as the driving  noise in the SIDE \eqref{genle} is a natural choice that models the environment as a bath of damped harmonic oscillators \cite{hanggi1993can}. 
\end{itemize}

\begin{remark} \label{dimofnoise} Note that in the SIDEs for the above two sub-classes, the dimension of the driving Wiener process, $\vecc{W}_t^{(d)}$, is the same as that of the colored noise processes $\vecc{\eta}_t$ and $\vecc{h}_t$, as well as the processes, $\vecc{x}_t$ and $\vecc{v}_t$. One could as well consider realizing the noise processes using a driving Wiener process of different dimension. Our choice of working with the same dimensions is for the sake of convenience as it will help to simplify the exposition. 
\end{remark}

\begin{remark}
Without loss of generality (due to \eqref{transf_realize}), we have taken $\vecc{A}$ and $\vecc{\Omega}$ to be diagonal.  
\end{remark}

\begin{remark}
In cases of particular interest in statistical physics, the triples $(\vecc{\Gamma}_i, \vecc{M}_i, \vecc{C}_i)$ coincide, up to the transformations in \eqref{transf_realize}, for $i=1,2$; $\vecc{h} = \vecc{g}^*$ and $\vecc{g}$ is proportional to $\vecc{\sigma}$, with the proportionality factor equals $k_B T$, where $k_B$ denotes the Boltzmann constant and $T > 0$ is temperature of the environment (see Appendix \ref{appA}). In this case, we have $d_1=d_2$ and $q=r$.  In particular, for the two sub-classes above we have
\begin{equation} \label{fdt_sc1}
E[\vecc{\eta}_{t}^{0} (\vecc{\eta}_{s}^{0})^{*}] = \frac{1}{2} \vecc{\kappa}_1(t-s)
\end{equation}
for the first sub-class and 
\begin{equation}
E[\vecc{h}_{t}^{0} (\vecc{h}_{s}^{0})^{*}] = \frac{1}{2} \vecc{\kappa}_{2}(t-s)
\end{equation}
for the second sub-class. 
In such cases, the SIDEs describe a particle interacting with an equilibrium heat bath at a temperature $T$, whose dynamics satisfy the fluctuation-dissipation relation \cite{toda2012statistical,zwanzig2001nonequilibrium}.
\end{remark}

\subsection{Homogenization of SIDEs: Discussion and  Statement of the Problem} \label{nmleB}

There are three characteristic time scales defining the non-Markovian dynamics described by the SIDE \eqref{genle}:
\begin{itemize}
\item[(i)] the inertial time scale, $\lambda_{m}$, proportional to $m$, whose physical significance is the relaxation time of the particle velocity process $\vecc{v}_{t} := \dot{\vecc{x}}_{t}$. The limit $\lambda_{m} \to 0$ is equivalent to the limit $m \to 0$; 
\item[(ii)] the memory time scale, $\lambda_{\kappa}$, defined as the inverse rate of exponential decay of the memory kernel $\vecc{\kappa}(t-s)$;
\item[(iii)] the noise correlation time scale, $\lambda_{\xi}$.
\end{itemize}

For the purpose of general multiscale analysis, we set $m = m_{0} \epsilon^{\mu}$, $\lambda_{\kappa} = \tau_{\kappa}  \epsilon^{\theta}$ and $\lambda_{\xi} = \tau_{\xi} \epsilon^{\nu}$, 
 where $\epsilon > 0$ is a parameter which will be taken to zero, $m_0$, $\tau_\kappa$, $\tau_\xi$ are (fixed) proportionality constants, and $\mu, \theta, \nu$ are positive constants (exponents), specifying the orders at which the time scales $\lambda_{m}, \lambda_{\kappa}, \lambda_{\xi}$ vanish  respectively. We consider a family of SIDEs, parametrized by $\epsilon$, with the inertial time scale $\lambda_m$ proportional to $m_0 \epsilon^\mu$, memory time scale $\lambda_{\kappa} = \tau_\kappa \epsilon^\theta$ and noise correlation time scale $\lambda_{\xi} = \tau_{\xi} \epsilon^{\nu}$, to be defined in the following.

We replace $m$ with $m_0 \epsilon^\mu$,  $\vecc{\Gamma}_1$ with $\vecc{\Gamma}_1/(\tau_{\kappa} \epsilon^{\theta})$, $\vecc{M}_1$ with $\vecc{M}_1/(\tau_{\kappa} \epsilon^{\theta})$,  and $\vecc{x}_t$ with $\vecc{x}_t^\epsilon$ in \eqref{genle_general}. Also, we substitute  $\vecc{\Gamma}_2$ with $\vecc{\Gamma}_2/(\tau_{\xi} \epsilon^{\nu})$, $\vecc{\Sigma}_2$ with $\vecc{\Sigma}_2/(\tau_{\xi} \epsilon^{\nu})$, and $\vecc{\beta}_t$ with $\vecc{\beta}_t^\epsilon$  in \eqref{realize}. The SIDE \eqref{genle_general} then becomes: 
\begin{equation} \label{general_side_rescaled}
m_0 \epsilon^{\mu} \ddot{\vecc{x}}^{\epsilon}_{t} =  \vecc{F}(\vecc{x}^\epsilon_{t}) - \frac{\vecc{g}(\vecc{x}^\epsilon_{t})}{\tau_{\kappa} \epsilon^{\theta}} \int_{0}^{t} \vecc{C}_1e^{-\frac{\vecc{\Gamma}_1}{\tau_{\kappa}\epsilon^{\theta}}(t-s)} \vecc{M}_1 \vecc{C}_1^* \vecc{h}(\vecc{x}^\epsilon_{s}) \dot{\vecc{x}}^{\epsilon}_{s} ds +  \vecc{\sigma}(\vecc{x}^\epsilon_{t}) \vecc{C}_2 \vecc{\beta}^\epsilon_{t}, 
\end{equation}
with the initial conditions, $\vecc{x}^\epsilon_0 = \vecc{x}$ and $\vecc{v}^\epsilon_0 = \vecc{v}$, where $\vecc{\beta}^\epsilon_t$ is a process, with correlation time $\tau_{\xi} \epsilon^\nu$,  satisfying the SDE: 
\begin{equation} \label{general_rescaled_ou}
d\vecc{\beta}^\epsilon_t = -\frac{\vecc{\Gamma}_2}{\tau_{\xi} \epsilon^\nu} \vecc{\beta}^\epsilon_t dt + \frac{\vecc{\Sigma}_2}{\tau_{\xi} \epsilon^\nu} d\vecc{W}^{(q_2)}_t,
\end{equation}
with the initial condition, $\vecc{\beta}^\epsilon_0$, normally distributed with zero mean and covariance of $\vecc{M}_2/(\tau_{\xi} \epsilon^\nu)$. 

We will also perform similar analysis on the two sub-classes of SIDE, in which case:
\begin{itemize}
\item[(i)] the SIDE \eqref{side2} becomes (with $m:=m_0 \epsilon^\mu$, $\vecc{A}$ in the formula for $\vecc{\kappa}_{1}$ replaced by $\vecc{A}/(\tau_{\kappa} \epsilon^{\theta})$,  $\vecc{A}$ in $\eqref{ou}$ replaced by $\vecc{A}/(\tau_{\eta}\epsilon^{\nu})$, $\vecc{x}_t$ replaced by $\vecc{x}_t^\epsilon$ and $\vecc{\eta}_{t}$ replaced by $\vecc{\eta}_{t}^\epsilon$):  
\begin{equation} \label{goal2}
m_{0} \epsilon^{\mu} \ddot{\vecc{x}}^{\epsilon}_{t} = \vecc{F}(\vecc{x}^\epsilon_{t}) - \frac{\vecc{g}(\vecc{x}^\epsilon_{t})}{\tau_{\kappa} \epsilon^{\theta}} \int_{0}^{t} \vecc{A} e^{-\frac{\vecc{A}}{\tau_{\kappa} \epsilon^{\theta}} (t-s)} \vecc{h}(\vecc{x}^\epsilon_{s}) \dot{\vecc{x}}^{\epsilon}_{s} ds +  \vecc{\sigma}(\vecc{x}^\epsilon_{t}) \vecc{\eta}^\epsilon_{t},
\end{equation} 
where  $\vecc{\eta}^\epsilon_{t}$ is the Ornstein-Uhlenbeck process with the correlation time  $\tau_{\eta} \epsilon^{\nu}$, i.e. it is a  process satisfying the SDE:
\begin{equation} \label{rescaled-ou}
d\vecc{\eta}^\epsilon_{t} = -\frac{\vecc{A}}{\tau_{\eta} \epsilon^{\nu}}  \vecc{\eta}^\epsilon_{t} dt + \frac{\vecc{A}}{\tau_{\eta} \epsilon^{\nu}} d \vecc{W}^{(d)}_{t}. 
\end{equation}
\item[(ii)] the SIDE \eqref{side3} becomes (with $m:=m_0 \epsilon^\mu$, $\tau_{ii} := \tau_{\kappa} \epsilon^{\theta}$ in the formula for ($\vecc{\kappa}_{2})_{ii}$ in \eqref{harmonic_memory}, $\vecc{x}_t$ replaced by $\vecc{x}^\epsilon_t$,  $\vecc{h}_{t}$, $\vecc{u}_{t}$ replaced by $\vecc{h}^\epsilon_{t}$, $\vecc{u}^\epsilon_{t}$ respectively and $\vecc{\tau} := \tau_{h} \epsilon^{\nu} \vecc{I}$ in $\eqref{unscaled_har1}$-$\eqref{unscaled_har2}$): 
\begin{align} 
&m_{0} \epsilon^{\mu} \ddot{\vecc{x}}^\epsilon_{t} \nonumber \\ 
&= \vecc{F}(\vecc{x}^\epsilon_{t})  
- \frac{\vecc{g}(\vecc{x}^\epsilon_{t})}{\tau_{\kappa} \epsilon^{\theta}} \int_{0}^{t} e^{-\vecc{\Omega}^2\frac{(t-s)}{2\tau_{\kappa}\epsilon^{\theta}}}\left[\cos\left(\frac{\vecc{\Omega}_{0}}{\tau_{\kappa}\epsilon^{\theta}}(t-s) \right) + \frac{\vecc{\Omega}_{1}}{2} \sin\left(\frac{\vecc{\Omega}_{0}}{\tau_{\kappa}\epsilon^{\theta}}(t-s) \right) \right] \vecc{h}(\vecc{x}^\epsilon_{s}) \dot{\vecc{x}}^{\epsilon}_{s} ds  +  \vecc{\sigma}(\vecc{x}^\epsilon_{t}) \vecc{h}^\epsilon_{t}, \label{goal3}
\end{align} 
where  $\vecc{h}^\epsilon_{t}$ is the harmonic noise process with the correlation time $\tau_{h} \epsilon^{\nu}$, i.e. it is a  process satisfying the SDE system:
\begin{align} 
d\vecc{h}^\epsilon_{t} &= \frac{1}{\tau_{h}\epsilon^{\nu}} \vecc{u}^\epsilon_{t} dt, \label{rescaled_h1} \\ 
d\vecc{u}^\epsilon_{t} &= -\frac{\vecc{\Omega}^2}{\tau_{h} \epsilon^{\nu}} \vecc{u}^\epsilon_{t} dt - \frac{\vecc{\Omega}^2}{\tau_{h} \epsilon^{\nu}} \vecc{h}^\epsilon_{t} dt + \frac{\vecc{\Omega}^2}{\tau_{h} \epsilon^{\nu}} d\vecc{W}^{(d)}_{t}. \label{rescaled_h2}
\end{align}

Both SIDEs have the initial conditions $\vecc{x}^\epsilon_{0} = \vecc{x}, \  \dot{\vecc{x}}^\epsilon_{0} = \vecc{v}$. The initial conditions $\vecc{\eta}^\epsilon_{0}$ (respectively, $\vecc{h}^\epsilon_{0}$ and $\vecc{u}^\epsilon_{0}$) are distributed according to the stationary measure of the SDE that the process $\vecc{\eta}^\epsilon_{t}$ (respectively, $\vecc{h}^\epsilon_{t}$ and $\vecc{u}^\epsilon_{t}$) satisfies.
\end{itemize}

In this paper we set $\mu = \theta = \nu$, which is the case when all the characteristic time scales are of comparable magnitude in the limit as $\epsilon \to 0$.  Our main goal is to derive a limiting equation for the (slow) $\vecc{x}^\epsilon$-component of the process solving the equations \eqref{general_side_rescaled}-\eqref{general_rescaled_ou}, including the special cases  $\eqref{goal2}$-$\eqref{rescaled-ou}$ and $\eqref{goal3}$-$\eqref{rescaled_h2}$, in the limit as  $\epsilon \to 0$, in a strong pathwise sense.

We explain the motivation behind the above rescalings. The rescaling of the memory kernels $\vecc{\kappa}(t-s)$, $\vecc{\kappa}_{1}(t-s)$, $\vecc{\kappa}_{2}(t-s)$ is such that in the limit $\epsilon \to 0$ the rescaled memory kernels converge to $\vecc{K}_1 \delta(t) $  formally, where $\delta(t)$ is the Dirac-delta function and $\vecc{K}_1$ is the effective damping constant defined in \eqref{eff_damping}. On the other hand, the noise processes $\vecc{\xi}^\epsilon_t = \vecc{C}_2 \vecc{\beta}^\epsilon_{t}$, $\vecc{\eta}^\epsilon_{t}$ and $\vecc{h}^\epsilon_{t}$ converge to  white noise processes in the limit $\epsilon \to 0$. 

%Note that our assumption that $\vecc{K}_1$ and $\vecc{K}_2$ are invertible are necessary to obtain the above asymptotic behavior of the memory kernels and noise processes.

%\cite{Hanggi1995}. %\cite{H'walisz1989}.

\section{Smoluchowski-Kramers Limit of SDE's  Revisited} \label{skrevisited}
Let $(\vecc{x}_{t}^m, \vecc{v}^m_{t}) \in \RR^{n} \times \RR^n$, where $t \in [0,T]$, be a family of solutions (parametrized by a positive constant $m$) to the following SDEs:
\begin{align}
d\vecc{x}^m_{t} &= \vecc{v}^m_{t} dt,  \label{gsk1} \\ 
m d\vecc{v}^m_{t} &= \vecc{F}(\vecc{x}^m_{t}) dt -\boldsymbol{\gamma}(\vecc{x}^m_{t}) \vecc{v}^m_{t} dt  + \boldsymbol{\sigma}(\vecc{x}^m_{t}) d\boldsymbol{W}^{(k)}_{t}.\label{gsk2}
\end{align}
In the SDEs above,  $m > 0$ is the mass of the particle, $\vecc{F}: \RR^{n} \to \RR^{n}$, $\boldsymbol{\gamma}: \RR^{n} \to \RR^{n \times n}$, $\boldsymbol{\sigma}: \RR^{n} \to \RR^{n \times k}$, and $\vecc{W}^{(k)}$ is a $k$-dimensional Wiener process on the filtered probability space $(\Omega, \mathcal{F}, \mathcal{F}_{t}, \mathbb{P})$ satisfying the usual conditions \citep{karatzas2012Brownian}. 
%Here $\mathcal{F}_{t}$ is the usual augmentation of $\sigma(\{\vecc{W}^{(k)}_{s} : s \leq t\})$, the $\sigma$-algebra generated by $\vecc{W}^{(k)}$ up to time $t$. 
The initial conditions are given by $\vecc{x}^m_{0} = \vecc{x}, \ \vecc{v}^m_{0} = \vecc{v}^m$. The above SDE system models diffusive phenomena in cases where the damping coefficient $\vecc{\gamma}$ and diffusion coefficient $\vecc{\sigma}$ are state-dependent. 

The Smoluchowski-Kramers limit (or the small mass limit) of the system \eqref{gsk1}-\eqref{gsk2}  has been studied in \citep{hottovy2015smoluchowski, Hottovy12, Herzog2016, birrell2017small,birrell2017homogenization}. The main result in \citep{birrell2017homogenization} says that, under certain assumptions, the $\vecc{x}^m$-component of the solution to \eqref{gsk1}-\eqref{gsk2} converges (in a strong pathwise sense), as $m \to 0$, to the solution of a homogenized SDE that contains in particular the so-called noise-induced drift, that was not present in the pre-limit SDEs (see Theorem \ref{skthm} for a precise statement). The presence of such noise-induced drift in the homogenized equation is a consequence of the state-dependence of the damping coefficient (and therefore also the diffusion coefficient if the system satisfies a fluctuation-dissipation relation). For an overview of the noise-induced drift phenomena, we refer to the review article \citep{volpe2016effective}.  

In all the works mentioned previously, the spectral assumption made on the matrix $\boldsymbol{\gamma}$ was that the symmetrized damping matrix  $\frac{1}{2}(\boldsymbol{\gamma} + \boldsymbol{\gamma}^{*})$ is uniformly positive definite (i.e. its smallest eigevalue is positive uniformly in $\vecc{x}$). The same results can be obtained under a weaker assumption that matrix $\boldsymbol{\gamma}$ is {\it uniformly positive stable}, i.e. all real parts of the eigenvalues of $\boldsymbol{\gamma}$ are  positive uniformly in $\vecc{x}$ \citep{horn1994topics}. \\

%To simplify the analysis, we assume boundedness of the coefficients in the SDEs as in \citep{hottovy2015smoluchowski} (but see Remark \ref{ass_rmk}). \\

\noindent {\bf Notation.} Here and in the following, we use Einstein summation convention on repeated indices. The Euclidean norm of a vector $\vecc{w}$ is denoted by $| \vecc{w} |$ and the (induced operator) norm of a matrix $\vecc{A}$ by $\| \vecc{A} \|$. For an $\RR^{n_2 \times n_3}$-valued function $\vecc{f}(\vecc{y}):=([f]_{jk}(\vecc{y}))_{j=1,\dots,n_2; k=1,\dots, n_3}$, $\vecc{y} := ([y]_1, \dots, [y]_{n_1}) \in \RR^{n_1}$, we denote by $(\vecc{f})_{\vecc{y}}(\vecc{y})$ the $n_1 n_2 \times n_3$ matrix:
\begin{equation}
(\vecc{f})_{\vecc{y}}(\vecc{y}) = (\vecc{\nabla}_{\vecc{y}}[f]_{jk}(\vecc{y}))_{j=1,\dots, n_2; k=1,\dots,n_3},  
\end{equation}
where $\vecc{\nabla}_{\vecc{y}}[f]_{jk}(\vecc{y})$ denotes the gradient vector $(\frac{\partial [f]_{jk}(\vecc{y})}{\partial [y]_1}, \dots, \frac{\partial [f]_{jk}(\vecc{y})}{\partial [y]_{n_1}}) \in \RR^{n_1}$ for every $j,k$.  
The symbol $\mathbb{E}$ denotes expectation with respect to $\mathbb{P}$. \\

We make the following assumptions. 

\begin{ass} \label{a1} For every $\vecc{x} \in \RR^n$, the functions $\boldsymbol{F}(\vecc{x})$ and $\boldsymbol{\sigma}(\vecc{x})$ are continuous, bounded and Lipschitz in $\vecc{x}$, whereas the functions $\boldsymbol{\gamma}(\vecc{x})$ and $(\vecc{\gamma})_{\vecc{x}}(\vecc{x})$ are continuously differentiable, bounded and Lipschitz in $\vecc{x}$. Moreover, $(\vecc{\gamma})_{\vecc{x} \vecc{x}}(\vecc{x})$ is bounded for every $\vecc{x} \in \RR^n$. 
%In addition, the Jacobian of the matrix $\vecc{\gamma}$ is bounded.
\end{ass}

\begin{ass} \label{a2} The  matrix $\boldsymbol{\gamma}$ is {\it uniformly positive stable}, i.e. all real parts of the eigenvalues of $\boldsymbol{\gamma}(\vecc{x})$ are bounded below by some constant $2\kappa > 0$, uniformly in $\vecc{x}\in \RR^n$.
\end{ass}

\begin{ass} \label{a3} The initial condition $\vecc{x}^m_0 = \vecc{x}_0$ is a random variable independent of $m$ and has finite moments of all orders,  i.e. $\mathbb{E} |\vecc{x}|^{p} < \infty$ for all $p > 0$.   The initial condition $\vecc{v}^m_0$ is a  random variable that possibly depends on $m$ and we assume that for every $p>0$, $\mathbb{E} |m \vecc{v}^m|^p = O(m^\alpha)$ as $m \to 0$, for some $\alpha \geq p/2$. 
\end{ass}

\begin{ass} \label{a4} The global solutions, defined on $[0,T]$, to the pre-limit SDEs \eqref{gsk1}-\eqref{gsk2} and to the limiting SDE \eqref{sklim} a.s. exist and are unique for all $m > 0$ (i.e. there are no explosions).  
\end{ass}

%The convergence of the solution, $\vecc{x}^m_{t}$, of the pre-limit equation to the solution, $\vecc{X}_{t}$, of the limiting equation will be obtained in the $L^{p}$ norm, with respect to the topology on the space of continuous paths on the compact time interval $[0,T]$ with the uniform metric, i.e. we show that, as $m$ tends to zero, 
%\begin{equation}\mathbb{E}  \sup_{t \in [0,T]} |\vecc{x}%^m_{t}-\vecc{X}_{t}|^{p}  \to 0,\end{equation} for $p>0$. 

We now state the result.

\begin{theorem} \label{skthm} Suppose that the SDE system $\eqref{gsk1}$-$\eqref{gsk2}$ satisfies Assumption $\ref{a1}$-$\ref{a4}$. Let $(\vecc{x}^m_{t}, \vecc{v}^m_{t}) \in \RR^{n} \times \RR^{n}$ be its solution, with the initial condition $(\vecc{x}, \vecc{v}^m)$. Let $\vecc{X}_{t} \in \RR^{n}$ be the solution to the following It\^o SDE with initial position $\vecc{X}_{0} = \vecc{x}$: 
\begin{equation} \label{sklim}
d\vecc{X}_{t} = [\boldsymbol{\gamma}^{-1}(\vecc{X}_{t}) \vecc{F}(\vecc{X}_{t}) + \vecc{S}(\vecc{X}_{t})] dt + \boldsymbol{\gamma}^{-1}(\vecc{X}_{t}) \boldsymbol{\sigma}(\vecc{X}_{t}) d \vecc{W}^{(k)}_{t}, 
\end{equation}
where $\vecc{S}(\vecc{X}_{t})$ is the noise-induced drift whose $i$th component is given by 
\begin{equation}
S_{i}(\vecc{X}) = \frac{\partial}{\partial X_{l}}[(\gamma^{-1})_{ij}(\vecc{X})] J_{jl}(\vecc{X}), \ \ i,j,l = 1, \dots, n, \end{equation}
and $\vecc{J}$ is the unique matrix solving the Lyapunov equation 
\begin{equation} \label{lyp}
\vecc{J} \vecc{\gamma}^{*} + \vecc{\gamma} \vecc{J} = \vecc{\sigma} \vecc{\sigma}^{*}. 
\end{equation}
Then the process $\vecc{x}^m_{t}$ converges, as $m \to 0$, to the solution $\vecc{X}_{t}$, of the It\^o SDE \eqref{sklim}, in the following sense:  for all finite $T>0$, 
\begin{equation}
\sup_{t \in [0,T]} |\vecc{x}_t^m - \vecc{X}_t| \to 0
\end{equation} in probability, in the limit as $m \to 0$.  

\end{theorem}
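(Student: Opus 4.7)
The plan is to follow the structure of the proofs in \cite{hottovy2015smoluchowski,birrell2017homogenization}, upgrading the a priori estimates so that they use only the uniform positive stability of $\boldsymbol{\gamma}$ (Assumption \ref{a2}) rather than positive definiteness of its symmetric part. The identification of the limit in those papers invokes the spectral hypothesis solely through energy estimates on $\vecc{v}^m_t$, so it is enough to reprove those estimates in the present setting.

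The new ingredient is a state-dependent Lyapunov weight. For each $\vecc{x}\in\RR^n$ I would set
$$\vecc{P}(\vecc{x}) = \int_0^\infty e^{-\boldsymbol{\gamma}(\vecc{x})^{*} s}\,e^{-\boldsymbol{\gamma}(\vecc{x}) s}\, ds.$$
Assumption \ref{a2} yields $\|e^{-\boldsymbol{\gamma}(\vecc{x})s}\| \le C e^{-\kappa s}$ uniformly in $\vecc{x}$, so this integral converges to a uniformly positive definite matrix. Differentiating under the integral, $\vecc{P}$ inherits $C^{2}$-smoothness and uniform boundedness of its first two derivatives from Assumption \ref{a1}, and one checks directly that $\vecc{P}(\vecc{x})\boldsymbol{\gamma}(\vecc{x}) + \boldsymbol{\gamma}(\vecc{x})^{*}\vecc{P}(\vecc{x}) = \vecc{I}$ pointwise, so that $\boldsymbol{\gamma}$ is dissipative with respect to the weighted inner product defined by $\vecc{P}(\vecc{x})$.

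I would then apply It\^o's formula to $V^m_t := (\vecc{v}^m_t)^{*}\vecc{P}(\vecc{x}^m_t)\vecc{v}^m_t$. The Lyapunov identity produces the leading dissipative term $-\tfrac{1}{m}|\vecc{v}^m_t|^{2}\,dt$; the contributions from $\vecc{F}$ and $\boldsymbol{\sigma}\boldsymbol{\sigma}^{*}$ are bounded by $O(1/m)$ times lower powers of $|\vecc{v}^m_t|$ and absorbed using Young's inequality with a small fraction of the dissipation; the cross term from $\nabla_{\vecc{x}}\vecc{P}$ is cubic in $\vecc{v}^m_t$ and is controlled either by a stopping-time argument restricting to the event $\{|\vecc{v}^m_t|\le C/\sqrt{m}\}$ (outside which Gaussian tails of $\vecc{v}^m$ render the contribution negligible) or by an iterative bootstrap in which a crude polynomial bound on $\mathbb{E}|\vecc{v}^m_t|^{p}$ is used to tame the cubic term and then sharpened. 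The outcome is the same bound as in the stronger spectral setting, namely $\mathbb{E}\sup_{t\in[0,T]}|m\vecc{v}^m_t|^{p} = O(m^{p/2})$ for every $p>0$, together with the corresponding higher-moment estimates for $\vecc{x}^m$.

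With these a priori bounds in hand, the identification of the limit goes through as in \cite{birrell2017homogenization}. Assumption \ref{a2} in particular implies $\boldsymbol{\gamma}(\vecc{x})$ is invertible everywhere, so solving \eqref{gsk2} for $\boldsymbol{\gamma}\vecc{v}^m\,dt$, inverting $\boldsymbol{\gamma}$, and integrating in time gives
$$\vecc{x}^m_t - \vecc{x} = \int_0^t \boldsymbol{\gamma}^{-1}(\vecc{x}^m_s)\vecc{F}(\vecc{x}^m_s)\,ds + \int_0^t \boldsymbol{\gamma}^{-1}(\vecc{x}^m_s)\boldsymbol{\sigma}(\vecc{x}^m_s)\,d\vecc{W}^{(k)}_s - \int_0^t \boldsymbol{\gamma}^{-1}(\vecc{x}^m_s)\,d(m\vecc{v}^m_s).$$
It\^o integration by parts on the last term transfers a derivative onto $\boldsymbol{\gamma}^{-1}$ and generates a quadratic-covariation integral; the averaging identity $\mathbb{E}[\vecc{v}^m_s(\vecc{v}^m_s)^{*}\mid \vecc{x}^m_s]\to \vecc{J}(\vecc{x}^m_s)/m$, with $\vecc{J}$ the unique solution of \eqref{lyp} (well posed because positive stability is exactly the solvability condition for the Lyapunov equation \cite{bellman1997introduction}), produces the noise-induced drift $\vecc{S}$. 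Boundary terms in $m\vecc{v}^m_0$ and $m\vecc{v}^m_t$ vanish in probability by Assumption \ref{a3} and the energy estimate. Convergence in probability of $\sup_{t\in[0,T]}|\vecc{x}^m_t-\vecc{X}_t|$ then follows from a standard Gronwall-and-Burkholder-Davis-Gundy comparison between $\vecc{x}^m_t$ and $\vecc{X}_t$.

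The main obstacle I anticipate is the cubic $O(|\vecc{v}^m_t|^{3})$ cross term introduced by the state dependence of $\vecc{P}$; a naive Young's inequality is not sufficient, and one must combine it with a localization/moment-bootstrap or alternatively with a Duhamel-type representation obtained by freezing $\boldsymbol{\gamma}$ on short time intervals. Once this quantitative piece is in place, every downstream step in \cite{hottovy2015smoluchowski,birrell2017homogenization} transfers essentially unchanged, since those steps invoke the spectral assumption only through the a priori estimates on $\vecc{v}^m$.
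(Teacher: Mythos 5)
Your overall architecture matches the paper's (integrate by parts to transfer the derivative onto $\boldsymbol{\gamma}^{-1}$, express $m\vecc{v}^m_s(\vecc{v}^m_s)^*$ via a Lyapunov equation, close with Gronwall/BDG), but your route to the key momentum estimate is genuinely different from the one the paper takes. The paper does \emph{not} introduce a state-dependent Lyapunov weight $\vecc{P}(\vecc{x})$. Instead it writes a Duhamel representation for $\vecc{p}^m_t = m\vecc{v}^m_t$ by freezing the damping matrix at the current time (that is, using $e^{-\vecc{\gamma}(\vecc{x}^m_t)(t-s)/m}$), bounds the matrix exponential using positive stability via the Kabanov--Pergamenshchikov inequality, and treats the commutator/remainder term $\|\vecc{\gamma}(\vecc{x}^m_t)-\vecc{\gamma}(\vecc{x}^m_s)\|$ as a small perturbation that can be absorbed once $m$ is below a \emph{random} threshold $m_1(\omega)$. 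This is exactly the ``Duhamel-type representation obtained by freezing $\boldsymbol{\gamma}$'' that you mention as a fallback in your final paragraph; in the paper it is the main argument, not the alternative.

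The place where your proposal overreaches is the claimed outcome. You assert that after handling the cubic cross term from $\nabla_{\vecc{x}}\vecc{P}$ you recover the unconditional bound $\mathbb{E}\sup_{t\in[0,T]}|m\vecc{v}^m_t|^p = O(m^{p/2})$, ``the same bound as in the stronger spectral setting.'' The paper explicitly states (in a closing remark of the appendix) that it does \emph{not} know how to obtain such an unconditional estimate under the weakened spectral assumption; it only proves $\mathbb{E}\bigl[\sup_{t\in[0,T]}|\vecc{p}^m_t|^p;\ m\le m_1\bigr]\to 0$, and this conditional estimate is precisely why Theorem~\ref{skthm} is stated in terms of convergence in probability rather than $L^p$ (via the auxiliary Lemma on the random threshold $m^*$). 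Your cubic term has no small prefactor in $m$: it can be absorbed into the dissipation $-\tfrac{1}{m}|\vecc{v}^m_t|^2\,dt$ only on an event of the form $\{|\vecc{v}^m_t|\lesssim 1/m\}$ or $\{|\vecc{v}^m_t|\lesssim C/\sqrt{m}\}$. To dismiss the complement you invoke ``Gaussian tails of $\vecc{v}^m$,'' but for fixed $m$ the law of $\vecc{v}^m_t$ is \emph{not} Gaussian, and any tail bound would require a priori moment estimates of the very kind you are trying to establish --- the argument is circular unless you supply an independent crude bound (e.g.\ the paper's freezing argument, which is itself only conditional). So the stopping-time/bootstrap route should be expected to yield only a localized bound, which suffices for convergence in probability (matching the theorem's actual claim) but not the $O(m^{p/2})$ $L^p$-estimate you wrote.

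The rest of the proposal is sound: the identity $\vecc{P}\boldsymbol{\gamma}+\boldsymbol{\gamma}^*\vecc{P}=\vecc{I}$ is correct, $\vecc{P}$ is indeed $C^2$ with bounded derivatives under Assumption~\ref{a1}--\ref{a2} (one differentiates $e^{-\boldsymbol{\gamma}(\vecc{x})s}$ under the integral and uses the uniform $Ce^{-\kappa s}$ bound), positive stability is exactly the condition under which the Lyapunov equation \eqref{lyp} has a unique solution, and the boundary terms $m\vecc{v}^m_0$, $m\vecc{v}^m_t$ are handled correctly via Assumption~\ref{a3}. If you replace the claimed unconditional estimate with the weaker conditional one (restricted to $\{m\le m_1\}$ for a random $m_1$) and then note that this yields convergence in probability by the standard argument, the proof becomes correct and is a legitimate alternative to the paper's freezing technique, essentially trading the paper's time-freezing Duhamel formula for a weighted energy functional.
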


%\begin{remark}
%Consider the matrix equation $\vecc{G} \vecc{\gamma}^{*} + \vecc{\gamma} \vecc{G} = \vecc{H}$, where $\vecc{G} \in \RR^{n \times n}$ is positive definite and $\vecc{H} \in \RR^{n \times n}$ is positive semidefinite. Then $\vecc{\gamma}$ (and also $\vecc{\gamma}^{*}$) is positive stable if and only if the pair $(\vecc{\gamma}, \vecc{H})$ is controllable, i.e. the rank of the block matrix $[\vecc{H} \ \ \vecc{\gamma} \vecc{H} \ \ \vecc{\gamma}^2 \vecc{H} \ \dots \ \vecc{\gamma}^{n-1}\vecc{H}]$ is $n$ (\cite{horn1994topics}, page 105). Moreover, as $\vecc{\gamma}$ is positive stable, the equation has a unique solution for every $\vecc{H}$.
%\end{remark}

We end this section with a few remarks concerning the statements in Theorem \ref{skthm}.

%remark on weakened convergence?
\begin{remark} \label{ass_rmk} 
\hspace{5cm}
\begin{itemize} 
\item[(i)] Because of the relaxed spectral assumption on $\vecc{\gamma}$, a new idea has to be used to prove decay estimates for solutions of the velocity equation.  Once this is done, Theorem 1 can be proven using the technique of \cite{birrell2017homogenization} (note that Assumption \ref{a1} is essentially the same as the assumptions in Appendix A of \cite{birrell2017homogenization}, specialized to the present case).  In Appendix \ref{proof_sketch} we give a sketch of the proof of Theorem \ref{skthm}, pointing out the necessary modifications.  The reader is referred to \cite{birrell2017homogenization} for more details.
%Because of the relaxed spectral assumption on $\vecc{\gamma}$, a new idea has to be used, in addition to the techniques used in \cite{birrell2017homogenization}. Note that the assumptions in Assumption \ref{a1} are essentially those in Appendix A of \cite{birrell2017homogenization} specialized to our case. We emphasize that Theorem \ref{skthm} can be proven using identical techniques used in \cite{birrell2017homogenization}, once the relaxed assumption on $\vecc{\gamma}$ is taken care of. For completeness, we will provide the essential ingredients of the proof of Theorem \ref{skthm} and point out the needed modifications along the way in Appendix \ref{proof_sketch}. The readers are referred to \cite{birrell2017homogenization} for more technical details. 
\item[(ii)] Our assumption on the initial variable $\vecc{v}_0^m$ implies that the  initial average kinetic energy, $K(\vecc{v}^m) := \mathbb{E}  \frac{1}{2} m |\vecc{v}^m|^2$, does not blow up (but can possibly vanish) as $m \to 0$. This is analogous to the  Assumption 2.4 in \cite{birrell2017homogenization} and it is  more general than the one in \cite{hottovy2015smoluchowski}.
\item[(iii)] With slightly more work and additional assumptions, one could prove the statement in Assumption \ref{a4} from Assumptions \ref{a1}-\ref{a3}. However, such existence and uniqueness result is not the focus of this paper and, therefore, we choose to take the existence and uniqueness result for granted in Assumption \ref{a4}.
\item[(iv)] We make no claim that Assumptions \ref{a2}-\ref{a4} are as weak or as general as possible.  In particular, the boundedness assumption on the coefficients of the SDEs could be relaxed (for instance, using the techniques in \cite{Herzog2016}) and the initial condition $\vecc{x}$ could have some dependence on $m$ (see, for instance, \cite{birrell2017homogenization}) at the cost of more technicalities. 

The main focus of our revisit here is to point out that the result in \cite{hottovy2015smoluchowski} still holds with a relaxed spectral assumption on the matrix $\vecc{\gamma}$ and with the initial condition $\vecc{v}_0^m$ possibly dependent on $m$ -- this will be important for applications in later sections (see also Remark \ref{imp_rmk}).  
\end{itemize}
\end{remark}

%==============================
\section{Homogenization for Generalized Langevin Dynamics} \label{general_homog}

In this section, we study homogenization for the system of equations \eqref{general_side_rescaled}-\eqref{general_rescaled_ou}  (with $\mu = \theta = \nu$) by taking the limit as $\epsilon \to 0$, under appropriate assumptions.

Without loss of generality, we set $\mu = \theta = \nu = 1$. We cast \eqref{general_side_rescaled}-\eqref{general_rescaled_ou} as the system of SDEs for the Markov process $(\vecc{x}^\epsilon_{t}, \vecc{v}^\epsilon_{t}, \vecc{z}^\epsilon_{t}, \vecc{y}^\epsilon_{t}, \vecc{\zeta}^\epsilon_{t}, \vecc{\beta}^\epsilon_{t})$ on the state space $\RR^{d}\times \RR^d \times \RR^{d_1} \times \RR^{d_1} \times \RR^{d_2} \times \RR^{d_2}$:
\begin{align}
d\vecc{x}^\epsilon_{t} &= \vecc{v}^\epsilon_{t} dt, \label{sdec1} \\ 
m_{0} \epsilon  d\vecc{v}^\epsilon_{t} &=  - \vecc{g}(\vecc{x}^\epsilon_{t}) \vecc{C}_1 \vecc{y}^\epsilon_{t} dt + \vecc{\sigma}(\vecc{x}^\epsilon_{t}) \vecc{C}_2 \vecc{\beta}^\epsilon_{t} dt +\vecc{F}(\vecc{x}^\epsilon_{t})dt, \\ 
d \vecc{z}^\epsilon_{t} &= \vecc{y}^\epsilon_{t} dt, \\ 
\tau_{\kappa} \epsilon  d\vecc{y}^\epsilon_{t} &= -\vecc{\Gamma}_1 \vecc{y}^\epsilon_{t} dt + \vecc{M}_1 \vecc{C}_1^* \vecc{h}(\vecc{x}^\epsilon_{t}) \vecc{v}^\epsilon_{t} dt, \\ 
d\vecc{\zeta}^\epsilon_{t} &= \vecc{\beta}^\epsilon_{t} dt, \\ 
\tau_{\xi} \epsilon d\vecc{\beta}^\epsilon_{t} &= -\vecc{\Gamma}_2 \vecc{\beta}^\epsilon_{t} dt +  \vecc{\Sigma}_2 d\vecc{W}^{(q_2)}_{t}, \label{sdec6}
\end{align}
where we have defined the auxiliary process 
\begin{equation}
\vecc{y}^\epsilon_{t} := \frac{1}{\tau_{\kappa} \epsilon} \int_{0}^{t} e^{-\frac{\vecc{\Gamma}_1}{\tau_{\kappa} \epsilon}(t-s)} \vecc{M}_1 \vecc{C}_1^* \vecc{h}(\vecc{x}^\epsilon_{s}) \vecc{v}^\epsilon_{s} ds.
\end{equation}
Here, the initial conditions $\vecc{x}^\epsilon_0 = \vecc{x}$, $\vecc{v}^\epsilon_0 = \vecc{v}$, $\vecc{z}^\epsilon_0 = \vecc{z}$ and $\vecc{\zeta}^\epsilon_0 = \vecc{\zeta}$ are random variables.  Note that $\vecc{y}^\epsilon_0 = \vecc{0}$, and $\vecc{\beta}^\epsilon_0$ is a zero mean Gaussian random variable with covariance $\vecc{M}_2/\tau_\xi \epsilon$. 

Let $\vecc{W}^{(q_2)}_{t}$ be an $\RR^{q_2}$-valued Wiener process on the filtered probability space $(\Omega, \mathcal{F}, \mathcal{F}_{t}, \mathbb{P})$ satisfying the usual conditions \cite{karatzas2012Brownian} and $\mathbb{E}$ denotes expectation with respect to $\mathbb{P}$. 
%Here $\mathcal{F}_{t}$ is the usual augmentation of $\sigma(\{\vecc{W}^{(q_2)}_{s} : s \leq t\})$, the $\sigma$-algebra generated by $\vecc{W}^{(q_2)}$ up to time $t$. 

%Throughout the paper, we denote the position component of the pre-limit equation by small letter $\vecc{x}_{t}$ and that of the limiting equation by a capital letter $\vecc{X}_{t}$.  

%Here and in the following, we use Einstein summation convention. The Euclidean norm of an arbitrary vector $\vecc{w}$ is denoted by $| \vecc{w} |$ and the (induced operator) norm of a matrix $\vecc{A}$ by $\| \vecc{A} \|$. 

We use the notation introduced in Section \ref{skrevisited} and make the following assumptions.

%\begin{ass} \label{ass1} For every $\vecc{x} \in \RR^{d}$, the force field $\vecc{F}(\vecc{x})$ and the matrix-valued functions $\vecc{g}(\vecc{x})$, $\vecc{h}(\vecc{x})$, $\vecc{\sigma}(\vecc{x})$  are real-valued, bounded, Lipschitz and continuously differentiable functions of $\vecc{x}$.  Also, we assume that $(\vecc{g})_{\vecc{x}}(\vecc{x})$, $(\vecc{h})_{\vecc{x}}(\vecc{x})$ and $(\vecc{\sigma})_{\vecc{x}}(\vecc{x})$ are bounded and Lipschitz (in $\vecc{x}$), and, moreover,   $(\vecc{g})_{\vecc{x}\vecc{x}}(\vecc{x})$, $(\vecc{h})_{\vecc{x}\vecc{x}}(\vecc{x})$ and $(\vecc{\sigma})_{\vecc{x}\vecc{x}}(\vecc{x})$ are bounded for every $\vecc{x} \in \RR^d$.  \end{ass}

\begin{ass} \label{ass1} For every $\vecc{x} \in \RR^{d}$, the vector-valued function $\vecc{F}(\vecc{x})$ is continuous, bounded and Lipschitz in $\vecc{x}$, whereas the matrix-valued functions $\vecc{g}(\vecc{x})$, $\vecc{h}(\vecc{x})$, $\vecc{\sigma}(\vecc{x})$, $(\vecc{g})_{\vecc{x}}(\vecc{x})$, $(\vecc{h})_{\vecc{x}}(\vecc{x})$ and $(\vecc{\sigma})_{\vecc{x}}(\vecc{x})$   are continuously differentiable, bounded and Lipschitz in $\vecc{x}$.  Moreover,   $(\vecc{g})_{\vecc{x}\vecc{x}}(\vecc{x})$, $(\vecc{h})_{\vecc{x}\vecc{x}}(\vecc{x})$ and $(\vecc{\sigma})_{\vecc{x}\vecc{x}}(\vecc{x})$ are bounded for every $\vecc{x} \in \RR^d$. 
\end{ass}

\begin{ass} \label{ass2} The initial conditions $\vecc{x}$, $\vecc{v}$, $\vecc{z}$, $\vecc{\zeta}$ are random variables  independent of $\epsilon$.   We assume that they have finite moments of all orders, i.e. $\mathbb{E}|\vecc{x}|^{p}, \ \mathbb{E}|\vecc{v}|^{p}, \  \mathbb{E}|\vecc{z}|^p, \ \mathbb{E}|\vecc{\zeta}|^p < \infty$ for all $p>0$. 
\end{ass}

\begin{ass} \label{ass4} There are no explosions, i.e. almost surely, for any value of the parameter $\epsilon$ there exists a unique solution on the compact time interval $[0,T]$  to the pre-limit equations \eqref{general_side_rescaled}-\eqref{general_rescaled_ou}, and also to the limiting equation \eqref{general_limitSDE}. 
\end{ass}

%Throughout the paper, the generic symbols, such as $C$ and $T$, do not need to have the same values in all equations and inequalities. 

The following convergence theorem is the main result of this paper. It provides a homogenized SDE for the particle's position in the limit as the inertial time scale, the memory time scale and the noise correlation time scale go to zero at the same rate in the case when the pre-limit dynamics are described by the family of equations \eqref{general_side_rescaled}-\eqref{general_rescaled_ou} (with $\mu = \theta = \nu = 1$), or equivalently by the SDEs \eqref{sdec1}-\eqref{sdec6}.
%, with the memory kernel and the driving noise process defined in terms of the triple $(\vecc{\Gamma}_1, \vecc{M}_1, \vecc{C}_1)$ and $(\vecc{\Gamma}_2, \vecc{M}_2, \vecc{C}_2)$ respectively. 
In the following, $(\vecc{D})_{ij}$ denotes the $(i,j)$-entry of the matrix $\vecc{D}$.

\begin{theorem} \label{general_result}
Let $\vecc{x}^\epsilon_{t} \in \RR^{d}$ be the solution to the SDEs \eqref{sdec1}-\eqref{sdec6}. Suppose that Assumptions \ref{ass1}-\ref{ass4} are satisfied and the effective damping and diffusion (constant) matrices, $\vecc{K}_1$, $\vecc{K}_2$, defined in \eqref{eff_damping} and \eqref{eff_diff} respectively, are invertible. Moreover, we assume that for every $\vecc{x} \in \RR^d$,  
\begin{equation} \label{inv_cond}
\vecc{B}_\lambda(\vecc{x}) := \vecc{I} + \vecc{g}(\vecc{x}) \tilde{\vecc{\kappa}}(\lambda \tau_{\kappa}) \vecc{h}(\vecc{x})/\lambda m_0
\end{equation}
is invertible for all $\lambda$ in the right half plane $\{\lambda \in \CC: Re(\lambda)>0\}$, where $\tilde{\vecc{\kappa}}(z) := \vecc{C}_1(z\vecc{I} + \vecc{\Gamma}_1)^{-1}\vecc{M}_1 \vecc{C}_1^*$, for $z \in \CC$, is the Laplace transform of the memory function. 

Denote $\vecc{\theta}(\vecc{X}) := \vecc{g}(\vecc{X})\vecc{K}_1 \vecc{h}(\vecc{X}) \in \RR^{d \times d}$ for $\vecc{X} \in \RR^d$. Then as $\epsilon \to 0$, the process $\vecc{x}^\epsilon_{t}$ converges to the solution, $\vecc{X}_{t}$, of the following It\^o SDE:
\begin{equation} \label{general_limitSDE}
d\vecc{X}_{t} = \vecc{S}(\vecc{X}_{t}) dt + \vecc{\theta}^{-1}(\vecc{X}_t) \vecc{F}(\vecc{X}_{t}) dt + \vecc{\theta}^{-1}(\vecc{X}_t)  \vecc{\sigma}(\vecc{X}_{t}) \vecc{C}_2 \vecc{\Gamma}_2^{-1} \vecc{\Sigma}_2 d\vecc{W}^{(q_2)}_{t},
\end{equation}
with $\vecc{S}(\vecc{X}_{t}) = \vecc{S}^{(1)}(\vecc{X}_{t}) + \vecc{S}^{(2)}(\vecc{X}_{t}) + \vecc{S}^{(3)}(\vecc{X}_{t}),$ where the $\vecc{S}^{(k)}$ are the noise-induced drifts whose $i$th components are given by 
\begin{align}
S^{(1)}_{i} &= m_{0} \frac{\partial}{\partial X_{l}}\left[(\vecc{\theta}^{-1})_{ij}(\vecc{X})\right] (\vecc{J}_{11})_{jl}(\vecc{X}), \ \
i,j,l = 1, \dots, d, \label{gen_nid1} \\ 
S^{(2)}_{i} &= -\tau_{\kappa} \frac{\partial}{\partial X_{l}}\left[(\vecc{\theta}^{-1} \vecc{g})_{ij}(\vecc{X})\right] (\vecc{C}_1 \vecc{\Gamma}_1^{-1} \vecc{J}_{21})_{jl}(\vecc{X}), \ \ i,l = 1, \dots, d; \ j = 1,\dots,q, \label{gen_nid2} \\ 
S^{(3)}_{i} &= \tau_{\xi} \frac{\partial}{\partial X_{l}}\left[(\vecc{\theta}^{-1}\vecc{\sigma} )_{ij}(\vecc{X}) \right] (\vecc{C}_2 \vecc{\Gamma}_2^{-1} \vecc{J}_{31})_{jl}(\vecc{X}), \ \ i,l = 1, \dots, d; \ j=1,\dots,r. \label{gen_nid3}
\end{align} 
Here 
%{equation}\vecc{\hat{J}} := \begin{bmatrix}
%    \vecc{J}^{11}       & \vecc{J}^{12} & \vecc{J}^{13} \\
%    \vecc{J}^{21}       & \vecc{J}^{22} & \vecc{J}^{23} \\
 %   \vecc{J}^{31}       & \vecc{J}^{32} & \vecc{J}^{33} 
%\end{bmatrix} \in \RR^{3d \times 3d},  \ \text{ with } 
$\vecc{J}_{11} = \vecc{J}_{11}^* \in \RR^{d\times d}$, $\vecc{J}_{21}=\vecc{J}_{12}^* \in \RR^{d_1 \times d}$ and $\vecc{J}_{31} = \vecc{J}_{13}^* \in \RR^{d_2 \times d}$  satisfy the following system of five matrix equations: 
\begin{align}
\vecc{g} \vecc{C}_1 \vecc{J}_{12}^* + \vecc{J}_{12} \vecc{C}_1^* \vecc{g}^* &= \vecc{\sigma} \vecc{C}_2 \vecc{J}_{13}^* + \vecc{J}_{13} \vecc{C}_2^* \vecc{\sigma}^*, \label{gen_system} \\
m_0 \vecc{J}_{11} \vecc{h}^* \vecc{C}_1 \vecc{M}_1 + \tau_{\kappa} \vecc{\sigma} \vecc{C}_2 \vecc{J}_{23}^* &= \tau_{\kappa} \vecc{g} \vecc{C}_1 \vecc{J}_{22} + m_0 \vecc{J}_{12} \vecc{\Gamma}_1^*,\\
\tau_{\xi} \vecc{g} \vecc{C}_1 \vecc{J}_{23} + m_0 \vecc{J}_{13} \vecc{\Gamma}_2^* &= \vecc{\sigma} \vecc{C}_2 \vecc{M}_2,  \\ 
\vecc{M}_1 \vecc{C}_1^* \vecc{h} \vecc{J}_{12} + \vecc{J}_{12}^* \vecc{h}^* \vecc{C}_1 \vecc{M}_1 &= \vecc{\Gamma}_1 \vecc{J}_{22} + \vecc{J}_{22} \vecc{\Gamma}_1^*, \\
\tau_{\xi} \vecc{M}_1 \vecc{C}_1^* \vecc{h} \vecc{J}_{13} &= \tau_{\xi} \vecc{\Gamma}_1 \vecc{J}_{23} + \tau_{\kappa} \vecc{J}_{23} \vecc{\Gamma}_2^*. 
. \label{gen_end} 
\end{align}
The convergence is obtained in the following sense: for all finite $T>0$, 
\begin{equation}
\sup_{t \in [0,T]}|\vecc{x}_t^\epsilon - \vecc{X}_t| \to 0
\end{equation}
in probability, in the limit as $\epsilon \to 0$.
\end{theorem}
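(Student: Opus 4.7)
The plan is to recast the Markov system \eqref{sdec1}-\eqref{sdec6} so that Theorem \ref{skthm} applies. Stack the slow variables into $\vecc{Q}^\epsilon := (\vecc{x}^\epsilon, \vecc{z}^\epsilon, \vecc{\zeta}^\epsilon)^*$ and the fast variables into $\vecc{V}^\epsilon := (\vecc{v}^\epsilon, \vecc{y}^\epsilon, \vecc{\beta}^\epsilon)^*$, so that $d\vecc{Q}^\epsilon = \vecc{V}^\epsilon dt$. Pre-multiplying the three fast equations by $\vecc{D} := \mathrm{diag}(\vecc{I}_d/m_0, \vecc{I}_{d_1}/\tau_\kappa, \vecc{I}_{d_2}/\tau_\xi)$ yields
\begin{equation}
\epsilon \, d\vecc{V}^\epsilon_t = \bigl(-\vecc{\gamma}(\vecc{x}^\epsilon_t)\vecc{V}^\epsilon_t + \vecc{F}_*(\vecc{x}^\epsilon_t)\bigr) dt + \vecc{\sigma}_*(\vecc{x}^\epsilon_t) \, d\vecc{W}^{(q_2)}_t,
\end{equation}
with
\begin{equation}
\vecc{\gamma}(\vecc{x}) := \vecc{D} \begin{pmatrix} \vecc{0} & \vecc{g}(\vecc{x})\vecc{C}_1 & -\vecc{\sigma}(\vecc{x})\vecc{C}_2 \\ -\vecc{M}_1\vecc{C}_1^*\vecc{h}(\vecc{x}) & \vecc{\Gamma}_1 & \vecc{0} \\ \vecc{0} & \vecc{0} & \vecc{\Gamma}_2 \end{pmatrix},
\end{equation}
$\vecc{F}_*(\vecc{x}) = (\vecc{F}(\vecc{x})/m_0, \vecc{0}, \vecc{0})^*$, and $\vecc{\sigma}_*$ having $\vecc{\Sigma}_2/\tau_\xi$ as its only nonzero (bottom) block. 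This matches \eqref{gsk1}-\eqref{gsk2} under $m \leftrightarrow \epsilon$, with ``position'' $\vecc{Q}^\epsilon$ and ``velocity'' $\vecc{V}^\epsilon$.

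The critical hypothesis to verify is uniform positive stability of $\vecc{\gamma}(\vecc{x})$ (Assumption \ref{a2}). The third block-row decouples through the already positive-stable block $\vecc{\Gamma}_2/\tau_\xi$, so it suffices to analyze the upper-left $(d+d_1)\times(d+d_1)$ block. A Schur complement calculation on $\det(\lambda\vecc{I} - \vecc{\gamma}(\vecc{x}))$, together with the identity $\vecc{C}_1(\lambda\vecc{I} - \vecc{\Gamma}_1/\tau_\kappa)^{-1}\vecc{M}_1\vecc{C}_1^* = -\tau_\kappa\tilde{\vecc{\kappa}}(-\lambda\tau_\kappa)$, shows that the characteristic polynomial is, up to a factor nonvanishing on the closed left half plane, proportional to $(-\lambda)^d \det(\vecc{B}_{-\lambda}(\vecc{x}))$. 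The hypothesis \eqref{inv_cond} then precludes eigenvalues in the open left half plane; the boundary behavior $\vecc{B}_\lambda \to \vecc{I}$ as $|\lambda|\to\infty$ and $\lambda \vecc{B}_\lambda \to \vecc{\theta}/m_0$ as $\lambda \to 0$ excludes the imaginary axis; and boundedness from Assumption \ref{ass1} upgrades the argument to a \emph{uniform} spectral gap. The remaining prerequisites of Theorem \ref{skthm} are immediate: Assumption \ref{a1} is inherited from Assumption \ref{ass1} (coefficients depend only on $\vecc{x}$), and Assumption \ref{a3} holds because $\vecc{V}_0^\epsilon = (\vecc{v}, \vecc{0}, \vecc{\beta}_0^\epsilon)^*$ with $\vecc{\beta}_0^\epsilon$ of variance $\vecc{M}_2/(\tau_\xi\epsilon)$ gives $\mathbb{E}|\epsilon\vecc{V}_0^\epsilon|^p = O(\epsilon^{p/2})$, matching $\alpha = p/2$.

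Applying Theorem \ref{skthm} gives convergence of $\vecc{Q}^\epsilon_t$, with the $\vecc{X}$-equation extracted by block-wise Schur inversion of $\vecc{\Gamma}$: using $\vecc{K}_1 = \vecc{C}_1\vecc{\Gamma}_1^{-1}\vecc{M}_1\vecc{C}_1^*$, the $(1,1)$ block of $\vecc{\Gamma}^{-1}$ equals $\vecc{\theta}^{-1}$ and its $(1,3)$ block equals $\vecc{\theta}^{-1}\vecc{\sigma}\vecc{C}_2\vecc{\Gamma}_2^{-1}$, producing the drift $\vecc{\theta}^{-1}\vecc{F}$ and diffusion $\vecc{\theta}^{-1}\vecc{\sigma}\vecc{C}_2\vecc{\Gamma}_2^{-1}\vecc{\Sigma}_2$ in \eqref{general_limitSDE}. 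For the noise-induced drift, write $\vecc{J}$ in $3\times 3$ block form $(\vecc{J}_{\alpha\beta})$ with $\vecc{J}_{\beta\alpha} = \vecc{J}_{\alpha\beta}^*$; the Lyapunov equation \eqref{lyp} decomposes into block equations which, after restoring the factors $m_0, \tau_\kappa, \tau_\xi$ from $\vecc{D}$ and discarding the $(3,3)$ piece (which does not enter the $\vecc{X}$-drift), reproduce \eqref{gen_system}-\eqref{gen_end}. The general drift $\partial_{X_l}(\vecc{\gamma}^{-1})_{ij}J_{jl}$, restricted to $i$ in the $\vecc{x}$-block, splits into three contributions according to which block-column of $\vecc{J}$ the index $j$ lies in ($\vecc{v}$, $\vecc{y}$, or $\vecc{\beta}$), giving precisely $\vecc{S}^{(1)}, \vecc{S}^{(2)}, \vecc{S}^{(3)}$ in \eqref{gen_nid1}-\eqref{gen_nid3}.

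The main obstacle is the spectral step: passing from the pointwise invertibility of $\vecc{B}_\lambda$ on the open right half plane to a \emph{uniform} positive spectral gap for $-\vecc{\gamma}(\vecc{x})$. This requires combining the Schur determinant identity with boundary analysis of $\vecc{B}_\lambda$, continuity of eigenvalues in $\vecc{x}$, and boundedness of $\vecc{g}, \vecc{h}$ from Assumption \ref{ass1}. A secondary technical burden is the Schur-complement bookkeeping that identifies the first block-row of $\vecc{\gamma}^{-1}$ with the stated $\vecc{\theta}^{-1}$-type expressions and the clean decomposition of the block Lyapunov equation into the five pieces \eqref{gen_system}-\eqref{gen_end}.
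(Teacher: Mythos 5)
Your proposal follows essentially the same route as the paper: stack $(\vecc{x},\vecc{z},\vecc{\zeta})$ and $(\vecc{v},\vecc{y},\vecc{\beta})$ into extended position and velocity, identify the block matrix $\hat{\vecc{\gamma}}$, check that Assumptions~\ref{a1}--\ref{a4} are inherited (including the same $O(\epsilon^{p/2})$ estimate for $\epsilon\vecc{\beta}^\epsilon_0$), and extract \eqref{general_limitSDE}--\eqref{gen_end} from the block structure of $\hat{\vecc{\gamma}}^{-1}$ and the block-decomposed Lyapunov equation. The only substantive divergence is in verifying the spectral hypothesis: the paper shows directly via the block inversion formula that $\lambda\vecc{I}+\vecc{L}(\vecc{x})$ is invertible whenever $\vecc{A}_\lambda$ and $\vecc{B}_\lambda(\vecc{x})$ are, whereas you argue through a Schur determinant identity $\det(\lambda\vecc{I}+\vecc{L}) \propto \lambda^d\det(\vecc{B}_\lambda)$; these are equivalent in content. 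Be aware that your proposed upgrade to a \emph{uniform} spectral gap is not as automatic as you suggest: hypothesis \eqref{inv_cond} is stated only on the open right half plane, and your boundary check covers $\lambda\to 0$ and $|\lambda|\to\infty$ but not nonzero purely imaginary $\lambda$. The paper itself handles this loosely and defers the issue to Remark~\ref{role_of_gamma}, where uniformity is verified on a case-by-case basis (e.g.\ Corollaries~\ref{model3} and~\ref{model4} establish it directly from commutativity and positivity of $\vecc{g},\vecc{h},\vecc{A}$), so this is a shared subtlety rather than a defect unique to your argument.
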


\begin{remark}
Invertibility of the matrices $\vecc{B}_\lambda(\vecc{x})$ (the assumption \eqref{inv_cond}) is a technical condition which will be used in the proof of the theorem.  We are going to verify it in the special cases and applications discussed later (see Corollary \ref{model3} and Corollary \ref{model4}).  In particular, it follows from the stronger spectral condition, namely that
$\vecc{g}(\vecc{x})\tilde{\vecc{\kappa}}(\mu)\vecc{h}(\vecc{x})$ has no spectrum in the right half plane for any $\mu$ with $Re(\mu) > 0$.   See also Remark \ref{role_of_gamma}.
\end{remark}

\begin{proof}
We denote $\vecc{\hat{x}}^\epsilon_{t} := (\vecc{x}^\epsilon_{t}, \vecc{z}^\epsilon_{t}, \vecc{\zeta}^\epsilon_{t}) \in \RR^{d+d_1+d_2}$ and $\vecc{\hat{v}}^\epsilon_{t} := (\vecc{v}^\epsilon_{t}, \vecc{y}^\epsilon_{t}, \vecc{\eta}^\epsilon_{t}) \in \RR^{d+d_1+d_2} $ and rewrite the above SDE system in the form  $\eqref{gsk1}$-$\eqref{gsk2}$:
\begin{align}
d\vecc{\hat{x}}^{\epsilon}_{t} &= \vecc{\hat{v}}^{\epsilon}_{t} dt \label{gen_nsk1}, \\
\epsilon d\vecc{\hat{v}}^\epsilon_{t} &= - \boldsymbol{\hat{\gamma}}(\vecc{x}^\epsilon_{t}) \hat{\vecc{v}}^\epsilon_{t} dt + \vecc{\hat{F}}(\vecc{x}^\epsilon_{t})dt + \vecc{\hat{\sigma}} d\vecc{W}^{(q_2)}_{t}, \label{gen_nsk2}
\end{align}
with \begin{equation}\boldsymbol{\hat{\gamma}}(\vecc{x}^\epsilon_{t}) = \left[ \begin{array}{ccc}
\vecc{0} & \frac{\vecc{g}(\vecc{x}^\epsilon_{t}) \vecc{C}_1}{m_{0}} & -\frac{\vecc{\sigma}(\vecc{x}^\epsilon_{t}) \vecc{C}_2}{m_{0}}   \\
-\frac{\vecc{M}_1 \vecc{C}_1^* \vecc{h}(\vecc{x}^\epsilon_{t}) }{\tau_{\kappa}}  & \frac{\vecc{\Gamma}_1}{\tau_{\kappa}} & \vecc{0}  \\ 
\vecc{0}  & \vecc{0}  & \frac{\vecc{\Gamma}_2}{\tau_{\xi}} \end{array} \right], \ \ \ \vecc{\hat{F}}(\vecc{x}^\epsilon_{t}) = \begin{bmatrix}
         \frac{\vecc{F}(\vecc{x}^\epsilon_{t})}{m_{0}} \\
         \vecc{0}  \\
         \vecc{0}  \\
        \end{bmatrix}, \ \  \vecc{\hat{\sigma}}= 
         \left[ \begin{array}{c}
   \vecc{0}    \\
 \vecc{0}  \\ 
\frac{\vecc{\Sigma}_2}{\tau_{\xi}}    \end{array} \right], \end{equation}
where $\vecc{\hat{\gamma}} \in \RR^{(d+d_1+d_2) \times (d+d_1+d_2)}$ is a 3 by 3 block matrix with each block a matrix of appropriate dimension; $\vecc{\hat{F}} \in \RR^{d+d_1+d_2}$, $\vecc{\hat{\sigma}} \in \RR^{(d+d_1+d_2)\times q_2}$ and the $\vecc{0}$ appearing in $\vecc{\hat{\gamma}}$, $\vecc{\hat{F}}$ and $\vecc{\hat{\sigma}}$  is a zero vector or matrix of appropriate dimension.

We now want to apply Theorem \ref{skthm} (with $m:=\epsilon$, $n:=d+d_1+d_2$, $\vecc{\gamma}$ replaced by $\vecc{\hat{\gamma}}$, $\vecc{F}$ replaced by $\vecc{\hat{F}}$, $\vecc{\sigma}$ replaced by $\vecc{\hat{\sigma}}$, etc.)  to $\eqref{gen_nsk1}$-$\eqref{gen_nsk2}$. 

It is straightforward to see that Assumption \ref{ass1} implies Assumption \ref{a1} and Assumption \ref{ass4} implies Assumption \ref{a4}. 

To verify Assumption \ref{a3}, note again that $\vecc{y}^\epsilon_0 = \vecc{0}$ and so by Assumption \ref{ass2}, we only need to show that for every $p>0$, $\mathbb{E}|\epsilon \vecc{\beta}^\epsilon_0|^p = O(\epsilon^\alpha)$ as $\epsilon \to 0$, for some constant $\alpha \geq p/2$. To show this, we use the fact that for a mean zero Gaussian random variable, $X \in \RR$, with variance $\sigma^2$,
\begin{equation}
\mathbb{E} |X|^p = \sigma^p \frac{2^{p/2} \Gamma\left(\frac{p+1}{2}\right)}{\sqrt{\pi}},
\end{equation}
for every $p>0$, where $\Gamma$ denotes the gamma function \cite{winkelbauer2012moments}. Applying this to $\vecc{\beta}^\epsilon_0$, we obtain, for every $p>0$, $\mathbb{E} |\vecc{\beta}_0^\epsilon|^p = O(1/\epsilon^{p/2})$ as $\epsilon \to 0$  and so $\mathbb{E}|\epsilon \vecc{\beta}_0^\epsilon|^p = O(\epsilon^{p/2})$ as $\epsilon \to 0$.  Therefore, Assumption \ref{a3} is verified. 

It remains to verify Assumption \ref{a2}, i.e. that $\vecc{\hat{\gamma}}$ is positive stable. Note that $\vecc{\Gamma}_2$ is positive stable by assumption and the triangular-block structure of $\vecc{\hat{\gamma}}$ implies that one only needs to verify that the upper left 2 by 2 block matrix of $\vecc{\hat{\gamma}}$:
\begin{equation}
\vecc{L}(\vecc{x}) = \left[ \begin{array}{cc}
 \vecc{0} & \vecc{g}(\vecc{x}) \vecc{C}_1/m_0   \\
-\vecc{M}_1 \vecc{C}_1^* \vecc{g}(\vecc{x})/\tau_{\kappa}  & \vecc{\Gamma}_1/\tau_{\kappa}  \end{array} \right]\end{equation}
 is positive stable, where $\vecc{x} \in \RR^d$. 

We thus need to show that the resolvent set of $-\vecc{L(\vecc{x})}$,  $\rho(-\vecc{L}(\vecc{x})):=\{\lambda \in \CC : (\lambda \vecc{I} + \vecc{L}(\vecc{x}))^{-1} \text{ exists}\}$, contains the right half plane $\{\lambda \in \CC : Re(\lambda)>0\}$ for every $\vecc{x} \in \RR^d$.
% (see Theorem 2.1.5 in \cite{rhandi2002spectral}). 

Let $\lambda \in \CC$ such that $Re(\lambda) > 0$.  We will use the following formula for blockwise inversion of a block matrix: provided that $\vecc{S}$ and $\vecc{P}-\vecc{Q}\vecc{S}^{-1} \vecc{R}$ are nonsingular, we have
\begin{equation}
\begin{bmatrix}
       \vecc{P} & \vecc{Q}  \\ 
       \vecc{R} & \vecc{S}  \\
     \end{bmatrix}^{-1} = \begin{bmatrix}
       (\vecc{P}-\vecc{Q}\vecc{S}^{-1}\vecc{R})^{-1} & -(\vecc{P}-\vecc{Q}\vecc{S}^{-1}\vecc{R})^{-1} \vecc{Q} \vecc{S}^{-1}  \\ 
       -\vecc{S}^{-1}\vecc{R}(\vecc{P}-\vecc{Q}\vecc{S}^{-1}\vecc{R})^{-1} & \vecc{S}^{-1} + \vecc{S}^{-1}\vecc{R}(\vecc{P}-\vecc{Q}\vecc{S}^{-1}\vecc{R})^{-1} \vecc{Q} \vecc{S}^{-1}\\
     \end{bmatrix} ,\end{equation}
where $\vecc{P}$, $\vecc{Q}$, $\vecc{R}$, $\vecc{S}$ are matrix sub-blocks of arbitrary dimension. 

Since the matrices $\vecc{A}_{\lambda} := \vecc{\Gamma}_1/\tau_{\kappa} + \lambda \vecc{I}$ and $\vecc{B}_{\lambda}(\vecc{x}) := \vecc{I} + \vecc{g}(\vecc{x}) \tilde{\vecc{\kappa}}(\lambda \tau_{\kappa}) \vecc{h}(\vecc{x})/\lambda m_0$ are invertible for all $\lambda$ in the right half plane by assumption, $\lambda \vecc{I} + \vecc{L}(\vecc{x})$ is indeed invertible for every $\vecc{x}$ and in fact using the above formula for the inverse of a block matrix, we have:
\begin{equation}
(\lambda \vecc{I} + \vecc{L}(\vecc{x}))^{-1} = \left[ \begin{array}{cc}
\vecc{B}_{\lambda}^{-1}(\vecc{x})/\lambda & -\vecc{B}_{\lambda}^{-1}(\vecc{x}) \vecc{g}(\vecc{x}) \vecc{C}_1 \vecc{A}_{\lambda}^{-1}/\lambda m_0   \\
\vecc{A}_{\lambda}^{-1} \vecc{M}_1 \vecc{C}_1^* \vecc{h}(\vecc{x}) \vecc{B}_{\lambda}^{-1}(\vecc{x})/\lambda \tau_{\kappa} & \ \vecc{A}_{\lambda}^{-1} (\vecc{I} - \vecc{M}_1 \vecc{C}_1^* \vecc{h}(\vecc{x})\vecc{B}_{\lambda}^{-1}(\vecc{x}) \vecc{g}(\vecc{x}) \vecc{C}_1 \vecc{A}_{\lambda}^{-1}/\lambda m_0 \tau_{\kappa}) \end{array} \right].
\end{equation}

%By our assumptions on $\vecc{g}$ and $\vecc{A}$, the block matrix $\vecc{\hat{\gamma}}$ is positive stable by Lemma \ref{linalg1} in Appendix ${\bf B}$. 

Therefore,  $\vecc{\hat{\gamma}}$ is invertible and one can compute:
\begin{equation}
\boldsymbol{\hat{\gamma}}^{-1} = \left[ \begin{array}{ccc}
m_0 \vecc{\theta}^{-1} & -\tau_{\kappa} \vecc{\theta}^{-1}\vecc{g} \vecc{C}_1 \vecc{\Gamma}_1^{-1} & \tau_{\xi} \vecc{\theta}^{-1} \vecc{\sigma} \vecc{C}_2 \vecc{\Gamma}_2^{-1}   \\
m_{0} \vecc{\Gamma}_1^{-1} \vecc{M}_1 \vecc{C}_1^* \vecc{h} \vecc{\theta}^{-1} & \  \tau_{\kappa} \vecc{\Gamma}_1^{-1}(\vecc{I}- \vecc{M}_1 \vecc{C}_1^* \vecc{h}\vecc{\theta}^{-1} \vecc{g} \vecc{C}_1 \vecc{\Gamma}_1^{-1})  & \  \  \tau_{\xi}\vecc{\Gamma}_1^{-1}\vecc{M}_1\vecc{C}_1^* \vecc{h} \vecc{\theta}^{-1} \vecc{\sigma}\vecc{C}_2 \vecc{\Gamma}_2^{-1} \\ 
\vecc{0}  & \vecc{0}  & \tau_{\xi} \vecc{\Gamma}_2^{-1} \end{array} \right],\end{equation}
where $\vecc{\theta} = \vecc{g} \vecc{K}_1 \vecc{h}$. 
The result follows by applying Theorem \ref{skthm} to the SDE systems \eqref{gen_nsk1}-\eqref{gen_nsk2}. In particular, a rewriting of the resulting Lyapunov equation \eqref{lyp} gives the system of matrix equations \eqref{gen_system}-\eqref{gen_end}.  
\end{proof}

\begin{remark} \label{role_of_gamma}
In the above proof, the condition of invertibility of $\vecc{B}_\lambda(\vecc{x})$ is only used to guarantee the positive stability of the matrix $\hat{\vecc{\gamma}}$.  Therefore, the conclusion of the theorem holds also when the latter can be established in another way.  This can indeed be done in a number of concrete examples.  
\end{remark}

\begin{remark} \label{mark} Our SIDEs belong to a special class of non-Markovian equations, the so-called {\it quasi-Markovian Langevin equations} \cite{eckmann1999non}. For these equations one can introduce a finite number of auxiliary variables in such a way that the evolution of particle's position and velocity, together with these auxiliary variables, is described by a usual SDE system and is thus Markovian.  We remark that such ``Markovianization" procedure works here because the colored noise can be generated by a linear system of SDEs and the memory kernel satisfies a linear system of ordinary differential equations---both with constant coefficients. If, on the other hand, the memory kernel decays as a power, then there is no finite dimensional extension of the space which would make the solution process Markovian \cite{luczka2005non}. 
\end{remark}

The following corollary uses a linear change of variables in a given SIDE, to arrive at an alternative form of the corresponding homogenized SDE of the form \eqref{general_limitSDE}.  
%Note that this entails using a different version of the driving Wiener process. 

\begin{corollary} \label{class_thm}
For $i=1,2$, let $\vecc{T}_i$ be arbitrary $d_i \times d_i$ constant invertible matrix, where $d_1,d_2$ are positive integers. For $t \ge 0$, denote   $\vecc{\Gamma}'_i=\vecc{T}_i \vecc{\Gamma}_i \vecc{T}^{-1}_i$, $\vecc{M}_i' = \vecc{T}_i \vecc{M}_i \vecc{T}_i^{*}$, $\vecc{C}'_i =\vecc{C}_i \vecc{T}_i^{-1}$, $(\vecc{\beta}^\epsilon_t)'=\vecc{T}_2 \vecc{\beta}^\epsilon_t$, $\vecc{\Sigma}_i' = \vecc{T}_i \vecc{\Sigma}_i$ and consider the  equations:
\begin{align} 
m_0 \epsilon^{\mu} \ddot{\vecc{x}}^\epsilon_{t} &=  \vecc{F}(\vecc{x}^\epsilon_{t}) - \frac{\vecc{g}(\vecc{x}^\epsilon_{t})}{\tau_{\kappa} \epsilon^{\theta}} \int_{0}^{t} \vecc{C}'_1e^{-\frac{\vecc{\Gamma}'_1}{\tau_{\kappa}\epsilon^{\theta}}(t-s)} \vecc{M}'_1 (\vecc{C}'_1)^* \vecc{h}(\vecc{x}^\epsilon_{s}) \dot{\vecc{x}}^\epsilon_{s} ds +  \vecc{\sigma}(\vecc{x}^\epsilon_{t}) \vecc{C}'_2 (\vecc{\beta}^\epsilon_{t})',  \label{class_side_rescaled} \\
\tau_{\xi}\epsilon^\nu d(\vecc{\beta}^\epsilon_t)' &= -\vecc{\Gamma}'_2  (\vecc{\beta}^\epsilon_t)' dt + \vecc{\Sigma}'_2 d\vecc{W}'_t, \label{class_rescaled_ou}
\end{align}
where $\vecc{W}'_t$ is a $q_2$-dimensional Wiener process and the initial condition, $(\vecc{\beta}^\epsilon_0)'$, is normally distributed with zero mean and covariance of $\vecc{M}'_2/(\tau_{\xi} \epsilon^\nu)$. 

Suppose that Assumptions \ref{ass1}-\ref{ass4} are satisfied and the effective damping and diffusion constants, $\vecc{K}'_i = \vecc{C}'_i (\vecc{\Gamma}_i')^{-1} \vecc{M}_i' (\vecc{C}_i')^{*} = \vecc{K}_i$ ($i=1,2$), are invertible. Moreover, we assume that  $\vecc{I} + \vecc{g}(\vecc{x}) \tilde{\vecc{\kappa}'}(\lambda \tau_{\kappa}) \vecc{h}(\vecc{x})/\lambda m_0$ is invertible for all $\lambda$ in the right half plane $\{\lambda \in \CC: Re(\lambda)>0\}$ and $\vecc{x} \in \RR^d$, where $\tilde{\vecc{\kappa}'}(z) := \vecc{C}'_1(z\vecc{I} + \vecc{\Gamma}'_1)^{-1}\vecc{M}'_1 (\vecc{C}'_1)^* = \vecc{\tilde{\kappa}}(z)$ for $z \in \CC$. 

Let $\mu = \theta = \nu$ in \eqref{class_side_rescaled}-\eqref{class_rescaled_ou}. Then as $\epsilon \to 0$, the process $\vecc{x}^\epsilon_t$ converges, in the similar sense as in Theorem \ref{general_result}, to $\vecc{X}_t$, where $\vecc{X}_{t}$ is the solution of the SDE \eqref{general_limitSDE} with the $\vecc{C}_i$, $\vecc{\Gamma}_i$, $\vecc{M}_i$, $\vecc{\Sigma}_i$ replaced by $\vecc{C}'_i$, $\vecc{\Gamma}'_i$, $\vecc{M}'_i$, $\vecc{\Sigma}'_i$ respectively, and the driving Wiener process $\vecc{W}^{(q_2)}_t$ replaced by $\vecc{W}_t'$.
\end{corollary}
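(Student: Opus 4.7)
The plan is to apply Theorem \ref{general_result} directly to the primed SIDE system \eqref{class_side_rescaled}-\eqref{class_rescaled_ou}, viewed as an instance of the GLE \eqref{genle_general} with the triples $(\vecc{\Gamma}'_i, \vecc{M}'_i, \vecc{C}'_i)$ providing a (weak) stochastic realization of a memory kernel and a noise covariance, and with $\vecc{W}'_t$ as the driving Wiener process. Once I show that the primed data satisfies all the hypotheses of Theorem \ref{general_result}, that theorem's conclusion is exactly the limit SDE claimed in the corollary.

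First I would verify that the primed triples are themselves admissible. Positive stability of $\vecc{\Gamma}'_i = \vecc{T}_i \vecc{\Gamma}_i \vecc{T}_i^{-1}$ follows from positive stability of $\vecc{\Gamma}_i$ by similarity, and conjugating the unprimed Lyapunov identity $\vecc{\Gamma}_i \vecc{M}_i + \vecc{M}_i \vecc{\Gamma}_i^* = \vecc{\Sigma}_i \vecc{\Sigma}_i^*$ by $\vecc{T}_i$ on the left and $\vecc{T}_i^*$ on the right gives $\vecc{\Gamma}'_i \vecc{M}'_i + \vecc{M}'_i (\vecc{\Gamma}'_i)^* = \vecc{\Sigma}'_i (\vecc{\Sigma}'_i)^*$. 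Assumptions \ref{ass1}-\ref{ass4} for the primed system are part of the hypotheses of the corollary; in particular the no-explosion property and finite moments of the initial conditions for the associated embedded SDE system are available.

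Second I would verify the two invertibility conditions of Theorem \ref{general_result}. The telescoping computation $\vecc{K}'_i = \vecc{C}_i \vecc{T}_i^{-1} \vecc{T}_i \vecc{\Gamma}_i^{-1} \vecc{T}_i^{-1} \vecc{T}_i \vecc{M}_i \vecc{T}_i^* \vecc{T}_i^{-*} \vecc{C}_i^* = \vecc{K}_i$ shows that $\vecc{K}'_i$ is invertible, and the analogous calculation yields $\tilde{\vecc{\kappa}}'(z) = \vecc{C}_1 (z\vecc{I}+\vecc{\Gamma}_1)^{-1}\vecc{M}_1\vecc{C}_1^* = \tilde{\vecc{\kappa}}(z)$, so the condition that $\vecc{I} + \vecc{g}(\vecc{x})\tilde{\vecc{\kappa}}'(\lambda\tau_\kappa)\vecc{h}(\vecc{x})/\lambda m_0$ be invertible on the right half plane is identical to the standing hypothesis \eqref{inv_cond}, which is assumed in the corollary.

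With all the hypotheses verified for the primed data, Theorem \ref{general_result} applies and yields $\sup_{t\in[0,T]}|\vecc{x}_t^\epsilon - \vecc{X}_t| \to 0$ in probability, with $\vecc{X}_t$ solving the primed version of \eqref{general_limitSDE} driven by $\vecc{W}'_t$ — which is exactly the statement of the corollary. The content of the corollary is therefore essentially conceptual: it records the physically natural fact that the homogenized dynamics depends only on the intrinsic objects (memory kernel, noise covariance, and effective constants $\vecc{K}_1, \vecc{K}_2$), not on the particular weak stochastic realization chosen to describe the colored noise and memory. The only non-trivial step is the invariance bookkeeping in step two, which is where all the algebraic work sits; everything else is a direct invocation of Theorem \ref{general_result}.
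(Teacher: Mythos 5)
Your proposal is correct and follows exactly the route the paper intends: the paper dismisses this corollary as ``an easy consequence of Theorem \ref{general_result},'' and your verification that the primed triples inherit positive stability, satisfy the conjugated Lyapunov equation, and leave $\vecc{K}_i$ and $\tilde{\vecc{\kappa}}(z)$ invariant is precisely the bookkeeping that justifies that claim. Nothing is missing.
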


Corollary \ref{class_thm} is an easy consequence of Theorem \ref{general_result}. \\

Next, we discuss a particular, but very important, case when a {\it fluctuation-dissipation relation} holds.  This is, for instance, the case when the pre-limit dynamics are (heuristically) derived from Hamiltonian dynamics (see Appendix \ref{appA}). We will further explore similar cases of fluctuation-dissipation relations for the two sub-classes.

\begin{corollary} \label{gen_fdt}
Let $\vecc{x}^\epsilon_{t} \in \RR^{d}$ be the solution to the SDEs \eqref{sdec1}-\eqref{sdec6}. Suppose that the assumptions of  Theorem \ref{general_result} holds. Moreover, we assume that:
\begin{equation} \label{fdt_con1}
\tau_{\kappa} = \tau_{\xi} = \tau,  \ \   \vecc{\sigma} =  \vecc{g},  \ \  \vecc{h} = \vecc{g}^*, \end{equation} where $\tau$ is a positive constant, and 
\begin{equation} \label{fdt_con2}
  \vecc{C}_1 = \vecc{C}_2 := \vecc{C}, \ \ \vecc{\Gamma}_1 = \vecc{\Gamma}_2 := \vecc{\Gamma}, \ \ \vecc{M}_1 = \vecc{M}_2 := \vecc{M}, \ \ \vecc{\Sigma}_1 = \vecc{\Sigma}_2 := \vecc{\Sigma},
\end{equation}
(so that $q=r$ and $d_1 =d_2$). Denote $\vecc{K} := \vecc{C} \vecc{\Gamma}^{-1} \vecc{M} \vecc{C}^*$. 
Then as $\epsilon \to 0$, the process $\vecc{x}^\epsilon_{t}$ converges to the solution, $\vecc{X}_{t}$, of the following It\^o SDE:
\begin{equation} \label{fdtcase_limitSDE}
d\vecc{X}_{t} = \vecc{S}(\vecc{X}_{t}) dt + [\vecc{g}(\vecc{X}_t) \vecc{K} \vecc{g}^*(\vecc{X}_t)]^{-1} \vecc{F}(\vecc{X}_{t}) dt +  [\vecc{g}(\vecc{X}_t) \vecc{K} \vecc{g}^*(\vecc{X}_t)]^{-1} \vecc{g}(\vecc{X}_t) \vecc{C} \vecc{\Gamma}^{-1} \vecc{\Sigma} d\vecc{W}^{(q_2)}_{t},
\end{equation}
where $\vecc{S}(\vecc{X}_t)$ is the noise-induced drift whose $i$th component is given by 
%\begin{align} \vecc{S}_{i} &= m_{0} \frac{\partial}{\partial X_{l}}\left[(\vecc{g}\vecc{K} \vecc{g})^{-1}_{ij}(\vecc{X})\right] (\vecc{J})_{jl}(\vecc{X}), \ \ i,j,l = 1, \dots, d, \label{gen_nid1}  \end{align}Here $\vecc{J}=\vecc{J}^* \in \RR^{d \times d}$ and $\vecc{L} \in \RR^{d_1 \times d_1}$
%{equation}\vecc{\hat{J}} := \begin{bmatrix}
%    \vecc{J}^{11}       & \vecc{J}^{12} & \vecc{J}^{13} \\
%    \vecc{J}^{21}       & \vecc{J}^{22} & \vecc{J}^{23} \\
 %   \vecc{J}^{31}       & \vecc{J}^{32} & \vecc{J}^{33} 
%\end{bmatrix} \in \RR^{3d \times 3d},  \ \text{ with } satisfy the following system of matrix equations:  \begin{align} \vecc{J} \vecc{g}^{*} \vecc{C} \vecc{M} &= \frac{\beta \tau}{m_0} \vecc{g} \vecc{C} (\vecc{\Gamma}^{-1} \vecc{L} \vecc{\Gamma}^* - \vecc{L}^*) + \frac{\beta^2}{m_0} \vecc{g} \vecc{C} \vecc{M}, \\   \beta \vecc{M} \vecc{C}^* \vecc{g}^2 \vecc{C} \vecc{M} &= \tau \vecc{M} \vecc{C}^* \vecc{g}^2 \vecc{C} \vecc{L} + m_0(\vecc{\Gamma} \vecc{L} + \vecc{L} \vecc{\Gamma}^*) \vecc{\Gamma}^*. \end{align}
 
 \begin{equation} 
S_i(\vecc{X}) = m_0 \frac{\partial}{\partial X_{l}}\left[ ((\vecc{g} \vecc{K}\vecc{g}^*)^{-1})_{ij}(\vecc{X})\right] (\vecc{J}_{11})_{jl}(\vecc{X}), \ \
i,j,l = 1, \dots, d, \end{equation} 
where $\vecc{J}_{11}$ solves the following system of three matrix equations:
\begin{align}
m_0 \vecc{J}_{11} \vecc{g} \vecc{C} \vecc{M} + \tau \vecc{g} \vecc{C}(\vecc{J}_{23} + \vecc{J}_{23}^*) &= \tau \vecc{g} \vecc{C} \vecc{J}_{22}+\vecc{g} \vecc{C} \vecc{M}, \label{spec1} \\
\vecc{M} \vecc{C}^* \vecc{g}^* \vecc{g} \vecc{C} \vecc{M} (\vecc{\Gamma}^{-1})^* &= \tau \vecc{M} \vecc{C}^* \vecc{g}^* \vecc{g} \vecc{C} \vecc{J}_{23} (\vecc{\Gamma}^{-1})^* + m_0 (\vecc{\Gamma} \vecc{J}_{23} + \vecc{J}_{23} \vecc{\Gamma}^*), \\
\vecc{M} \vecc{C}^* \vecc{g}^*\vecc{g} \vecc{C} \vecc{M} (\vecc{\Gamma}^{-1})^* + \vecc{\Gamma}^{-1} \vecc{M} \vecc{C}^* \vecc{g}^* \vecc{g} \vecc{C} \vecc{M} &= \tau(\vecc{M} \vecc{C}^* \vecc{g}^* \vecc{\Gamma}^{-1} \vecc{J}_{23}^* \vecc{C}^* \vecc{g}^* + \vecc{\Gamma}^{-1} \vecc{J}_{23}^* \vecc{C}^* \vecc{g}^* \vecc{g} \vecc{C} \vecc{M}) \nonumber \\
&\hspace{1cm} + m_0 (\vecc{\Gamma} \vecc{J}_{22} + \vecc{J}_{22} \vecc{\Gamma}^*). \label{spec3}
\end{align}
The convergence is obtained in the same sense as in Theorem \ref{general_result}. 
\end{corollary}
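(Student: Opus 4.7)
The plan is to specialize Theorem \ref{general_result} to the FDT setting of \eqref{fdt_con1}-\eqref{fdt_con2}. The coefficient $\vecc{\theta}^{-1}\vecc{F}$ and the diffusion matrix $\vecc{\theta}^{-1}\vecc{\sigma}\vecc{C}_2\vecc{\Gamma}_2^{-1}\vecc{\Sigma}_2$ in \eqref{general_limitSDE} collapse to the forms appearing in \eqref{fdtcase_limitSDE} by direct substitution: using $\vecc{\sigma} = \vecc{g}$, $\vecc{h} = \vecc{g}^*$, and the common choice $(\vecc{C}_1,\vecc{\Gamma}_1,\vecc{M}_1,\vecc{\Sigma}_1)=(\vecc{C}_2,\vecc{\Gamma}_2,\vecc{M}_2,\vecc{\Sigma}_2)$, one has $\vecc{\theta} = \vecc{g}\vecc{K}\vecc{g}^*$. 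The real work is to (a) show that the sum $\vecc{S}^{(1)}+\vecc{S}^{(2)}+\vecc{S}^{(3)}$ of noise-induced drifts from Theorem \ref{general_result} collapses to an expression involving only $\vecc{J}_{11}$, and (b) reduce the five-equation Lyapunov-type system \eqref{gen_system}-\eqref{gen_end} to the three equations \eqref{spec1}-\eqref{spec3}.

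Both (a) and (b) hinge on the identity $\vecc{J}_{12}=\vecc{J}_{13}$, which I would establish via the linear change of variables $\vecc{w}_t^\epsilon := \vecc{\beta}_t^\epsilon - \vecc{y}_t^\epsilon$. Because $\vecc{\sigma}=\vecc{g}$ and $\vecc{C}_1=\vecc{C}_2=\vecc{C}$, the $\vecc{v}$-equation of \eqref{sdec1}-\eqref{sdec6} becomes $m_0\epsilon\, d\vecc{v}_t^\epsilon = \vecc{g}(\vecc{x}_t^\epsilon)\vecc{C}\vecc{w}_t^\epsilon\, dt + \vecc{F}(\vecc{x}_t^\epsilon)\, dt$, which depends on $\vecc{w}^\epsilon$ but not on $\vecc{y}^\epsilon$. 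Together with the SDE satisfied by $\vecc{w}^\epsilon$ (driven by $\vecc{v}^\epsilon$, with the common damping matrix $\vecc{\Gamma}/\tau$ and noise $\vecc{\Sigma}/\tau$), the pair $(\vecc{v}^\epsilon, \vecc{w}^\epsilon)$ evolves autonomously and its stationary-covariance Lyapunov equation admits the explicit block-diagonal solution with blocks $\vecc{I}/m_0$ and $\vecc{M}/\tau$. Translating back through $\vecc{\beta} = \vecc{w}+\vecc{y}$ then reads off $\vecc{J}_{11} = \vecc{I}/m_0$ together with the desired identity $\vecc{J}_{12} = \vecc{J}_{13}$.

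With $\vecc{J}_{12}=\vecc{J}_{13}$ in hand, the drifts $\vecc{S}^{(2)}$ and $\vecc{S}^{(3)}$ in \eqref{gen_nid2}-\eqref{gen_nid3} share the same prefactor $\partial_l[(\vecc{\theta}^{-1}\vecc{g})_{ij}]$ (since $\vecc{\sigma}=\vecc{g}$) and the same matrix factor $\vecc{C}\vecc{\Gamma}^{-1}\vecc{J}_{21}=\vecc{C}\vecc{\Gamma}^{-1}\vecc{J}_{31}$, while entering with opposite signs and the same coefficient $\tau_\kappa = \tau_\xi = \tau$; hence $\vecc{S}^{(2)}+\vecc{S}^{(3)}=0$ and only $\vecc{S}^{(1)}$ survives, yielding the claimed expression for $S_i(\vecc{X})$. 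Finally, the reduction of \eqref{gen_system}-\eqref{gen_end} to \eqref{spec1}-\eqref{spec3} proceeds by algebraic elimination: equation \eqref{gen_system} is trivially satisfied under FDT together with $\vecc{J}_{12}=\vecc{J}_{13}$; the third of those equations solves for $m_0\vecc{J}_{12}\vecc{\Gamma}^* = \vecc{g}\vecc{C}\vecc{M} - \tau\vecc{g}\vecc{C}\vecc{J}_{23}$, and substituting this into the second and fourth equations (and using the fifth) yields \eqref{spec1}, \eqref{spec3}, and \eqref{spec2} respectively. The main technical obstacle will be the careful bookkeeping of factors of $\vecc{\Gamma}$, $\vecc{\Gamma}^*$ and their inverses appearing on the left and right in this elimination, particularly in deriving \eqref{spec3}.
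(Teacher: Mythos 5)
Your proposal is correct, but it reaches the key identity $\vecc{J}_{12}=\vecc{J}_{13}$ by a genuinely different route than the paper. The paper's proof is purely algebraic: under \eqref{fdt_con1}--\eqref{fdt_con2} the first Lyapunov equation \eqref{gen_system} collapses to $\vecc{g}\vecc{C}(\vecc{J}_{12}-\vecc{J}_{13})^*+(\vecc{J}_{12}-\vecc{J}_{13})\vecc{C}^*\vecc{g}^*=\vecc{0}$, from which $\vecc{J}_{12}=\vecc{J}_{13}$ is read off, and the remaining equations are then rewritten as \eqref{spec1}--\eqref{spec3}. You instead pass to the coordinates $(\vecc{v}^\epsilon,\vecc{w}^\epsilon,\vecc{y}^\epsilon)$ with $\vecc{w}^\epsilon=\vecc{\beta}^\epsilon-\vecc{y}^\epsilon$; since the transformed damping matrix is block lower-triangular with respect to the partition $\{(\vecc{v},\vecc{w})\},\{\vecc{y}\}$ and the noise enters only through $\vecc{w}$, the $(\vecc{v},\vecc{w})$-block of the transformed Lyapunov equation decouples, and its unique solution is block-diagonal with blocks $\vecc{I}/m_0$ and $\vecc{M}/\tau$ (I checked: the off-diagonal block vanishes precisely when $a/\tau=b/m_0$ and the $(w,w)$-block fixes $b=1/\tau$ via $\vecc{\Gamma}\vecc{M}+\vecc{M}\vecc{\Gamma}^*=\vecc{\Sigma}\vecc{\Sigma}^*$). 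This buys you two things the paper's argument does not make explicit. First, the deduction in the paper that an identity of the form ``$\vecc{D}+\vecc{D}^*=\vecc{0}$ implies $\vecc{D}=\vecc{0}$'' is, taken in isolation, only an antisymmetry statement; your argument derives $\vecc{J}_{12}=\vecc{J}_{13}$ rigorously from uniqueness of the Lyapunov solution (guaranteed by positive stability of $\vecc{\hat{\gamma}}$, which the hypotheses of Theorem \ref{general_result} supply). Second, you obtain the explicit value $\vecc{J}_{11}=\vecc{I}/m_0$, an equipartition-type statement under this FDT normalization, which makes the surviving drift simply $S_i=\partial_{X_l}[((\vecc{g}\vecc{K}\vecc{g}^*)^{-1})_{il}]$ and in fact renders the three-equation system \eqref{spec1}--\eqref{spec3} unnecessary for evaluating $\vecc{S}$. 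The cancellation $\vecc{S}^{(2)}+\vecc{S}^{(3)}=\vecc{0}$ then follows exactly as you say, since under \eqref{fdt_con1}--\eqref{fdt_con2} the two terms in \eqref{gen_nid2}--\eqref{gen_nid3} have identical prefactors and opposite signs. Two cosmetic points: the middle equation of the displayed system in the corollary carries no label (there is no equation ``spec2''), and your stronger conclusion $\vecc{J}_{11}=\vecc{I}/m_0$ should be stated as a consequence of, not a replacement for, the system \eqref{spec1}--\eqref{spec3} if you want your write-up to match the corollary as stated.
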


Eqns. \eqref{fdt_con1}-\eqref{fdt_con2} are a form of fluctuation-dissipation relation familiar from non-equilibrium statistical mechanics \cite{toda2012statistical}. As stationary measures of systems satisfying fluctuation-dissipation relations are in equilibrium with respect to the underlying dynamics, this result is relevant for describing equilibrium properties of such systems in the small mass limit. 

\begin{remark} 
Therefore, if the fluctuation-dissipation relation holds, the noise-induced drift in the limiting SDE reduces to {\it a single term} (later we will see how this term simplifies in some special cases).  This result may have interesting implications for nanoscale systems in equilibrium.
We remark that the conditions for the fluctuation-dissipation relation in Corollary \ref{gen_fdt} can be written in other equivalent forms, up to the transformations in \eqref{transf_realize} and multiplication by a constant. 
\end{remark}

\begin{proof}
The above corollary follows from applying Theorem \ref{general_result}. Indeed, by assumptions of the corollary, \eqref{gen_system} simplifies to:
\begin{equation}
\vecc{g} \vecc{C} (\vecc{J}_{12}-\vecc{J}_{13})^* + (\vecc{J}_{12}-\vecc{J}_{13}) (\vecc{g} \vecc{C})^* = \vecc{0}.
\end{equation}
This implies that $\vecc{J}_{12} = \vecc{J}_{13}$ and, therefore, $\vecc{S}^{(2)}$ and $\vecc{S}^{(3)}$ cancel. Rewriting the resulting system of matrix equations in \eqref{gen_system}-\eqref{gen_end} give \eqref{spec1}-\eqref{spec3}. 
\end{proof}

\section{Homogenization for Models of the Two Sub-Classes} \label{homog}

We now return to the two sub-classes of SIDEs \eqref{genle_general} introduced in Section \ref{nmle}. In this section, we study the effective dynamics described by SIDEs \eqref{goal2} and \eqref{goal3} in the limit as $\epsilon \to 0$. By specializing to these two sub-classes, the convergence result of Theorem \ref{general_result}, in particular the expressions in \eqref{general_limitSDE}-\eqref{gen_end}, can be made more explicit under certain assumptions on the matrix-valued coefficients  and therefore the limiting equation obtained may be useful for modeling purposes. 

%As stated in Section \ref{nmleB}, we assume that all the characteristic time scales vanish at the same rate, so without loss of generality, we may set $\mu = \theta = \nu = 1$. We perform the analysis for the first sub-class described by SIDE \eqref{goal2} in detail. The analysis for the second sub-class described by SIDE \eqref{goal3} is done using similar techniques and we omit some details. 

\subsection{SIDEs Driven by a Markovian Colored Noise}

The following convergence result provides a homogenized SDE for the particle's position in the limit as the inertial time scale, the memory time scale and the noise correlation time scale vanish at the same rate in the case when the pre-limit dynamics are driven by an Ornstein-Uhlenbeck noise.  

\begin{corollary} \label{model3} 
Let $d=d_1=d_2=q_1=q_2=q=r$. We set, in the SDEs \eqref{sdec1}-\eqref{sdec6}:  $\vecc{\beta}^\epsilon_t = \vecc{\eta}^\epsilon_t$, $\tau_{\xi} = \tau_{\eta}$, $\vecc{W}_t^{(q_2)} = \vecc{W}^{(d)}_t := \vecc{W}_t$ and
\begin{equation}
(\vecc{\Gamma}_1, \vecc{M}_1, \vecc{C}_1) = (\vecc{A}, \vecc{A}, \vecc{I}), \ \  (\vecc{\Gamma}_2, \vecc{M}_2, \vecc{C}_2) = (\vecc{A}, \vecc{A}/2, \vecc{I}),\end{equation}
to obtain  SDEs equivalent to equations \eqref{goal2}-\eqref{rescaled-ou} with $\mu = \theta = \nu = 1$. Let $\vecc{x}^\epsilon_{t} \in \RR^{d}$ be the solution to these equations, with the matrices $\vecc{g}(\vecc{x})$ and $\vecc{h}(\vecc{x})$ positive definite for every $\vecc{x} \in \RR^d$. Suppose that Assumptions \ref{ass1}-\ref{ass4} are satisfied and, moreover, that  $\vecc{g}(\vecc{x})$, $\vecc{h}(\vecc{x})$  and the diagonal matrix $\vecc{A}$ are commuting.  

Then as $\epsilon \to 0$, the process $\vecc{x}^\epsilon_{t}$ converges to the solution, $\vecc{X}_{t}$, of the following It\^o SDE:
\begin{equation} \label{limitSDE}
d\vecc{X}_{t} = \vecc{S}(\vecc{X}_{t}) dt + (\vecc{g} \vecc{h})^{-1}(\vecc{X}_{t})\vecc{F}(\vecc{X}_{t}) dt + (\vecc{g}\vecc{h})^{-1}(\vecc{X}_{t}) \vecc{\sigma}(\vecc{X}_{t}) d\vecc{W}_{t},
\end{equation}
with $\vecc{S}(\vecc{X}_{t}) = \vecc{S}^{(1)}(\vecc{X}_{t}) + \vecc{S}^{(2)}(\vecc{X}_{t}) + \vecc{S}^{(3)}(\vecc{X}_{t}),$ where the $\vecc{S}^{(k)}$ are the noise-induced drifts whose $i$th components are given by 
\begin{align}
S^{(1)}_{i}(\vecc{X})  &= m_{0} \frac{\partial}{\partial X_{l}}[((\vecc{g}\vecc{h})^{-1})_{ij}(\vecc{X})] (\vecc{J}_{11})_{jl}(\vecc{X}), \ \
i,j,l = 1, \dots, d, \label{nid1} \\ 
S^{(2)}_{i}(\vecc{X})  &= -\tau_{\kappa} \frac{\partial}{\partial X_{l}}[((\vecc{A} \vecc{h})^{-1})_{ij}(\vecc{X})] (\vecc{J}_{21})_{jl}(\vecc{X}), \ \ i,j,l = 1, \dots, d, \label{nid2} \\ 
S^{(3)}_{i}(\vecc{X}) &= \tau_{\eta} \frac{\partial}{\partial X_{l}}[((\vecc{g}\vecc{h})^{-1} \vecc{\sigma} \vecc{A}^{-1} )_{ij}(\vecc{X})] (\vecc{J}_{31})_{jl}(\vecc{X}), \ \ i,j,l = 1, \dots, d. \label{nid3}
\end{align}
Here
$\vecc{J}_{11} = \vecc{J}_{11}^*$, $\vecc{J}_{21}=\vecc{J}_{12}^*$ and $\vecc{J}_{31} = \vecc{J}_{13}^*$ are $d$ by $d$ block matrices satisfying the following system of matrix equations: 
\begin{align}
\tau_{\eta} \vecc{g}  \vecc{J}_{23} + m_0 \vecc{J}_{13} \vecc{A} &= \vecc{\sigma}  \vecc{A}/2, \label{subclass1_start} \\ 
\tau_{\eta} \vecc{A}  \vecc{h} \vecc{J}_{13} &= \tau_{\eta} \vecc{A} \vecc{J}_{23} + \tau_{\kappa} \vecc{J}_{23} \vecc{A}, \\ 
\vecc{A} \vecc{h} \vecc{J}_{12} + \vecc{J}_{12}^* \vecc{h} \vecc{A} &= \vecc{A} \vecc{J}_{22} + \vecc{J}_{22} \vecc{A}, \\ 
\vecc{g}  \vecc{J}_{12}^* + \vecc{J}_{12}  \vecc{g} &= \vecc{\sigma}\vecc{J}_{13}^* + \vecc{J}_{13}  \vecc{\sigma}^*, \\ 
m_0 \vecc{J}_{11} \vecc{h} \vecc{A} + \tau_{\kappa} \vecc{\sigma} \vecc{J}_{23}^* &= \tau_{\kappa} \vecc{g}  \vecc{J}_{22} + m_0 \vecc{J}_{12} \vecc{A}. \label{subclass1_end} 
\end{align}
The convergence is obtained in the same sense as in Theorem \ref{general_result}.
\end{corollary}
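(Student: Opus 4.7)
The plan is to recognize the corollary as a direct specialization of Theorem \ref{general_result} with the triples $(\vecc{\Gamma}_1, \vecc{M}_1, \vecc{C}_1) = (\vecc{A}, \vecc{A}, \vecc{I})$ and $(\vecc{\Gamma}_2, \vecc{M}_2, \vecc{C}_2) = (\vecc{A}, \vecc{A}/2, \vecc{I})$. I would first check that this choice indeed recovers the pre-limit system \eqref{goal2}--\eqref{rescaled-ou}: the memory function $\vecc{C}_1 e^{-\vecc{\Gamma}_1|t|}\vecc{M}_1 \vecc{C}_1^* = \vecc{A}e^{-\vecc{A}|t|}$ is exactly $\vecc{\kappa}_1$, and the Lyapunov equation for $\vecc{\Sigma}_2$ gives $\vecc{\Sigma}_2 = \vecc{A}$, so the realization \eqref{realize} reproduces the OU equation \eqref{ou}.

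Next I would verify the hypotheses of Theorem \ref{general_result}. Assumptions \ref{ass1}--\ref{ass4} carry over verbatim. The effective constants are $\vecc{K}_1 = \vecc{A}^{-1}\vecc{A} = \vecc{I}$ and $\vecc{K}_2 = \vecc{A}^{-1}(\vecc{A}/2) = \vecc{I}/2$, both invertible. The main technical task is verifying the invertibility of $\vecc{B}_\lambda(\vecc{x})$ on the right half plane. Here $\tilde{\vecc{\kappa}}(z) = (z\vecc{I}+\vecc{A})^{-1}\vecc{A}$, so
\begin{equation*}
\vecc{B}_\lambda(\vecc{x}) = \vecc{I} + \frac{1}{\lambda m_0}\vecc{g}(\vecc{x})(\lambda\tau_\kappa \vecc{I} + \vecc{A})^{-1}\vecc{A}\vecc{h}(\vecc{x}).
\end{equation*}
Because $\vecc{g}(\vecc{x})$, $\vecc{h}(\vecc{x})$ and $\vecc{A}$ mutually commute and have positive spectra, they are simultaneously diagonalizable and each eigenvalue of $\vecc{B}_\lambda(\vecc{x})$ has the scalar form $1 + g_i h_i a_i / [\lambda m_0(\lambda \tau_\kappa + a_i)]$ with $g_i, h_i, a_i > 0$. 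A short computation with $\lambda = x+iy$ shows that $\lambda m_0(\lambda\tau_\kappa + a_i)$ has nonzero imaginary part whenever $x > 0$ and $y \neq 0$, while for $y=0$ and $x>0$ it is strictly positive; hence this expression can never equal $-g_i h_i a_i < 0$ in the right half plane. This is the step I expect to be the main (but still short) obstacle; the commutativity hypothesis is used precisely here.

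With the hypotheses verified, I would substitute the specific triples into \eqref{general_limitSDE}--\eqref{gen_end}. The key identities are $\vecc{\theta} = \vecc{g}\vecc{K}_1\vecc{h} = \vecc{g}\vecc{h}$, $\vecc{C}_i \vecc{\Gamma}_i^{-1} = \vecc{A}^{-1}$, and $\vecc{C}_2 \vecc{\Gamma}_2^{-1}\vecc{\Sigma}_2 = \vecc{A}^{-1}\vecc{A} = \vecc{I}$, which immediately reduce \eqref{general_limitSDE} to \eqref{limitSDE}. For the noise-induced drifts, the only nontrivial simplification is in $\vecc{S}^{(2)}$ and $\vecc{S}^{(3)}$, where one uses that $\vecc{A}^{-1}$ is constant and commutes with $\vecc{h}(\vecc{x})$ to move it inside the partial derivative: for instance $\partial_{X_l}[(\vecc{\theta}^{-1}\vecc{g})_{ij}](\vecc{A}^{-1}\vecc{J}_{21})_{jl} = \partial_{X_l}[(\vecc{h}^{-1}\vecc{A}^{-1})_{ij}](\vecc{J}_{21})_{jl} = \partial_{X_l}[((\vecc{A}\vecc{h})^{-1})_{ij}](\vecc{J}_{21})_{jl}$, which produces \eqref{nid2}; the analogous manipulation produces \eqref{nid3}. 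Finally, the Lyapunov system \eqref{gen_system}--\eqref{gen_end} reduces to \eqref{subclass1_start}--\eqref{subclass1_end} by direct substitution, using $\vecc{A}^* = \vecc{A}$, $\vecc{g}^* = \vecc{g}$, $\vecc{h}^* = \vecc{h}$. The convergence conclusion then follows from the convergence statement of Theorem \ref{general_result}.
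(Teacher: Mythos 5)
Your proposal is correct and follows essentially the same route as the paper: apply Theorem \ref{general_result}, note $\vecc{K}_1=2\vecc{K}_2=\vecc{I}$, verify invertibility of $\vecc{B}_\lambda(\vecc{x})$ via simultaneous diagonalization of the commuting positive definite matrices $\vecc{g},\vecc{h},\vecc{A}$, and then specialize the general formulas. Your explicit scalar computation showing each eigenvalue $1+g_ih_ia_i/[\lambda m_0(\lambda\tau_\kappa+a_i)]$ is nonzero on the right half plane just fills in a detail the paper leaves implicit.
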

\begin{proof} We will apply Theorem \ref{general_result}. As $\vecc{K}_1=2 \vecc{K}_2 = \vecc{I}$, clearly they are invertible. Also, being positive definite, $\vecc{g}$ and $\vecc{h}$ are invertible. 

Since $\vecc{g}$, $\vecc{h}$ and $\vecc{A}$ are positive definite and commuting matrices, the matrix $\vecc{B}_{\lambda}(\vecc{x})$, defined in \eqref{inv_cond}, is invertible for all $\lambda$ such that $Re(\lambda) > 0$. Indeed, in this case  $\vecc{B}_{\lambda}(\vecc{x}) = \vecc{I} + \vecc{g}(\vecc{x})(\lambda \tau_{\kappa} \vecc{I} + \vecc{A})^{-1} \vecc{A} \vecc{h}(\vecc{x})/\lambda m_0$. Since $\vecc{g}$, $\vecc{h}$ and $\vecc{A}$ are positive definite and commuting, they have positive eigenvalues and can be simultaneously diagonalized. Therefore, all the eigenvalues of $\vecc{B}_{\lambda}(\vecc{x})$ are nonzero for every $\lambda$ with $Re(\lambda) > 0$ and $\vecc{x} \in \RR^d$, so the invertibility condition is verified.  Therefore, the block matrix:
\begin{equation}  \label{nsk2}
\boldsymbol{\hat{\gamma}}(\vecc{x}^\epsilon_{t}) = \left[ \begin{array}{ccc} \vecc{0} & \frac{\vecc{g}(\vecc{x}^\epsilon_{t})}{m_{0}} & -\frac{\vecc{\sigma}(\vecc{x}^\epsilon_{t})}{m_{0}}   \\ -\frac{\vecc{A} \vecc{h}(\vecc{x}^\epsilon_{t}) }{\tau_{\kappa}}  & \frac{\vecc{A}}{\tau_{\kappa}} & \vecc{0}  \\  \vecc{0}  & \vecc{0}  & \frac{\vecc{A}}{\tau_{\eta}} \end{array} \right],
\end{equation}
 is positive stable (see Remark \ref{role_of_gamma}).
The result then follows by applying Theorem \ref{general_result}. 
\end{proof}

For special one-dimensional systems, the form of the limiting equation can be made even more explicit. 
\begin{corollary} \label{1dcase}
In the one-dimensional case, we drop the boldface and write $\vecc{X}_{t} := X_{t} \in \RR, \ \vecc{g}(\vecc{X}) := g(X),$ with $g: \RR \to \RR$, etc.. We assume that $h = g$ and $\vecc{A} := \alpha > 0$ is a constant. The homogenized equation is given by:
\begin{equation} \label{onedlimitSDE}
dX_{t} = S(X_{t}) dt + g^{-2}(X_{t}) F(X_{t}) dt + g^{-2}(X_{t}) \sigma(X_{t}) dW_{t},
\end{equation}
with $S(X_{t}) = S^{(1)}(X_{t}) + S^{(2)}(X_{t}) + S^{(3)}(X_{t}),$ where the noise-induced drift terms $S^{(k)}(X_{t})$ have the following explicit expressions that depend on the parameters $m_{0}, \tau_{\kappa}$ and $\tau_{\eta}$:
\begin{align} 
S^{(1)}(X_{t}) &= \left(\frac{1}{g^2(X_{t})} \right)'  \frac{  \sigma(X_{t})^2}{2 g^2(X_{t})} \left[\frac{\tau_{\kappa}^2 g^2(X_{t})+m_{0}\alpha (\tau_{\kappa}+\tau_{\eta})}{\tau_{\eta}^2 g^2(X_{t})+m_{0}\alpha (\tau_{\kappa}+\tau_{\eta})} \right], \label{86} \\ 
S^{(2)}(X_{t}) &=  - \left(\frac{1}{g(X_{t})} \right)'  \frac{ \sigma(X_{t})^2 \tau_{\kappa}(\tau_{\kappa}+\tau_{\eta})}{2g(X_{t})[\tau_{\eta}^2 g^2(X_{t})+m_{0}\alpha(\tau_{\kappa}+\tau_{\eta})]},  \label{87} \\
S^{(3)}(X_{t}) &= \left(\frac{\sigma(X_{t})}{g^2(X_{t})} \right)'  \frac{ \sigma(X_{t})\tau_{\eta} (\tau_{\kappa}+\tau_{\eta})}{2[\tau_{\eta}^2 g^2(X_{t})+m_{0}\alpha (\tau_{\kappa}+\tau_{\eta})]}, \label{88}  
\end{align}
where the prime $'$ denotes derivative with respect to $X_t$. 
\end{corollary}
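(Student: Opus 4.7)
The plan is to apply Corollary \ref{model3} directly, with the one-dimensional specialization $d=1$, $h=g$, and $\vecc{A}:=\alpha>0$ a positive scalar. First I would check that the hypotheses transfer: scalars commute automatically, so the commutativity assumption on $\vecc{g},\vecc{h},\vecc{A}$ is trivial, and positivity of $\alpha$ together with positivity of $g=h$ gives invertibility of $\vecc{g}\vecc{h}=g^2$. The other standing assumptions (\ref{ass1}--\ref{ass4}) are inherited from the hypotheses of Corollary \ref{model3}, so the conclusion \eqref{limitSDE}--\eqref{nid3} applies and yields an SDE of the form \eqref{onedlimitSDE} with $(gh)^{-1} = 1/g^2$ as the coefficient of both $F$ and $\sigma$; what remains is to solve for $J_{11},J_{12},J_{13},J_{22},J_{23}$ explicitly and substitute.

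Next I would solve the scalar system \eqref{subclass1_start}--\eqref{subclass1_end}. The third equation collapses to $J_{22}=gJ_{12}$ and the fourth to $gJ_{12}=\sigma J_{13}$, so $J_{12}=\sigma J_{13}/g$ and $J_{22}=\sigma J_{13}$. The second equation gives $J_{23}=\tau_\eta g J_{13}/(\tau_\kappa+\tau_\eta)$, and substituting into the first yields
\[
J_{13} \;=\; \frac{\sigma\alpha(\tau_\kappa+\tau_\eta)/2}{\tau_\eta^2 g^2 + m_0\alpha(\tau_\kappa+\tau_\eta)},
\]
which then determines $J_{12}$, $J_{22}$ and $J_{23}$. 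Plugging these into the fifth equation and simplifying (the terms $\tau_\kappa g J_{22}-\tau_\kappa \sigma J_{23}$ combine into a factor $\tau_\kappa^2$ via $(\tau_\kappa+\tau_\eta)-\tau_\eta=\tau_\kappa$) leads to
\[
J_{11} \;=\; \frac{\sigma^2}{2 m_0 g^2}\cdot\frac{\tau_\kappa^2 g^2 + m_0\alpha(\tau_\kappa+\tau_\eta)}{\tau_\eta^2 g^2 + m_0\alpha(\tau_\kappa+\tau_\eta)}.
\]

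Finally I would substitute into \eqref{nid1}--\eqref{nid3}. In 1D with $h=g$ and $\vecc{A}=\alpha$, the derivative factors become $(1/g^2)'$ for $S^{(1)}$, $(1/(\alpha g))'=\alpha^{-1}(1/g)'$ for $S^{(2)}$, and $(\sigma/(\alpha g^2))'=\alpha^{-1}(\sigma/g^2)'$ for $S^{(3)}$. Multiplying by the respective prefactors $m_0$, $-\tau_\kappa$, $\tau_\eta$ and by $J_{11}$, $J_{12}$, $J_{13}$ computed above, the $\alpha$'s cancel cleanly and the expressions \eqref{86}, \eqref{87}, \eqref{88} fall out. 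There is no serious obstacle here: the result is an algebraic corollary of Corollary \ref{model3}, and the only care needed is in bookkeeping the prefactors and in using the relations $J_{22}=gJ_{12}$ and $gJ_{12}=\sigma J_{13}$ at the right moment to keep the $J_{11}$ computation compact.
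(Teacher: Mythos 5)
Your proposal is correct and follows essentially the same route as the paper: specialize Corollary \ref{model3} to the scalar case, solve the Lyapunov system for the blocks $J_{11},J_{12},J_{13},J_{22},J_{23}$, and substitute into \eqref{nid1}--\eqref{nid3}; your hand computation reproduces exactly the matrix $\boldsymbol{J}$ that the paper obtains via Mathematica (the paper also re-verifies positive stability of the $3\times 3$ matrix $\vecc{\gamma}$ by computing its eigenvalues explicitly, which you correctly delegate to Corollary \ref{model3}). The intermediate values $J_{13}$, $J_{12}=\sigma J_{13}/g$, $J_{22}=\sigma J_{13}$, $J_{23}=\tau_\eta g J_{13}/(\tau_\kappa+\tau_\eta)$ and the resulting $J_{11}$ all match, and the cancellation of $\alpha$ in $S^{(2)}$, $S^{(3)}$ is handled correctly.
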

\begin{proof}
With $\vecc{x}^\epsilon_{t} := (x^\epsilon_{t}, z^\epsilon_{t}, \zeta^\epsilon_{t}) \in \RR^{3}$ and $\vecc{v}^\epsilon_{t} := (v^\epsilon_{t},  y^\epsilon_{t}, \eta^\epsilon_{t}) \in \RR^{3}$, SDEs \eqref{gen_nsk1}-\eqref{gen_nsk2} become:
\begin{align}
d\vecc{x}^\epsilon_{t} &= \vecc{v}^\epsilon_{t} dt, \\
\epsilon  d\vecc{v}^\epsilon_{t} &= - \boldsymbol{\gamma}(\vecc{x}^\epsilon_{t}) \vecc{v}^\epsilon_{t} dt + \vecc{F}(\vecc{x}^\epsilon_{t})dt + \vecc{\sigma} d\vecc{W}_{t},
\end{align}
where 
\begin{equation} \label{why}
\boldsymbol{\gamma}(\vecc{x}^\epsilon_{t}) = \left[ \begin{array}{ccc}
0 & \frac{g(x^\epsilon_{t})}{m_{0}} & -\frac{\sigma(x^\epsilon_{t})}{m_{0}}   \\
-\frac{\alpha}{\tau_{\kappa}} g(x^\epsilon_{t}) & \frac{\alpha}{\tau_{\kappa}} & 0 \\ 
0 & 0 & \frac{\alpha}{\tau_{\eta}} \end{array} \right], \ \ \vecc{F}(\vecc{x}^\epsilon_{t}) = \begin{bmatrix}
         \frac{F(x^\epsilon_{t})}{m_{0}} \\
         0\\
         0 \\
        \end{bmatrix}, \ \  \vecc{\sigma} = 
         \left[ \begin{array}{c}
0   \\
0  \\ 
\frac{\alpha}{\tau_{\eta}}   \end{array} \right]. \end{equation}
It follows from Corollary \ref{model3} that the matrix $\vecc{\gamma}$ is positive stable; one can also calculate its eigenvalues explicitly and see that their real parts are positive. 
The eigenvalues of $\vecc{\gamma}$ are 
\begin{equation} \frac{\alpha}{\tau_{\eta}}, \ \frac{\alpha}{2 \tau_{\kappa}} \pm \frac{1}{2} \sqrt{\frac{\alpha^2 m_{0}-4 \alpha g(x^\epsilon_{t})^2 \tau_{\kappa}}{m_{0} \tau_{\kappa}^2}},  \end{equation}
and so their real parts are indeed positive.
%The inverse of $\vecc{\gamma}$ is given by {equation}\boldsymbol{\gamma}^{-1}(\vecc{x}_{t}) = \left[ \begin{array}{ccc} \frac{m_{0}}{g^2} & -\frac{\tau_{\kappa}}{\alpha g} & \frac{\tau_{\eta} \sigma}{\alpha g^2}   \\ \frac{m_{0}}{g} & 0 & \frac{\tau_{\eta} \sigma}{\alpha g} \\  0 & 0 & \frac{\tau_{\eta}}{\alpha} \end{array} \right],{equation}

On the other hand, the solution, $\boldsymbol{J} \in \RR^{3 \times 3}$, to the Lyapunov equation, 
\begin{equation}
\boldsymbol{\gamma} \boldsymbol{J} + \boldsymbol{J} \boldsymbol{\gamma}^{*} = \vecc{\sigma} \vecc{\sigma}^{*},\end{equation}
can be computed (using Mathematica$\textsuperscript{\textregistered}$) to be:
\begin{equation}
\boldsymbol{J} =  \left[ \begin{array}{ccc}
\frac{ \sigma^2}{2m_{0} g^2}\left[ \frac{\tau_{\kappa}^2 g^2 + m_{0} \alpha (\tau_{\kappa}+\tau_{\eta}) }{\tau_{\eta}^2 g^2 + m_{0} \alpha (\tau_{\kappa}+\tau_{\eta})}\right] & \frac{\alpha \sigma^2(\tau_{\kappa}+\tau_{\eta}) }{2g(\tau_{\eta}^2 g^2+m_{0} \alpha(\tau_{\kappa}+\tau_{\eta})} 
& \frac{\alpha \sigma(\tau_{\kappa}+\tau_{\eta})}{2(\tau_{\eta}^2 g^2 + m_{0} \alpha(\tau_{\kappa}+\tau_{\eta}))} \\ 
\frac{\alpha \sigma^2(\tau_{\kappa}+\tau_{\eta}) }{2g(\tau_{\eta}^2 g^2+m_{0} \alpha(\tau_{\kappa}+\tau_{\eta})} 
& \frac{\alpha \sigma^2 (\tau_{\kappa}+\tau_{\eta})}{2(\tau_{\eta}^2 g^2 +m_{0} \alpha(\tau_{\kappa}+\tau_{\eta}))} & \frac{\tau_{\eta} \alpha \sigma g}{2  (\tau_{\eta}^2 g^2+ m_{0} \alpha(\tau_{\kappa}+\tau_{\eta}))} \\ 
 \frac{\alpha \sigma(\tau_{\kappa}+\tau_{\eta})}{2(\tau_{\eta}^2 g^2 + m_{0} \alpha(\tau_{\kappa}+\tau_{\eta}))}
 &  \frac{\tau_{\eta} \alpha \sigma g}{2  (\tau_{\eta}^2 g^2 + m_{0} \alpha(\tau_{\kappa}+\tau_{\eta}))} & \frac{\alpha}{2 \tau_{\eta}}
 \end{array} \right].\end{equation} The result then follows from Corollary \ref{model3}. 
\end{proof}

\begin{remark} \label{imp_rmk}
Note that here the matrix $\vecc{\gamma}$ in \eqref{why} is not symmetric and the smallest eigenvalue of its symmetric part can be negative. Moreover, the initial condition $\vecc{v}^\epsilon_0$ depends on $\epsilon$ through the component $\eta^\epsilon_0$ (which is a zero mean Gaussian random variable with variance $\alpha/2\epsilon$). Thus, we cannot apply the main results in \cite{hottovy2015smoluchowski}  to obtain the convergence result. This is our main motivation to revisit the Smoluchowski-Kramers limit of SDEs in Section III under a weakened spectral assumption on the matrix $\vecc{\gamma}$ (or $\vecc{\hat{\gamma}}$ in the multidimensional case) and a relaxed assumption concerning the $\epsilon$ dependence of $\vecc{v}^\epsilon_0$ (or $\hat{\vecc{v}}^\epsilon_0$ in the multidimensional case).  
\end{remark}

\begin{remark} \label{fdt_impcase} In the important case when the fluctuation-dissipation relation (i.e. $\tau_{\kappa} = \tau_{\eta}$, $h = g$ and $g$ is proportional to  $\sigma$) holds for the one-dimensional models of the first sub-class, the correction drift terms $S^{(2)}$ and $S^{(3)}$ cancel each other and the resulting (single) noise-induced drift term coincides with that obtained in the limit as $m \to 0$ of the systems with no memory, driven by a white noise to which Theorem \ref{skthm} applies directly! However, when the relation fails, we obtain three different drift corrections induced by vanishing of all time scales. Again, the presence of these correction terms may have significant consequences for the dynamics of the systems (see Section \ref{sec:thermophoresis}).
\end{remark}

%\begin{remark} In this paper, we only consider homogenization of the SDEs \eqref{goal2}-\eqref{rescaled-ou} in the limit as all the characteristic time scales vanish at the same rate, i.e. when $\mu = \theta = \nu$. One can also perform similar analysis for the case when these time scales vanish at different (relative) rates. For instance, when $\mu < \theta = \nu$, one expect to obtain (in the one-dimensional case), in the limit $\epsilon \to 0$, the homogenized equation \eqref{onedlimitSDE} with    \end{remark}

\subsection{SIDEs Driven by a Non-Markovian Colored Noise}

The following corollary provides a homogenized SDE for the particle position in the limit, in which the inertial time scale, the memory time scale and the noise correlation time scale vanish at the same rate in the case when the pre-limit dynamics are driven by the  harmonic noise.  We emphasize that in this case the original system is driven by a noise which is not a Markov process. 

\begin{corollary} \label{model4}
Let $d=d_1=d_2=q_1=q_2=q=r$.  We set, in the SDEs \eqref{sdec1}-\eqref{sdec6}:   $\tau_{\xi} = \tau_{h}$, $\vecc{W}_t^{(q_2)} = \vecc{W}^{(d)}_t = \vecc{W}_t$ and
\begin{equation}
\vecc{\Gamma}_2 =  
\begin{bmatrix} 
\vecc{0} & -\vecc{I} \\
\vecc{\Omega}^2 & \vecc{\Omega}^2 
\end{bmatrix}, \ \ \ 
\vecc{\Gamma}_1 = \frac{1}{2}
\begin{bmatrix} 
\vecc{\Omega}^2 & 4\vecc{I}-\vecc{\Omega}^2 \\
-\vecc{\Omega}^2 & \vecc{\Omega}^2 
\end{bmatrix} =: \vecc{T} \vecc{\Gamma}_2 \vecc{T}^{-1}, \end{equation}
\begin{equation}
\vecc{M}_2  =  
\frac{1}{2} \begin{bmatrix} 
\vecc{I} & \vecc{0} \\
\vecc{0} & \vecc{\Omega}^2 
\end{bmatrix}, \ \ \ \vecc{M}_1 = 2\vecc{T} \vecc{M}_2 \vecc{T}^*, \end{equation}

\begin{equation} \vecc{C}_2 = [\vecc{I} \ \  \vecc{0}], \ \ \vecc{C}_1 = \vecc{C}_2 \vecc{T}^{-1},  \end{equation}
%\ \ \vecc{y}_t = (\vecc{y}_t^{(1)}, \vecc{y}_t^{(2)}), \ \ \vecc{z}_t = (\vecc{z}_t^{(1)},\vecc{z}_t^{(2)}), \ \ \vecc{\zeta}_t = (\vecc{w}_t^{(1)} , \vecc{w}_t^{(2)}),
to obtain SDEs equivalent to equations \eqref{goal3}-\eqref{rescaled_h2} with $\mu = \theta = \nu = 1$.  Let $\vecc{x}^\epsilon_{t} \in \RR^{d}$ be the solution to the SDEs \eqref{sdec1}-\eqref{sdec6}, with the matrices $\vecc{g}(\vecc{x})$ and $\vecc{h}(\vecc{x})$ positive definite for every $\vecc{x} \in \RR^d$. Moreover, $\vecc{g}(\vecc{x})$, $\vecc{h}(\vecc{x})$ and the diagonal matrix $\vecc{\Omega}^2$ are commuting. Suppose that Assumptions \ref{ass1}-\ref{ass4} are satisfied.

Then as $\epsilon \to 0$, the process $\vecc{x}^\epsilon_{t}$ converges to the solution, $\vecc{X}_{t}$, of the following It\^o SDE
\begin{equation} \label{limitSDE}
d\vecc{X}_{t} = \vecc{S}(\vecc{X}_{t}) dt + (\vecc{g}\vecc{h})^{-1}(\vecc{X}_{t})\vecc{F}(\vecc{X}_{t}) dt + (\vecc{g} \vecc{h})^{-1}(\vecc{X}_{t}) \vecc{\sigma}(\vecc{X}_{t}) d\vecc{W}_{t},
\end{equation}
with $\vecc{S}(\vecc{X}_{t}) = \vecc{S}^{(1)}(\vecc{X}_{t}) + \vecc{S}^{(2)}(\vecc{X}_{t}) + \vecc{S}^{(3)}(\vecc{X}_{t}),$ where the $\vecc{S}^{(k)}$ are the noise-induced drift terms whose $i$th components are given by the expressions
\begin{align}
S^{(1)}_{i}(\vecc{X})  &= m_{0} \frac{\partial}{\partial X_{l}}[((\vecc{g}\vecc{h})^{-1})_{ij}(\vecc{X})] (\vecc{J}_{11})_{jl}(\vecc{X}), \label{hnid1} \\ 
S^{(2)}_{i}(\vecc{X})  &= -\tau_{\kappa}\left( \frac{\partial}{\partial X_{l}}[( \vecc{h}^{-1})_{ij}(\vecc{X})]   (\vecc{J}_{21})_{jl}(\vecc{X}) + \frac{\partial}{\partial X_{l}}[( \vecc{h}^{-1} (\vecc{I}-2\vecc{\Omega}^{-2}))_{ij}(\vecc{X})] (\vecc{J}_{31})_{jl}(\vecc{X}) \right), \label{hnid2} \\ 
S^{(3)}_{i}(\vecc{X}) &= \tau_{h} \left(\frac{\partial}{\partial X_{l}}[((\vecc{g}\vecc{h})^{-1} \vecc{\sigma})_{ij}(\vecc{X})] (\vecc{J}_{41})_{jl}(\vecc{X}) +  \frac{\partial}{\partial X_{l}}[((\vecc{g}\vecc{h})^{-1} \vecc{\sigma} \vecc{\Omega}^{-2} )_{ij}(\vecc{X})] (\vecc{J}_{51})_{jl}(\vecc{X}) \right), \ \  \label{hnid3}
\end{align} 
where $i,j,l = 1, \dots, d$.
In the above,
\begin{equation} \vecc{\hat{J}} := \begin{bmatrix}
    \vecc{J}_{11}       & \dots & \vecc{J}_{15} \\
    \vdots       & \ddots & \vdots \\
    \vecc{J}_{51}       & \dots & \vecc{J}_{55} 
\end{bmatrix} \in \RR^{5d \times 5d},  \ \text{ with } \vecc{J}_{kl} \in \RR^{d \times d}, \ \ k,l = 1,\dots,5,\end{equation} is the block matrix solving the Lyapunov equation \begin{equation} \vecc{\hat{J}} \vecc{\hat{\gamma}}^{*} + \vecc{\hat{\gamma}} \vecc{\hat{J}} = \vecc{\hat{\sigma}} \vecc{\hat{\sigma}}^{*}, \end{equation}
where 
\begin{equation} \label{hcase2}
\boldsymbol{\hat{\gamma}} = \left[ \begin{array}{ccccc}
\vecc{0}  & \frac{\vecc{g}(\vecc{X}) }{m_{0}} & \frac{\vecc{g}(\vecc{X})}{m_{0}} & -\frac{\vecc{\sigma}(\vecc{X}) }{m_{0}} & \vecc{0}  \\
-\frac{\vecc{h}(\vecc{X}) }{\tau_{\kappa}} & \frac{\vecc{\Omega}^2}{2 \tau_{\kappa}} & \frac{2 \vecc{I}}{\tau_{\kappa}} - \frac{\vecc{\Omega}^2}{2 \tau_{\kappa}} & \vecc{0}  & \vecc{0}  \\ 
\vecc{0}  & -\frac{\vecc{\Omega}^2}{2 \tau_{\kappa}} & \frac{\vecc{\Omega}^2}{2 \tau_{\kappa}} & \vecc{0}  & \vecc{0}  \\ 
\vecc{0}  & \vecc{0}  & \vecc{0}  & \vecc{0}  & -\frac{1}{\tau_{h}} \vecc{I} \\
\vecc{0}  & \vecc{0}  & \vecc{0}  & \frac{\vecc{\Omega}^2}{\tau_{h}} & \frac{\vecc{\Omega}^2}{\tau_{h}}   \end{array} \right], \ \ \ \ 
 \vecc{\hat{\sigma}} = 
         \left[ \begin{array}{c}
\vecc{0}   \\
\vecc{0}   \\
\vecc{0}  \\
\vecc{0}   \\
\frac{\vecc{\Omega}^2}{\tau_{h}}  \end{array} \right]. \end{equation}
In the above,  $\vecc{\hat{\gamma}} \in \RR^{5d \times 5d}$ is a 5 by 5 block matrix with each block an $\RR^{d \times d}$-valued matrix, $ \vecc{\hat{\sigma}} \in \RR^{5d \times d}$ is a 5 by 1 block matrix  with each block a $\RR^{d \times d}$-valued matrix, $\vecc{I}$ is a $d \times d$ identity matrix,  $\vecc{0}$ in $\vecc{\hat{\gamma}}$ and $\vecc{\hat{\sigma}}$  is a $d \times d$ zero matrix,  and $\vecc{W}$ is a $d$-dimensional Wiener process. 
The convergence is obtained in the same sense as in Theorem \ref{general_result}. 
\end{corollary}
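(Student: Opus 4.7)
The plan is to apply Theorem \ref{general_result} to the specific choice of matrices prescribed in the statement, by first checking that these matrices give a legitimate stochastic realization of the harmonic noise SIDE and then verifying all the remaining hypotheses of the theorem.

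First, I would verify that the chosen triple $(\vecc{\Gamma}_2, \vecc{M}_2, \vecc{C}_2)$ realizes the harmonic noise. Setting $\vecc{\beta}^\epsilon_t = (\vecc{h}^\epsilon_t, \vecc{u}^\epsilon_t)^\top$ in \eqref{realize}, direct block computation recovers the SDE system \eqref{rescaled_h1}-\eqref{rescaled_h2}, while $\vecc{C}_2 \vecc{\beta}^\epsilon_t = \vecc{h}^\epsilon_t$, and $\vecc{\Sigma}_2$ can be chosen so that the Lyapunov equation produces the stated $\vecc{M}_2$. For the memory triple $(\vecc{\Gamma}_1, \vecc{M}_1, \vecc{C}_1)$, the relations $\vecc{\Gamma}_1 = \vecc{T}\vecc{\Gamma}_2 \vecc{T}^{-1}$, $\vecc{M}_1 = 2\vecc{T}\vecc{M}_2\vecc{T}^*$, $\vecc{C}_1 = \vecc{C}_2 \vecc{T}^{-1}$ are (up to the factor $2$) an instance of \eqref{transf_realize}, and I would verify directly that $\vecc{\kappa}(t) = \vecc{C}_1 e^{-\vecc{\Gamma}_1|t|}\vecc{M}_1\vecc{C}_1^*$ coincides with $\vecc{\kappa}_2(t)$ in \eqref{harmonic_memory} by computing the matrix exponential on the $2$-dimensional invariant subspaces indexed by the entries of $\vecc{\Omega}^2$, separating the oscillatory ($|\omega_{ii}|<2$) from the overdamped ($|\omega_{ii}|>2$) regime.

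Next, I would verify the hypotheses of Theorem \ref{general_result}. Assumptions \ref{ass1}-\ref{ass4} are inherited directly. Invertibility of the effective constants follows from a block computation showing $\vecc{K}_2 = \tfrac{1}{2}\vecc{I}$ and $\vecc{K}_1 = \vecc{I}$. For the invertibility condition \eqref{inv_cond}, I would compute the Laplace transform $\tilde{\vecc{\kappa}}(z) = \vecc{C}_1(z\vecc{I}+\vecc{\Gamma}_1)^{-1}\vecc{M}_1\vecc{C}_1^*$ as a rational function of $z$ and $\vecc{\Omega}^2$. Because $\vecc{\Omega}^2$ is diagonal and $\vecc{g}(\vecc{x}), \vecc{h}(\vecc{x})$ are positive definite and commute with it, all three can be simultaneously diagonalized. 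Hence $\vecc{B}_\lambda(\vecc{x}) = \vecc{I} + \vecc{g}(\vecc{x})\tilde{\vecc{\kappa}}(\lambda\tau_\kappa)\vecc{h}(\vecc{x})/\lambda m_0$ is simultaneously diagonalizable, and I would show that each scalar eigenvalue is nonzero for $\Re(\lambda)>0$ by a direct argument on the corresponding scalar characteristic polynomial, analogous to the argument used in Corollary \ref{model3}.

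Finally, I would substitute the concrete matrices into the general formulas \eqref{gen_nid1}-\eqref{gen_nid3} and the Lyapunov system \eqref{gen_system}-\eqref{gen_end} of Theorem \ref{general_result}. Using $\vecc{\theta} = \vecc{g}\vecc{K}_1\vecc{h} = \vecc{g}\vecc{h}$ together with the explicit forms of $\vecc{\Gamma}_i^{-1}$ and the block structure of $\vecc{C}_1, \vecc{C}_2$, the three noise-induced drift contributions collapse to \eqref{hnid1}-\eqref{hnid3}. The main obstacle is the bookkeeping: the $2d$-dimensional auxiliary variables $\vecc{y}^\epsilon_t$ and $\vecc{\beta}^\epsilon_t$ must be unfolded into pairs of $d$-dimensional blocks so that the $3\times 3$ block form of $\vecc{\hat{\gamma}}$ in the proof of Theorem \ref{general_result} is recast as the $5\times 5$ block matrix in \eqref{hcase2}. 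Once this is done, reassembling the corresponding Lyapunov equation $\vecc{\hat{J}}\vecc{\hat{\gamma}}^* + \vecc{\hat{\gamma}}\vecc{\hat{J}} = \vecc{\hat{\sigma}}\vecc{\hat{\sigma}}^*$ is routine, and the convergence claim follows immediately from Theorem \ref{general_result}.
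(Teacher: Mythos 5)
Your proposal is correct and follows exactly the route the paper intends: the paper omits this proof, stating it is essentially identical to that of Corollary \ref{model3}, and your plan (verify the stochastic realization, check $\vecc{K}_1=\vecc{I}$, $\vecc{K}_2=\tfrac{1}{2}\vecc{I}$, establish positive stability of $\vecc{\hat{\gamma}}$ via simultaneous diagonalization of the commuting matrices, then specialize Theorem \ref{general_result}) fills in precisely those steps. The only point worth making explicit is that the scalar reduction of the invertibility/stability condition now yields a cubic characteristic polynomial rather than a quadratic, so one needs a Routh--Hurwitz-type check on its coefficients, which indeed succeeds.
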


Note that the oscillatory nature of covariance function of the harmonic noise in the pre-limit SIDE makes the noise-induced drift in the resulting limiting SDE more complicated (there are more terms) compared to the case of OU process in the first sub-class. Therefore, we write the system of matrix equations that the $\vecc{J}_{kl}$ satisfy in the form of a matrix Lyapunov equation in Corollary \ref{model4}, without breaking it up into equations for individual blocks. This could of course be done, leading to a (more complicated) analog of \eqref{subclass1_start}-\eqref{subclass1_end}. The proof of Corollary \ref{model4}  is essentially identical to the proof of Corollary \ref{model3}, so we omit it.  

Again, for special one-dimensional systems, we are going to make the result more explicit. 

\begin{corollary} \label{h_1dcase}
In the one-dimensional case,  we drop the boldface and write $\vecc{X}_{t} := X_{t} \in \RR, \ \vecc{g}(\vecc{x}) := g(x),$ with $g: \RR \to \RR$, etc.. We assume that $h=g$ and $\vecc{\Omega} := \Omega $ is a real constant. The homogenized equation is given by:
\begin{equation} \label{onedlimitSDE}
dX_{t} = S(X_{t}) dt + g^{-2}(X_{t}) F(X_{t}) dt + g^{-2}(X_{t}) \sigma(X_{t}) dW_{t},
\end{equation}
with $S(X_{t}) = S^{(1)}(X_{t}) + S^{(2)}(X_{t}) + S^{(3)}(X_{t}),$ where the noise-induced drift terms $S^{(k)}(X_{t})$ have the following explicit expressions (computed using Mathematica$\textsuperscript{\textregistered}$) that depend on the parameters $m_{0}, \tau_{\kappa}$ and $\tau_{h}$:
\begin{align} 
S^{(1)}(X) &= m_{0}\left(\frac{1}{g^2(X)}\right)'J_{11}(X),\\ 
S^{(2)}(X) &= -\tau_{\kappa}\left(\frac{1}{g(X)}\right)' \left(J_{21}(X)+\left(1-\frac{2}{\Omega^2} \right) J_{31}(X) \right), \\
S^{(3)}(X) &= \tau_{h} \left( \frac{\sigma(X)}{g^2(X)}\right)' \left(J_{41}(X)+\frac{1}{\Omega^2} J_{51}(X) \right),   
\end{align}
where  the prime $'$ denotes derivative with respect to $X$ and the $J_{kl}(X)$ are given by:
\begin{align}
J_{11}(X) &= \frac{\sigma^2}{2m_{0} g^2 R(X)} \bigg( g^4 \tau_{\kappa}^4(\tau_{\kappa}^2+\tau_{\kappa} \tau_{h}\Omega^2 + \tau_{h}^2 \Omega^2) + m_{0}^2 \Omega^4 (\tau_{\kappa}+\tau_{h})^2 (\tau_{\kappa}^2 + \tau_{h}^2 \nonumber \\
&\ \ \ \ \ \ + \tau_{\kappa} \tau_{h} (\Omega^2-2)) + m_{0} \Omega^2 g^2 (\tau_{\kappa} + \tau_{h}) [\tau_{h}^4 + \tau_{\kappa}^2\tau_{h}^2 (\Omega^2-2)+\tau_{\kappa}^4(\Omega^2-1) \nonumber \\ 
&\ \ \ \ \ \ + \tau_{\kappa}^3 \tau_{h}(2-3\Omega^2+\Omega^4)] \bigg) \\
J_{21}(X) &= \frac{\sigma^2(\tau_{\kappa} + \tau_{h}) \Omega^2}{4g R(X)} \bigg( m_{0}\Omega^2 (\tau_{\kappa} + \tau_{h})(\tau_{\kappa}^2 + \tau_{h}^2 + \tau_{\kappa} \tau_{h} (\Omega^2-2)) \nonumber \\ 
&\ \ \ + g^2(\tau_{\kappa}^4+\tau_{\kappa}^2 \tau_{h}^2 + \tau_{h}^4+\tau_{\kappa}^3 \tau_{h}(\Omega^2-1))   \bigg),  \\
J_{31}(X) &= -\frac{\sigma^2(\tau_{\kappa} + \tau_{h}) \Omega^2}{4g R(X)} \bigg(-m_{0}\Omega^2 (\tau_{\kappa} + \tau_{h})(\tau_{\kappa}^2 + \tau_{h}^2 + \tau_{\kappa} \tau_{h} (\Omega^2-2))  \nonumber \\ 
&\ \ \ + g^2(\tau_{\kappa}^4+\tau_{\kappa}^2 \tau_{h}^2 - \tau_{h}^4+\tau_{\kappa}^3 \tau_{h}(\Omega^2-1))   \bigg), \\
J_{41}(X) &= \frac{1}{2} \bigg(\sigma \Omega^2 (\tau_{\kappa}+\tau_{h}) [g^2 \tau_{h}^4 + m_{0} \Omega^2 (\tau_{\kappa}+\tau_{h})(\tau_{\kappa}^2+\tau_{h}^2+\tau_{\kappa}\tau_{h}(\Omega^2-2))] \bigg), \\ 
J_{51}(X) &= -\frac{1}{2} \bigg(\sigma \Omega^2 (\tau_{\kappa}+\tau_{h}) [m_{0} \Omega^2 (\tau_{\kappa}+\tau_{h})(\tau_{\kappa}^2+\tau_{h}^2+\tau_{\kappa}\tau_{h}(\Omega^2-2)) \nonumber \\ 
&\ \ \ \ \ -g^2\tau_{\kappa}\tau_{h}^2(\tau_{\kappa}+\tau_{h}(\Omega^2-1))] \bigg),
\end{align}
where $g=g(X)$, $\sigma = \sigma(X)$ and
\begin{align} 
R(X) &= g^4 \tau_{h}^4 (\tau_{\kappa}^2 + \tau_{\kappa} \tau_{h} \Omega^2+\tau_{h}^2 \Omega^2) + m_{0}^2 \Omega^4 (\tau_{\kappa}+\tau_{h})^2(\tau_{\kappa}^2+\tau_{h}^2+\tau_{\kappa} \tau_{h} (\Omega^2-2)) \nonumber \\ 
&\ \ \ \ +g^2m_{0}\tau_{h}^2 \Omega^2[\tau_{h}^3\Omega^2+\tau_{\kappa}^3(\Omega^2-2)+\tau_{\kappa}^2 \tau_{h}\Omega^2(\Omega^2-2)+\tau_{\kappa}\tau_{h}^2(2-2\Omega^2+\Omega^4)]. 
\end{align}
\end{corollary}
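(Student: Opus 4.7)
The plan is to obtain the statement as a direct specialization of Corollary \ref{model4} to the scalar setting $d=1$. In this setting the matrices $g(X)$, $h(X)$ and $\Omega^2$ are all real numbers, so the commuting hypothesis is trivially satisfied, and the positive definiteness condition of Corollary \ref{model4} becomes the assumption that $g(X)$ is a strictly positive scalar function (which, together with the regularity of $F$, $g$, $\sigma$, follows from Assumptions \ref{ass1}--\ref{ass4}). With $h=g$, the factor $(\vecc{g}\vecc{h})^{-1}$ reduces to $1/g^2$, and the partial derivatives $\partial/\partial X_l$ collapse to an ordinary derivative $(\cdot)'$, turning \eqref{hnid1}--\eqref{hnid3} into the scalar expressions
\begin{align*}
S^{(1)} &= m_0 \bigl(1/g^2\bigr)' J_{11},\\
S^{(2)} &= -\tau_\kappa \bigl(1/g\bigr)' \bigl(J_{21} + (1-2/\Omega^2) J_{31}\bigr),\\
S^{(3)} &= \tau_h \bigl(\sigma/g^2\bigr)' \bigl(J_{41} + \Omega^{-2} J_{51}\bigr),
\end{align*}
which is exactly the structure of the corollary. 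The convergence of $x^\epsilon_t$ to $X_t$ in the stated sense is then inherited verbatim from Corollary \ref{model4}.

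The remaining task is to compute the relevant entries of $\hat{\vecc{J}}$. I would write down $\hat{\vecc{\gamma}}$ and $\hat{\vecc{\sigma}}$ from \eqref{hcase2} with $d=1$, so that $\hat{\vecc{\gamma}}$ is an explicit $5 \times 5$ real matrix whose coefficients are rational functions of $g(X)$, $m_0$, $\tau_\kappa$, $\tau_h$ and $\Omega$, and $\hat{\vecc{\sigma}}$ is the $5 \times 1$ vector with only its last entry $\Omega^2/\tau_h$ nonzero. Positive stability of $\hat{\vecc{\gamma}}$, established inside the proof of Corollary \ref{model4}, guarantees that the Lyapunov equation $\hat{\vecc{J}}\hat{\vecc{\gamma}}^*+\hat{\vecc{\gamma}}\hat{\vecc{J}} = \hat{\vecc{\sigma}}\hat{\vecc{\sigma}}^*$ has a unique symmetric solution. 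Since only the first column $(J_{11},J_{21},J_{31},J_{41},J_{51})^*$ of $\hat{\vecc{J}}$ enters the drift formulas, I would extract the corresponding subset of scalar equations and solve the resulting linear system in these five unknowns.

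The main obstacle is purely the algebraic complexity of this linear system: the symmetric Lyapunov equation has $15$ independent scalar equations coupling all $15$ distinct entries of $\hat{\vecc{J}}$, and one cannot cleanly decouple the five entries of interest without propagating the full solution. I would therefore carry out the solve with a computer algebra system (as indicated in the statement, Mathematica) and simplify the resulting rational expressions to obtain the common denominator $R(X)$ and the numerators displayed in the corollary. Verifying the formulas \emph{a posteriori} is straightforward: one substitutes the given $J_{k1}$ back into the five scalar equations corresponding to the first column of the Lyapunov identity and checks that both sides agree as rational functions in $(g,\sigma,m_0,\tau_\kappa,\tau_h,\Omega)$. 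Once the $J_{k1}$ are in hand, plugging them into the scalar formulas for $S^{(1)}$, $S^{(2)}$, $S^{(3)}$ completes the proof.
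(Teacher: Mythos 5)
Your proposal is correct and follows essentially the same route as the paper: Corollary \ref{h_1dcase} is obtained by specializing Corollary \ref{model4} to $d=1$ (where the commutativity hypothesis is vacuous and positivity of $g$ is inherited as an assumption), and the entries $J_{k1}$ are found by solving the resulting $5\times 5$ Lyapunov equation with Mathematica, exactly as the paper indicates and as it does explicitly for the analogous Corollary \ref{1dcase}. The a posteriori check by back-substitution into the first-column Lyapunov identities is a sensible addition but not a departure from the paper's argument.
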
 

Note that if we send $\Omega \to \infty$ in the expressions for the $S^{(i)}(X)$ $(i=1,2,3)$ above, we recover the corresponding expressions given in Corollary \ref{1dcase} (with $\alpha =1$). This is not surprising, since in this limit the harmonic noise becomes an OU process (with $\alpha = 1)$. 

Moreover, when $\tau_{\kappa} = \tau_{h} = \tau$, the noise-induced drift becomes $S(X) = S^{(1)}(X)+S^{(2)}(X)+S^{(3)}(X),$ where
\begin{align} 
S^{(1)} &= \frac{1}{2}\left(\frac{1}{g^2}\right)'\frac{\sigma^2}{g^2},\\ 
S^{(2)} &= -\frac{2\tau \Omega^2 \sigma^2}{g} \left(\frac{1}{g}\right)' \left(\frac{g^2 \tau + m_{0}\Omega^2(\Omega^2-1)}{4 m_{0}^2 \Omega^6+2 g^2 m_{0} \tau \Omega^4(\Omega^2-1)+g^4 \tau^2 (1+2\Omega^2)} \right), \\
S^{(3)} &=  2 \tau \Omega^2 \sigma \left( \frac{\sigma}{g^2}\right)' \left(\frac{g^2 \tau + m_{0}\Omega^2(\Omega^2-1)}{4 m_{0}^2 \Omega^6+2 g^2 m_{0} \tau \Omega^4(\Omega^2-1)+g^4 \tau^2 (1+2\Omega^2)} \right). 
\end{align}
Again, in the case when the fluctuation-dissipation relation holds we see that the noise-induced drift coincides with that obtained in the limit as $m \to 0$ of the Markovian model in Section III. 

%\begin{remark}  \label{conjec} The above calculation, together with the similar observation made for the previous model, suggests the following interesting question: is it true that, as long as the fluctuation-dissipation relation holds, one obtains the same noise-induced drift term in the limit, regardless of the choice of the colored noise used to model the environment? If the colored noise satisfies a linear system of SDEs, then  it is natural to conjecture that the answer to the question is affirmative but otherwise we do not have an intuitive guess for the answer. \end{remark}

\section{Application to the Study of Thermophoresis} \label{sec:thermophoresis}
\subsection{Introduction}
We revisit the dynamics of a free Brownian particle immersed in a heat bath where a temperature gradient is present. This was previously studied in \cite{Hottovy2012a}. It was found there that the particle experiences a drift in response to the temperature gradient, due to the interplay between the inertial time scale and the noise correlation time scale. Such phenomenon is called {\it thermophoresis}. We refer to \cite{Hottovy2012a,piazza2008thermophoresis} and the references therein for further descriptions of this phenomenon, including references to experiments.

Here, we will study the dynamics of the particle in a non-equilibrium heat bath, where a {\it generalized fluctuation-dissipation relation} holds, in which  both the diffusion coefficient and the temperature of the heat bath vary with the position. In contrast to \cite{Hottovy2012a}, we take into account also the memory time scale (in addition to the inertial time scale and the noise correlation time scale) and model the position of the particle as the solution to a SIDE of the form \eqref{genle_general}.  Unlike the model used in \cite{Hottovy2012a}, the model can be derived heuristically from microscopic dynamics by an argument very similar to that of Appendix \ref{appA}.

For a spherical particle of radius $R$ immersed in a fluid of viscosity $\mu$, which in general is a function of the temperature $T = T(x)$ (and thus depends on $x$ as well), the friction (or damping) coefficient $\gamma$ satisfies the Stokes law \cite{toda2012statistical}:
\begin{equation} \label{stokes}
\gamma(x) = 6 \pi \mu(T) R.
\end{equation}
On the other hand, the damping coefficient $\gamma(x)$ and the noise coefficient $\sigma(x)$ are expressed in terms of the diffusion coefficient $D(x)$ and the temperature $T(x)$ as follows: 
\begin{equation} \label{coeff_fd}
\gamma(x) = \frac{k_{B}T(x)}{D(x)}, \ \ \sigma(x) = \frac{k_{B}T(x) \sqrt{2}}{\sqrt{D(x)}}.\end{equation}

%so that the following generalized fluctuation-dissipation relation between damping and noise is satisfied:
%\begin{equation} \label{gfdt}  \sigma^2(x_{t}) = 2k_{B}%T(x_{t})\gamma(x_{t}) \end{equation} 
%for all $t \geq 0$.  Note that all the coefficients and the temperature are state-dependent --- this is why we call the relation ``generalized''.  

In the following, we study two one-dimensional non-Markovian models of thermophoresis. The first model is driven by a Markovian colored noise and the second model by a non-Markovian one.

\subsection{A Thermophoresis Model with Ornstein-Uhlenbeck Noise}
In this section we model evolution of the position, $x_{t} \in \RR$, of a particle by the following SIDE: 
\begin{equation} \label{thermo}
m \ddot{x}_{t} =  - \sqrt{\gamma(x_{t})} \int_{0}^{t} \alpha e^{-\alpha(t-s)} \sqrt{\gamma(x_{s})} \dot{x}_{s} ds +  \sigma(x_{t}) \eta_{t},
\end{equation}
where $\eta_{t}$ is a stationary process, satisfying the SDE:
\begin{equation} \label{outhermo}
d\eta_{t} = -\alpha \eta_{t} dt + \alpha dW_{t}.
\end{equation}
The above equations are obtained by setting $d=1$, $\vecc{F} = 0$, $\vecc{h} = \vecc{g} = g := \sqrt{\gamma}$, $\vecc{\sigma} = \sigma$ in $\eqref{side2}$ and $\vecc{A} = \alpha$ in $\eqref{ou}$, where $\gamma$ and $\sigma$ are given by \eqref{coeff_fd}. Note that the noise correlation function is proportional to the memory kernel in the SIDE \eqref{thermo}, i.e. 
\begin{equation} E[\eta_{t} \eta_{s}] = \frac{\alpha}{2} e^{-\alpha|t-s|} = \frac{1}{2}\kappa_{1}(t-s), \ s,t \geq 0\end{equation} 
as in \eqref{fdt_sc1}. Together with  \eqref{coeff_fd}, this implies that \eqref{thermo} satisfies the generalized fluctuation-dissipation relation (see the statement of Corollary \ref{gen_fdt} and Remark \ref{fdt_impcase}).  Note also that $g$ is a constant multiple of $\sigma$ if and only if $T$ is position-independent. 

%\begin{remark}  Heuristically, when the three time scales (i.e. the inertial time scale, the memory time scale and the noise correlation time scale) we are considering are small, the position variable changes slowly compared to the velocity and our assumptions imply an approximate fluctuation-dissipation relation: 
%\begin{equation}
%E[\sigma(x_{t})\eta_{t}^{0}(\sigma(x_{s})\eta_{s}^{0})] \approx \frac{\sigma^2(x)}{2} \alpha e^{-\alpha|t-s|}   = k_{B}T(x) \gamma(x) \kappa_{1}(|t-s|). \\
%\end{equation}
%\end{remark}

We now consider the effective dynamics of the particle in the limit when all the three characteristic time scales vanish at the same rate. In the following, the prime $'$ denotes derivative with respect to the argument of the function.

\begin{corollary} \label{cor1_thermo}
Let $\epsilon > 0$ be a small parameter and let the particle's position, $x^\epsilon_t \in \RR$ ($t \geq 0$), satisfy the following rescaled version of \eqref{thermo}-\eqref{outhermo}:
\begin{align}
dx^\epsilon_t &= v^\epsilon_t dt, \label{rescaled_m1_thermo0} \\
m_0 \epsilon d v^\epsilon_{t} &= \sigma(x^\epsilon_{t}) \eta^\epsilon_{t} dt - \sqrt{\gamma(x^\epsilon_{t})} \left( \int_{0}^{t} \frac{\alpha}{\tau \epsilon} e^{-\frac{\alpha}{\tau \epsilon}\left(t-s\right)} \sqrt{\gamma(x^\epsilon_{s})} v^\epsilon_{s} ds \right) dt, \label{rescaled_m1_thermo1} \\
\tau \epsilon d\eta^\epsilon_t &= -\alpha \eta^\epsilon_t dt + \alpha dW_t, \label{rescaled_m1_thermo2}
\end{align}
where $m_0$, $\alpha$, $\tau$ are positive constants, and $(W_t)$ is a one-dimensional Wiener process. The initial conditions are random variables $x^\epsilon_0 = x$, $v^\epsilon_0 = v$, independent of $\epsilon$ and (statistically) independent of $(W_t)$, and $\eta^\epsilon_0$ is distributed according to the invariant distribution of the SDE \eqref{rescaled_m1_thermo2}. 

Assume that the assumptions of Corollary \ref{model3} are satisfied (in particular, $\gamma(x) > 0$ for every $x \in \RR$). Then, in the limit as $\epsilon \to 0$,  $x^\epsilon_t$ converges (in the same sense as in Corollary \ref{model3}) to the process $X_{t} \in \RR$, satisfying the SDE:
\begin{equation} \label{limitthermo}
dX_{t} = b_{1}(X_{t}) dt + \sqrt{2 D(X_{t})} dW_{t},
\end{equation}
with the noise-induced drift, $b_{1}(X) = S^{(1)}(X) + S^{(2)}(X) + S^{(3)}(X)$, where  
\begin{align}
S^{(1)}(X) &= D'(X)-\frac{D(X)T'(X)}{T(X)},\\
S^{(2)}(X) &= \left[-\frac{k_{B}T(X)D'(X)}{D(X)}+ k_{B}T'(X) \right] \cdot \left[\frac{ \tau D(X)}{\tau k_{B}T(X)+2m_{0}\alpha D(X)} \right], \\
S^{(3)}(X) &= \left[\frac{k_{B}T(X)D'(X)}{D(X)} \right] \cdot \left[\frac{ \tau D(X)}{\tau k_{B}T(X)+2m_{0}\alpha D(X)} \right].
\end{align}
\end{corollary}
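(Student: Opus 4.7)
The plan is to recognize Corollary \ref{cor1_thermo} as a direct specialization of the one-dimensional result in Corollary \ref{1dcase}, and then simply perform the algebraic simplifications that arise from the Stokes-type identifications in \eqref{coeff_fd}. The SIDE \eqref{thermo}--\eqref{outhermo} is precisely the one-dimensional version of \eqref{side2}--\eqref{ou} (first sub-class) with $F \equiv 0$, $g = h = \sqrt{\gamma}$, $\sigma = k_B T \sqrt{2}/\sqrt{D}$, and $A = \alpha$, so after the standard rescaling this fits the framework of Theorem \ref{general_result} and Corollary \ref{model3}. Under the assumption $\gamma(x) > 0$ (equivalently $g(x)>0$ for all $x$) the boundedness/regularity conditions in Assumption \ref{ass1}, the positivity and commutativity assumptions of Corollary \ref{model3}, and the reduced one-dimensional invertibility condition in Corollary \ref{1dcase}, all hold by inspection (in one dimension commutativity is automatic). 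With $\tau_\kappa = \tau_\eta = \tau$, I would invoke Corollary \ref{1dcase} directly.

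The next step is to read off the drift and diffusion of the limiting SDE. Since $g^2 = \gamma = k_B T/D$, one has $g^{-2} = D/(k_B T)$ and therefore
\begin{equation}
g^{-2}(X) \sigma(X) = \frac{D(X)}{k_B T(X)} \cdot \frac{k_B T(X)\sqrt{2}}{\sqrt{D(X)}} = \sqrt{2 D(X)},
\end{equation}
which gives the stated diffusion coefficient in \eqref{limitthermo}. The drift $F$ term is absent since $F\equiv 0$, so the only work left is to transcribe the three noise-induced drift terms $S^{(1)}, S^{(2)}, S^{(3)}$ from equations \eqref{86}--\eqref{88} in Corollary \ref{1dcase} under the current choice of coefficients with $\tau_\kappa = \tau_\eta = \tau$.

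The key simplifications I would carry out, one by one, are as follows. For $S^{(1)}$, the bracketed factor in \eqref{86} collapses to $1$ when $\tau_\kappa = \tau_\eta$, leaving $(1/g^2)' \sigma^2/(2 g^2)$; substituting $g^2 = k_B T/D$ and $\sigma^2 = 2 k_B^2 T^2/D$ and expanding the derivative $(D/(k_BT))'$ yields $S^{(1)}(X) = D'(X) - D(X) T'(X)/T(X)$. For $S^{(2)}$, the common factor $\tau_\kappa(\tau_\kappa+\tau_\eta) = 2\tau^2$ in the numerator and $\tau_\eta^2 g^2 + m_0 \alpha(\tau_\kappa+\tau_\eta) = \tau(\tau g^2 + 2 m_0 \alpha)$ in the denominator of \eqref{87} simplify; using $(1/g)' = -\gamma'/(2\gamma^{3/2})$ and $\gamma' = k_B(T'D - TD')/D^2$ gives the bracketed form in the statement. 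The computation for $S^{(3)}$ is analogous, using $(\sigma/g^2)' = (\sqrt{2D})' = D'/\sqrt{2D}$, which upon simplification produces the third stated expression.

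No genuine obstacle arises: once the identification with Corollary \ref{1dcase} is made, the proof is a verification of an algebraic identity, and the convergence mode inherits from Theorem \ref{general_result}. The only thing requiring care is not making bookkeeping errors in combining $\gamma = k_B T/D$ with $\sigma^2 = 2 k_B^2 T^2/D$, particularly in the denominator $\tau k_B T + 2 m_0 \alpha D$ which arises by multiplying the raw denominator $\tau g^2 + 2 m_0 \alpha$ by $D$. I would conclude by noting, for pedagogical value, the cancellation $S^{(2)}+S^{(3)} = k_B \tau T'(X) D(X)/(\tau k_B T(X) + 2 m_0 \alpha D(X))$, which reflects the generalized fluctuation-dissipation structure noted in Remark \ref{fdt_impcase} except for the failure to have $g \propto \sigma$ when $T$ depends on $x$, consistent with the physical expectation that the residual drift is driven by the temperature gradient $T'(X)$.
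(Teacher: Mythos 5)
Your proposal is correct and follows exactly the paper's own route: the corollary is proved by applying Corollary \ref{1dcase} with $F\equiv 0$, $g=h=\sqrt{\gamma}$, $\sigma=k_BT\sqrt{2}/\sqrt{D}$ and $\tau_\kappa=\tau_\eta=\tau$, and then simplifying \eqref{86}--\eqref{88} via $\gamma=k_BT/D$. Your algebraic simplifications (including $g^{-2}\sigma=\sqrt{2D}$ and the partial cancellation of $S^{(2)}+S^{(3)}$) all check out and match the paper's stated expressions.
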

\begin{proof}
The corollary follows from  Corollary \ref{1dcase}. In particular, the expressions for $S^{(1)}$, $S^{(2)}$ and $S^{(3)}$ follow from applying Corollary \ref{1dcase} to the present system (see \eqref{86}-\eqref{88}). 
\end{proof}

We give some remarks and discussions of the contents of Corollary \ref{cor1_thermo} before we end this subsection.   

\begin{remark}
We see that in this case a part of $S^{(2)}$ cancels $S^{(3)}$ and therefore the noise-induced drift simplifies to:
\begin{equation} \label{thermodrift}
b_{1}(X) = D'(X)-\frac{2m_{0}\alpha D^2(X)}{\tau k_{B}T(X)+2m_{0}\alpha D(X)}\frac{T'(X)}{T(X)}.
\end{equation}
Using the Stokes law \eqref{stokes} which gives \begin{equation}D(X) = \frac{k_{B}}{6 \pi R} \frac{T(X)}{\mu(T)}, \end{equation} 
where $\mu(T) = \mu(T(X))$, we have
\begin{equation} \label{thermodrift2}
b_{1}(X) = k_{B} T'(X) \left( \frac{\tau}{2(\alpha m_{0} + 3 \pi R \tau \mu(T))} - \frac{\mu'(T) T(X)}{6 \pi R \mu^2(T)} \right).
 \end{equation}
Equation  \eqref{thermodrift} gives the thermophoretic drift in the limit when the three characteristic time scales vanish. Since it arises in the absence of an external force acting on the particle, it is a ``spurious drift" caused by the presence of the temperature gradient and the state-dependence of the diffusion coefficient. Compared to eqn. (101) in \cite{hottovy2015smoluchowski}, the drift term derived here contains a correction term due to the temperature profile.   
\end{remark}

\noindent {\bf Discussion.} We discuss some physical implications of the thermophoretic drift given in $\eqref{thermodrift}$. As discussed in \cite{Hottovy2012a}, the sign of $b_{1}(X)$ determines the direction in which
the particle is expected to travel. The particle will eventually reach some boundaries, which can be either absorbing or reflecting. We are going to consider the reflecting boundaries case. The position of the particle reaches a steady-state distribution $\rho_{\infty}(X)$ in the limit $t \to \infty$. Assuming that the particle is confined to the interval $(a,b)$, $a<b$, one can compute the stationary density:
\begin{equation} \label{stat_den}
\rho_{\infty}(X) = C \exp{\left(-\int_{a}^{X} \frac{2\alpha}{r \gamma(y) + 2\alpha} \frac{T'(y)}{T(y)} dy \right)},
\end{equation} 
where in terms of the original parameters of the model, $r := \tau/m_{0} > 0$, and $C$ is a normalizing constant. In particular, in absence of temperature gradient ($T'(y) = 0$), the particle is equally likely to be found anywhere in $(a,b)$, whereas when a temperature gradient is present, the distribution of the particle's position is not uniform. In the limit $r \to \infty$, the particle's position is again distributed uniformly on  $(a,b)$. On the other hand, in the limit $r \to 0$ the stationary density is inversely proportional to the temperature, i.e. $\rho_{\infty}(X) = \tilde{C} T(X)^{-1},$ where $\tilde{C}$ is a normalizing constant. Thus, the particle is more likely to be found in the colder region. In the special case when $D(X)$ is proportional to $T(X)$, so that $\gamma$ is independent of $X$, we have \begin{equation}\rho_{\infty}(X) = \tilde{C} T(X)^{-\frac{2\alpha}{2\alpha+r \gamma}},\end{equation}
where $\tilde{C}$ is a normalizing constant, so the particle is more likely to be found in the colder region, with the likelihood decreasing as $r$ increases.

Next, we are going to study the sign of the thermophoretic drift directly using \eqref{thermodrift} (this is in contrast to the approach in \cite{Hottovy2012a}, where $\mu(T)$ is expanded around a fixed temperature). We find that $b_{1}(X) > 0$ if and only if $r > r_c$ and $r_c$ is the critical ratio of $\tau/m_0$, given by:
\begin{equation}
r_c =  \frac{\alpha}{3 \pi R \mu(T)} \left(\frac{\mu'(T) T(X)}{\mu(T) -\mu'(T)T(X)} \right), \end{equation}
where $\mu(T) = \mu(T(X))$ is obtained from the Stokes law. 
%Thus, there is a critical ratio $r = \tau/m_{0}$, denoted $r_{c}$, at which $b_{1}(X)$ changes sign:  
%\begin{equation}r_{c} =  \frac{\alpha}{3 \pi R \mu(T)} \%left(\frac{\mu'(T) T(X)}{\mu(T) -\mu'(T)T(X)} \right). %\end{equation}
For $r = r_c$, the stationary density \eqref{stat_den}  reduces to: \begin{equation}\rho^{c}_{\infty}(X) = C \frac{\mu(T(X))}{T(X)}, \end{equation} where $C$ is a normalizing constant. 
Importantly, note that the drift does not change sign if $T$ is independent of $X$.

%we will express the thermophoretic drift as a function of $T$ and $\mu(T)$, following \cite{Hottovy2012a}. We write $\mu(T)$ as an expansion around some temperature $T_{0}$ and denote $\Delta T = T(X)-T_{0}$ so that {equation}\frac{d}{dX} \mu(T) = \mu'(T) \Delta T.{equation} Using this and the Stokes law, the thermophoretic drift has the following expression that depends on $\mu$ (which depends on $T=T(X)$) and $T$:  {equation}b_{1}(X) = \frac{\mu(T)k_{B}T'(X)-k_{B}T(X) \mu'(T) \Delta T}{6 \pi R \mu^2(T)} - \frac{ \alpha k_{B}T'(X)}{6 \pi R \mu(T)[\alpha + 3 \pi r R  \mu(T)]}.{equation} 
%Therefore, there is a critical ratio $r = \tau/m_{0}$, denoted $r_{c}$, such that the $b_{1}(X)$ changes sign:
%{equation}r_{c} = \frac{\alpha}{3 \pi R}\left(\frac{T'(X)}{\mu(T) T'(X)-T(X)\mu'(T)\Delta T} - \frac{1}{\mu(T)} \right),{equation} in which case the corresponding stationary density is given by {equation}\rho^{c}_{\infty}(X) = C \frac{\mu(T(X))}{T(X)},{equation} where $C$ is a renormalizing constant. 

Finally, we discuss a special case. When $\mu(T)=\mu_{0} > 0$ is a constant (so that  $\gamma(X)$ is a constant), the thermophoretic drift is given by:
\begin{equation}b_{1}(X) = \frac{k_{B}T'(X)}{6 \pi R \mu_{0}} \left[1 - \frac{\alpha}{\alpha + 3 \pi r R  \mu_{0}} \right].\end{equation} In agreement with the result in \cite{Hottovy2012a}, $b_{1}(X)$ has the same sign as $T'(X)$, leading to 
a flow towards the hotter region. The steady-state density is \begin{equation}\rho_{\infty}(X) = C T(X)^{-\frac{\alpha}{\alpha + 3 \pi r R  \mu_{0}}},\end{equation} where $C$ is a normalizing constant, and the particle is more likely to be found in the colder region for all $r > 0$, even though the  thermophoretic drift actually directs the particle towards the hotter regions.  This effect is in agreement with experiments, and is explained by the presence of reflecting boundary conditions.

%Other special cases such as when $\mu(T)$ is linear or quadratic in $T$ can also be analyzed (cf. \cite{Hottovy2012}).

\subsection{A Thermophoresis Model with Non-Markovian (Harmonic) Noise}

We repeat the analysis of the previous subsection in the case when the colored noise is a harmonic noise. We set $d=1$, $\vecc{F} = 0$, $\vecc{h} = \vecc{g} = g := \sqrt{\gamma}$, $\vecc{\sigma} = \sigma$, $\vecc{\Omega} = \Omega$, $\vecc{\Omega}_0 = \Omega_0 := \Omega \sqrt{1-\Omega^2/4}$, $\vecc{\Omega}_1 = \Omega_1 := \Omega/\sqrt{1-\Omega^2/4}$ (where $|\Omega|<2$) in the SIDE $\eqref{side3}$ and study the effective dynamics of the resulting system as before. The case where $|\Omega|>2$ can be studied analogously. The following result follows from Corollary \ref{h_1dcase}.

\begin{corollary} \label{cor_thermo_2} Let $\epsilon > 0$ be a small parameter and the particle's position, $x^\epsilon_t \in \RR \ (t \geq 0)$, satisfy the following rescaled SDEs:
\begin{align}
dx^\epsilon_t &= v^\epsilon_t dt, \label{rescaled_thermo1} \\
m_0 \epsilon dv^\epsilon_{t} &= -\frac{\sqrt{\gamma(x^\epsilon_{t})}}{\tau \epsilon} \left( \int_{0}^{t} e^{-\frac{\Omega^2}{2\tau \epsilon}\left(t-s\right)}\left[\cos\left(\frac{\Omega_{0}}{\tau \epsilon}(t-s) \right) + \frac{\Omega_{1}}{2} \sin\left(\frac{\Omega_{0}}{\tau \epsilon}(t-s) \right) \right] \sqrt{\gamma(x^\epsilon_s)} v^\epsilon_{s} ds \right) dt \nonumber \\ 
&\ \ \ \ \ + \sigma(x^\epsilon_{t}) h^\epsilon_{t} dt, \label{rescaled_thermo2} \\
\tau \epsilon dh^\epsilon_t &= u^\epsilon_t dt,    \label{rescaled_thermo3} \\
\tau \epsilon du^\epsilon_t &= -\Omega^2 u^\epsilon_t dt - \Omega^2 h^\epsilon_t dt + \Omega^2 dW_t, \label{rescaled_thermo4}
\end{align}
where $m_0$ and $\tau$ are positive constants, $\Omega$, $\Omega_0$ and $\Omega_1$ are constants defined as before, and $(W_t)$ is a one-dimensional Wiener process. The initial conditions are given by the random variables $x^\epsilon_0 = x$, $v^\epsilon_0 = v$, independent of $\epsilon$, and $(h^\epsilon_0, u^\epsilon_0)$ are distributed according to the invariant measure of the SDEs \eqref{rescaled_thermo3}-\eqref{rescaled_thermo4}. 

Assume that the assumptions in Corollary \ref{model4} are satisfied. Then, in the limit as $\epsilon \to 0$, the process $x^\epsilon_t$ converges (in the same sense as Corollary \ref{model4}) to the process $X_{t} \in \RR$, satisfying the SDE: 
\begin{equation}dX_{t} = b_{2}(X_{t}) dt + \sqrt{2D(X_{t})} dW_{t}, \end{equation}
where the noise-induced drift term is given by:
\begin{align}
b_{2}(X) &= D'(X) \\ 
&\ \ \ - \frac{(4m_{0}^2 \Omega^6 D^2(X)+\tau^2 (k_{B}T(X))^2)D(X)}{4m_{0}^2 \Omega^6 D^2(X)+2k_{B}T(X)m_{0}\tau \Omega^4(\Omega^2-1)D(X)+\tau^2(1+2\Omega^2)(k_{B}T(X))^2}\frac{T'(X)}{T(X)}.
\end{align}
\end{corollary}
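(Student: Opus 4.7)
The plan is to deduce Corollary \ref{cor_thermo_2} directly from Corollary \ref{h_1dcase} by specializing its coefficients and then simplifying the resulting noise-induced drift using the fluctuation-dissipation relation built into the thermophoresis model. The system \eqref{rescaled_thermo1}--\eqref{rescaled_thermo4} is exactly the one-dimensional version of the harmonic-noise SIDE with $h=g=\sqrt{\gamma}$, $\sigma=\sqrt{2}\,k_BT/\sqrt{D}$, and $\tau_\kappa=\tau_h=\tau$. First I would check that the assumptions of Corollary \ref{model4} hold: $\gamma(x)>0$ together with boundedness/Lipschitz conditions on $\gamma$, $\sigma$ (and their first two derivatives) are inherited from the assumed regularity of $T(x)$ and $D(x)$, commutativity is trivial in dimension one, and invertibility of $\vecc{B}_\lambda$ follows exactly as in the proof of Corollary \ref{model3} since the involved scalars are positive. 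Then Corollary \ref{h_1dcase} applies and gives $dX_t = S(X_t)\,dt + g^{-2}(X_t)\sigma(X_t)\,dW_t$; using $g^{-2}\sigma = \sqrt{2D}$ recovers the stated diffusion coefficient.

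Next I would carry out the algebraic identification $b_2 = S^{(1)}+S^{(2)}+S^{(3)}$. Using $\sigma^2/g^2 = 2k_BT$ and $(1/g^2)' = (D/k_BT)'$, the expression
\begin{equation*}
S^{(1)} = \tfrac{1}{2}\bigl(1/g^2\bigr)'\bigl(\sigma^2/g^2\bigr) = \tfrac{1}{2}\bigl(D/k_BT\bigr)'\cdot 2k_BT = D'(X) - D(X)T'(X)/T(X)
\end{equation*}
follows at once. For the cross terms, the key identity is
\begin{equation*}
-\frac{\sigma^2}{g}\Bigl(\frac{1}{g}\Bigr)' + \sigma\Bigl(\frac{\sigma}{g^2}\Bigr)' = \frac{\sigma\sigma'}{g^2} - \frac{\sigma^2 g'}{g^3} = \tfrac{1}{2}\bigl(\sigma^2/g^2\bigr)' = k_B T'(X),
\end{equation*}
which I would apply to the simplified $\tau_\kappa=\tau_h=\tau$ formulas for $S^{(2)}$ and $S^{(3)}$ from Corollary \ref{h_1dcase} to obtain
\begin{equation*}
S^{(2)} + S^{(3)} = 2\tau\Omega^2 k_B T'(X)\cdot\frac{g^2\tau + m_0\Omega^2(\Omega^2-1)}{4m_0^2\Omega^6 + 2g^2 m_0\tau\Omega^4(\Omega^2-1) + g^4\tau^2(1+2\Omega^2)}.
\end{equation*}

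After substituting $g^2 = k_BT/D$ and multiplying numerator and denominator by $D^2$, the denominator matches the one appearing in $b_2(X)$. Writing $b_2 = D' + (DT'/T)\bigl[-1 + \mathcal R\bigr]$, where $\mathcal R$ is a rational function in $D,T,\tau,\Omega,m_0$, the bracket simplifies by a direct cancellation: the middle term and the $2\Omega^2(k_BT)^2\tau^2$ piece eliminate, leaving exactly $-(4m_0^2\Omega^6 D^2 + \tau^2 (k_BT)^2)$ in the numerator. This produces the stated form of $b_2(X)$. The main obstacle is purely computational, namely carrying out this cancellation cleanly; verifying the combinatorial identity above for $S^{(2)}+S^{(3)}$ is the crucial step that avoids having to separately evaluate the unwieldy denominators in Corollary \ref{h_1dcase}.
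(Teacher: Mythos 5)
Your proposal is correct and follows essentially the same route as the paper, which simply deduces Corollary \ref{cor_thermo_2} from Corollary \ref{h_1dcase} (in its already-simplified $\tau_\kappa=\tau_h=\tau$ form) by substituting $g^2=\gamma=k_BT/D$ and $\sigma=k_BT\sqrt{2}/\sqrt{D}$. Your identity $-\frac{\sigma^2}{g}\bigl(\frac{1}{g}\bigr)'+\sigma\bigl(\frac{\sigma}{g^2}\bigr)'=\frac{1}{2}\bigl(\sigma^2/g^2\bigr)'=k_BT'$ checks out and correctly collapses $S^{(2)}+S^{(3)}$ to the stated rational correction, so the algebra matches $b_2(X)$ exactly.
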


We next discuss the contents of Corollary \ref{cor_thermo_2}.
\begin{remark}
Note that $b_{2}(X)$ differs from $b_{1}(X)$ obtained previously and $b_{2}(X) \to b_{1}(X)$, with $\alpha =1$ in the expression for $b_{1}(X)$, in the limit $\Omega \to \infty$.
\end{remark}

\noindent {\bf Discussion.}
In the reflecting boundaries case, the stationary distribution of the particle's position is 
\begin{align}
&\rho_{\infty}(X) \nonumber \\
&= C \exp{\left(-\int_{a}^{X} \frac{D(y)(4\Omega^6D^2(y)+r^2 (k_{B}T(y))^2)}{4 \Omega^6 D^2(y)+2r\Omega^4(\Omega^2-1)D(y)k_{B}T(y)+r^2(1+2\Omega^2)(k_{B}T(y))^2 }\frac{T'(y)}{T(y)} dy \right)}, \end{align}  
where $r := \tau/m_{0} > 0$  and $C$ is a normalizing constant. Similarly to the previous model, in the absence of temperature gradient (i.e. when $T$ is a constant), the particle is equally likely to be found anywhere in $(a,b)$. When a temperature gradient is present, distribution of the particle's position is not uniform. However, in contrast to the previous model, in the limit $r \to \infty$ the particle is not distributed uniformly on $(a,b)$ and in the limit $r \to 0$ the stationary density is no longer inversely proportional to the temperature. Both distributions depend on the diffusion coefficient $D(X)$ as well as on the temperature profile $T(X)$.

We can also study the sign of the thermophoretic drift. In this case there can be up to two critical ratios, 
$r_{c}$, at which $b_{2}(X)$ changes sign, as the equation $b_{2}(X) = 0$ is a quadratic equation in $r$.  In the special case when $\mu(T)=\mu_{0} > 0$ is a constant (and thus so is $\gamma(X)$), the thermophoretic drift is given by:
\begin{equation}b_{2}(X) = \frac{k_{B}T'(X)}{6 \pi R \mu_{0}} \left[1 - \frac{\Omega^6+9\pi^2R^2r^2\mu^{2}_{0}}{\Omega^6+3 \pi R r \Omega^4(\Omega^2-1)\mu_{0} +9\pi^2 R^2 r^2(1+2\Omega^2)\mu^2_{0}}  \right].\end{equation} In contrast to the result in previous model, $b_{2}(X)$ has the same sign as $T'(X)$  provided that \begin{equation}r > \frac{\Omega^2(1-\Omega^2)}{6\pi R \mu_{0}}.\end{equation}  Thus, $b_{2}(X)$ and $T'(X)$ do not share the same sign for all $r>0$, unless $ |\Omega| \geq 1.$ Thus, according to this model, presence of a temperature gradient allows us to tune the parameters $(m_{0}, \tau, \Omega)$ to control the direction which the particle travels.
The steady-state density in this case is \begin{equation}\rho_{\infty}(X) = C T(X)^{-\frac{\Omega^6+9\pi^2R^2r^2\mu^{2}_{0}}{\Omega^6+3 \pi R r \Omega^4(\Omega^2-1)\mu_{0} +9\pi^2 R^2 r^2(1+2\Omega^2)\mu^2_{0}}},\end{equation} where $C$ is a normalizing constant. The particle will be more likely found in the colder region for all $r>0$ if $|\Omega| \geq 1$, whereas this might not be true for all $r>0$ if $|\Omega| < 1$. 

%One can also study other special cases when $\mu(T)$ is linear or quadratic in $T$, as mentioned in the previous subsection. 

\section{Conclusions and Final Remarks} \label{conc}
We have studied homogenization of a  class of GLEs in the limit when three characteristic time scales, i.e. the inertial time, the characteristic memory time in the damping term, and the correlation time of colored noise driving the equations, vanish at the same rate.  We have derived effective equations, which are simpler in three respects:  
\begin{itemize}
\item[1.]  The velocity variables have been homogenized.  As a result, the number of degrees of freedom is reduced and there are no fast variables left.
\item[2.]  The equations become regular SDEs, since the memory time has been taken to zero.  
\item[3.]  The system is driven by a white noise. 
\end{itemize}

Importantly, {\it noise-induced drifts} are present in the limiting equations, resulting from the dependence of the coefficients of the original model on the state of the system.  We have applied the general results to a study of thermophoretic drift, correcting the formulae obtained in an earlier work \cite{Hottovy2012a}.  In systems, satisfying a fluctuation-dissipation relation, the noise-induced drifts in the limiting SDEs for the particle's position reduce to a single term, and for special cases the limiting SDEs coincide with that of \cite{hottovy2015smoluchowski}. However, in the more general case, new terms appear, absent in the case without memory.  To prove the main theorem, we have employed  the main result of \cite{hottovy2015smoluchowski}, proven here in a different version under a relaxed assumption on the damping matrix and the initial conditions.

Homogenization of other specific non-Markovian models can also be studied using the methods of this paper.  An example is a system with exponentially decaying memory kernel, driven by white noise in the limit as the inertial and memory time scales vanish at the same rate.  In this case the noise-induced drift in the limiting equation will consist of two terms, not three, as in the case studied here.

The colored noises considered in this paper have correlations decaying exponentially (short-range memory). It would be interesting to study cases where the GLE is driven by other colored noises such as fractional Gaussian noises, with covariances decaying as a power,  relevant for modeling anomalous diffusion phenomena in fields ranging from biology to finance \cite{kou2008stochastic}. As mentioned in Section 2, we will explore homogenization for GLEs with vanishing effective damping and diffusion constant in a future work.   

%In particular, it would be interesting to answer the question posed in Remark \ref{conjec} for this class of SIDEs in the case when the fluctuation-dissipation relation holds. 

\begin{acknowledgements}
The authors were partially supported by NSF grant DMS-1615045. S. Lim is grateful for the support provided by the Michael Tabor Fellowship from the Program in Applied Mathematics at the University of Arizona during the academic year 2017-2018. The authors learned the method of introducing additional variables to eliminate the memory term from E. Vanden-Eijnden. They would like to thank Maciej Lewenstein for insightful discussion on the GLEs.     
\end{acknowledgements}

\bibliographystyle{spmpsci}      % mathematics and physical sciences

\bibliography{ref}

\section*{Appendices}
\appendix
\section{Derivation of SIDEs From a Hamiltonian Model} \label{appA}

For completeness, we provide a derivation for a special case of SIDEs \eqref{genle_general}, \eqref{side2} and \eqref{side3} from a Hamiltonian model of a small system (Brownian particle) in contact with a heat bath in  thermal equilibrium.
The particle is moving in a potential $U$. The heat bath is modeled as a system of non-interacting harmonic oscillators whose initial energy is distributed according to the Gibbs distribution at temperature $T$. The Brownian particle is coupled to each oscillator in the bath. This model is used widely to study many systems in statistical physics \citep{ford1965statistical, mori1965transport}. Our goal  is to derive, heuristically, a stochastic integro-differential equation (SIDE) for the position and momentum variables of the particle from the Hamiltonian dynamics. This derivation serves to motivate the class of SIDEs that we are studying in this paper. We emphasize that our derivation here is certainly not original and follows closely that in \citep{hanggi1997generalized} (see also an abstract approach in \citep{Zwanzig1973}).  

One approach to derive the equations is to assume first that there are finitely many harmonic oscillators in the bath (Kac-Zwanzig model \citep{zwanzig2001nonequilibrium,ariel2008strong}). We then take the thermodynamic limit by sending the number of oscillators to infinity in the resulting equations (replacing finite sum over oscillator frequencies by an integral), arguing that the set of frequencies must be dense to allow dissipation of energy from the system to the bath and to eliminate Poincar\'e recurrence. Another approach, which is more technically involved, is to replace the finite system of oscillator equations by a system modeled by a wave equation \citep{rey2006open,pavliotis2014stochastic}. We will derive the SIDEs by adopting the former approach in the multi-dimensional case. 

We consider the situation where the coupling is nonlinear in the particle's position  and linear in the bath variables. Let $\vecc{\hat{x}} = (\vecc{x}, \vecc{x}_{1}, \dots, \vecc{x}_{N}) \in \RR^{d+N p}$ and $\vecc{\hat{p}} = (\vecc{p}, \vecc{p}_{1}, \dots, \vecc{p}_{N}) \in \RR^{d+Np}$ (here, $\vecc{x}, \vecc{p} \in \RR^d$ and $\vecc{x}_i, \vecc{p}_i \in \RR^p$ for $i=1,\dots,N$). Hereafter, the superscript $^*$ denotes transposition and $|\vecc{b}|^2 := \vecc{b}^* \vecc{b} = \sum_{k=1}^{n} b_k^2$ denotes square of the norm of vector $\vecc{b} := (b_1, \dots, b_n) \in \RR^n$. 

The Hamiltonian of the system plus bath is:
\begin{equation}H(\vecc{\hat{x}},\vecc{\hat{p}}) = \frac{|\vecc{p}|^2}{2m} + U(\vecc{x}) + \sum_{k=1}^{N} \left( \frac{|\vecc{p}_{k}|^2}{2} + \frac{1}{2} \omega_{k}^2 \left| \vecc{x}_{k}-\frac{\vecc{c}^*_{k}}{\omega_{k}^2}\vecc{f}(\vecc{x}) \right|^2 \right) , \end{equation}
where $m$ is the particle's mass, $\vecc{x} \in \RR^{d}$ and $\vecc{p} \in \RR^{d}$ are respectively the particle's position  and  momentum, $\vecc{x}_{k} \in \RR^{p}$, $\vecc{p}_{k} \in \RR^{p}$  and $\omega_{k} \in \RR^{+}$ $(k=1,\dots,N)$ are respectively the position, momentum and frequency of the $k$th bath oscillator (with unit mass), $\vecc{f}(\vecc{x}) := (f_{1}(\vecc{x}), \dots, f_{r}(\vecc{x})) \in \RR^r$ is a vector function of $\vecc{x} := (x^{(1)}, \dots, x^{(d)})$ and the $\vecc{c}_{k} \in \RR^{r \times p}$ (so $\vecc{c}_k^* \in \RR^{p \times r}$) are  coupling matrices that specify the coupling strength between the system and the $k$th bath oscillator. 
%We assume each $f_{j}$ $(j=1,\dots,r)$ is continuously differentiable in $x^{(k)}$, for every $k=1,\dots,d$.

To derive an equation for the particle's position and momentum, we start by substituting the expression for $H(\hat{\vecc{\vecc{x}}}, \hat{\vecc{\vecc{p}}})$ into the Hamilton's equations to obtain: 
\begin{align}
\dot{\vecc{x}} &= \frac{\vecc{p}}{m}, \\ 
\dot{\vecc{p}} &= -\vecc{\nabla}_{\vecc{x}} U(\vecc{x})  + \vecc{g}(\vecc{x}) \sum_{k} \vecc{c}_{k} \left(\vecc{x}_{k}-\frac{\vecc{c}^*_{k}}{ \omega_{k}^2} \vecc{f}(\vecc{x}) \right), \label{momeq}\\
\vecc{x}_{k} &= \vecc{p}_{k}, \ \ \ k = 1, \dots, N,\\
\vecc{p}_{k} &= -\omega_{k}^2 \vecc{x}_{k} + \vecc{c}^*_{k} \vecc{f}(\vecc{x}), \ \ \ k = 1,\dots,N,
\end{align}
where $\vecc{g}(\vecc{x}) \in \RR^{d \times r}$ denotes the Jacobian matrix $\left(\frac{\partial f_{i}}{\partial x^{(j)}} \right)_{ij}.$

Next, we eliminate the bath variables $\vecc{x}_{k}, \vecc{p}_{k}$, $k=1,\dots,N$, from the system's dynamics. Solving for $\vecc{x}_{k}(t)$ in terms of $\vecc{x}(t)$:
\begin{equation}\vecc{x}_{k}(t) = \vecc{x}_{k}(0)\cos(\omega_{k}t) + \frac{\vecc{p}_{k}(0)}{\omega_{k}} \sin(\omega_{k}t) + \frac{\vecc{c}^*_{k}}{ \omega_{k}} \int_{0}^{t} \sin(\omega_{k}(t-s)) \vecc{f}(\vecc{x}(s)) ds. \end{equation}

Substituting this into $\eqref{momeq}$, we obtain:
\begin{align}
\dot{\vecc{p}}(t) &= -\vecc{\nabla}_{\vecc{x}} U(\vecc{x}(t))  + \vecc{g}(\vecc{x}(t)) \sum_{k} \frac{\vecc{c}_{k} \vecc{c}_{k}^{*}}{ \omega_{k}^2} \left( \int_{0}^{t}  \omega_{k} \sin(\omega_{k}(t-s)) \vecc{f}(\vecc{x}(s)) ds  - \vecc{f}(\vecc{x}(t)) \right) \nonumber \\ 
&\ \ \ \  \ + \vecc{g}(\vecc{x}(t)) \vecc{F}(t), \end{align}
where \begin{equation}\vecc{F}(t) = \sum_{k} \vecc{c}_{k} \left(\vecc{x}_{k}(0) \cos(\vecc{w}_{k}t) + \frac{\vecc{p}_{k}(0)}{ \omega_{k}} \sin(\omega_{k}t) \right).\end{equation}

In the integral term above, we integrate by parts to obtain:
\begin{equation} \int_{0}^{t}  \omega_{k} \sin(\omega_{k}(t-s)) \vecc{f}(\vecc{x}(s)) ds = \vecc{f}(\vecc{x}(t)) - \cos(\omega_{k}t) \vecc{f}(\vecc{x}(0)) - \int_{0}^{t} \cos(\omega_{k}(t-s)) \vecc{g}^*(\vecc{x}(s)) \dot{\vecc{x}}(s) ds.\end{equation}

Using this, the equation for $\vecc{p}(t)$ becomes the generalized Langevin equation (GLE):
\begin{equation} \label{deri_side}
\dot{\vecc{p}}(t) = -\vecc{\nabla}_{\vecc{x}} U(\vecc{x}(t))  - \vecc{g}(\vecc{x}(t))  \int_{0}^{t}  \vecc{\kappa}(t-s) \vecc{g}^*(\vecc{x}(s)) \dot{\vecc{x}}(s) ds    +  \vecc{g}(\vecc{x}(t)) \vecc{\xi}(t),\end{equation}
where 
\begin{equation} \label{ham_memory}
\vecc{\kappa}(t) = \sum_{k} \frac{\vecc{c}_{k} \vecc{c}_{k}^{*}}{ \omega_{k}^2} \cos(\omega_{k}t) \in \RR^{r \times r} \end{equation} 
and 
\begin{equation}\vecc{\xi}(t) = \vecc{F}(t) - \vecc{\kappa}(t) \vecc{f}(\vecc{x}(0)) =  \sum_{k} \vecc{c}_{k} \left( \left( \vecc{x}_{k}(0) - \frac{\vecc{c}^*_{k}}{\omega_{k}^2} \vecc{f}(\vecc{x}(0)) \right) \cos(\omega_{k} t) + \frac{\vecc{p}_{k}(0)}{ \omega_{k}} \sin(\omega_{k}t) \right). \end{equation}

Note that $\vecc{\xi}(t) \in \RR^r$ is expressed in terms of the initial values of the variables $\vecc{x}'_{k}(0) := \vecc{x}_{k}(0) - \frac{\vecc{c}^*_{k}}{\omega_{k}^2} \vecc{f}(\vecc{x}(0)) \in \RR^{p} $ and $\vecc{p}_{k}(0)\in \RR^{p}$. If all these initial values are known, then $\vecc{\xi}(t)$ is a deterministic force. However,  one rarely has a complete information about these initial values and this is where the introduction of randomness can help to simplify the model. In view of this, we assume that the variables $\vecc{x}'_k(0)$ and $\vecc{p}_k(0)$ are random and are distributed according to a Gibbs measure, with the density:
\begin{equation}\rho((\vecc{x}_{k}, \vecc{p}_{k}) \ | \ \vecc{x}(0) = \vecc{x})  = Z^{-1} \exp{\left(-\beta \left(\sum_{k=1}^{N} \frac{|\vecc{p}_{k}|^2}{2} + \frac{1}{2} \omega_{k}^2 \left| \vecc{x}_{k}-\frac{\vecc{c}^*_{k}}{\omega_{k}^2}\vecc{f}(\vecc{x}) \right|^2 \right) \right)},\end{equation} where $\beta = 1/(k_{B}T)$ and $Z$ is the partition function. Taking the averages of the bath variables with respect to the above density:
\begin{align}
&E_{\rho}\left[ \vecc{x}'_{k}(0)  \ |  \ \vecc{x}(0) = \vecc{x} \right] = 0, \ \ \  \ E_{\rho}[ \vecc{p}_{k}(0) \ | \ \vecc{x}(0) = \vecc{x} ] = 0, \\
&E_{\rho} [ \vecc{x}'_k(0) (\vecc{x}'_{k}(0))^* \ | \ \vecc{x}(0) = \vecc{x} ] = \frac{k_{B}T}{\omega_{k}^2} \vecc{I}, \ \  \ \ \ E_{\rho}[ (\vecc{p}_{k}(0) (\vecc{p}_k(0))^* \ | \ \vecc{x}(0) = \vecc{x} ] = k_{B}T \vecc{I}, 
\end{align}
where $E_{\rho}$ denotes mathematical expectation with respect to $\rho$ and $\vecc{I} \in \RR^{p \times p}$ is identity matrix. 

Note that $\vecc{\xi}(t)$ is a stationary, Gaussian process, if it is conditionally averaged with respect to $\rho$ \cite{zwanzig2001nonequilibrium}. It follows from this distribution of the bath variables that we have the  {\it fluctuation-dissipation relation}:
\begin{equation}E_{\rho}[\vecc{\xi}(t)] = 0, \ \ E_{\rho}[\vecc{\xi}(t)\vecc{\xi}(s)^{*}]=k_{B}T\vecc{\kappa}(t-s),\end{equation} 
where $\vecc{\kappa}(t-s)$ is the memory kernel whose formula is given in \eqref{ham_memory}. Later, we will generalize the resulting covariance of the process $\vecc{\xi}(t)$ to an integral expression. 
We remark that the memory function $\vecc{\kappa}(t)$ and  the ``color" of the noise $\vecc{\xi}(t)$ are determined by the bath spectrum and the system-bath coupling. 

Now we pass to the continuum limit by replacing the sum over $k$ in $\vecc{\kappa}(t)$ by an integral $\int_{\RR^{+}} d\omega n(\omega)$, where $n(\omega)$ is a density of states.  Then, if the $\vecc{c}_k$ are replaced by $\vecc{c}(\omega) \in \RR^{r \times p}$, the memory function $\vecc{\kappa}(t)$ becomes the function:
\begin{equation}\vecc{\kappa}(t) = \int_{\RR^{+}} d\omega n(\omega) \frac{\vecc{c}(\omega) \vecc{c}(\omega)^{*}}{\omega^2} \cos(\omega t), \end{equation}
where $\hat{\vecc{\kappa}}_c(\omega) :=  n(\omega) \vecc{c}(\omega) \vecc{c}(\omega)^*/\omega^2 \in L^1(\RR^+)$.  Repeating the same procedure for the noise process and also replacing the $\vecc{x}'_k(0)$ and $\vecc{p}_k(0)$ by $\vecc{x}'(\omega)$ and $\vecc{p}(\omega)$ respectively, $\vecc{\xi}(t)$ becomes:
 \begin{equation} \label{limit_noise}
 \vecc{\xi}(t) = \int_{\RR^+} d\omega n(\omega) \vecc{c}(\omega) \left(  \vecc{x}'(\omega)  \cos(\omega t) + \frac{\vecc{p}(\omega)}{ \omega} \sin(\omega t) \right). \end{equation}

The choice of the $n(\omega)$ and $\vecc{c}(\omega)$ specifies the memory function and therefore (by the fluctuation-dissipation relation) the statistical properties of the noise process. We write $\vecc{\kappa}(t)$ as an inverse Fourier transform of a measure:
\begin{equation} \label{memory}
\vecc{\kappa}(t) = \frac{1}{2\pi} \int_{\RR} \vecc{S}(\omega) e^{i \omega t} d\omega,  
\end{equation}
where the measure is absolutely continuous with respect to the Lebesgue measure, with the density
$\vecc{S}(\omega) = \pi \hat{\vecc{\kappa}}_c(\omega) \geq 0$. The density $\vecc{S}(\omega)$ is known as the {\it spectral density} of the bath. 

In the following examples, we take $n(\omega) =  2\omega^2/\pi$ (Debye-type spectrum for phonon bath).

\begin{example}
If we choose $\vecc{c}(\omega) \in \RR^{d \times d}$ to be the identity matrix $\vecc{I}$ multiplied by a scalar constant that is independent of $\omega$, then $\vecc{\kappa}(t)$ is proportional to $\delta(t) \vecc{I}$. This leads to a Langevin equation driven by white noise, in which the damping term is instantaneous. In this case, we have the SDE system for $(\vecc{x}_t, \vecc{v}_t) \in \RR^{d \times d}$: 
\begin{align}
d \vecc{x}_t &= \vecc{v}_t dt, \label{side_derived1} \\ 
m d \vecc{v}_t &= -\vecc{\nabla}_{\vecc{x}} U (\vecc{x}_t) dt - \vecc{g}(\vecc{x}_t) \vecc{g}^*(\vecc{x}_t)\vecc{v}_t dt + \vecc{g}(\vecc{x}_t) \vecc{\xi}_t dt, \label{side_derived2}  
\end{align}
where $\vecc{\xi}_t$ is a white noise.      
\end{example}

%\footnote{ Let us restrict to one dimension for simplicity and consider the associated Lamb's model. Then one can actually ``see'' the random motion of the particle as described by the Lamb's model via a toy demonstration: see the video at {\it https://vimeo.com/117627468}. Indeed, imagine the chain in the video is infinite and our particle is one of those making up the chain. Pick a particular point (close to the pile of chain) in the chain and observe how its height from the table varies as the chain is pulled. By viewing the tension in the chain as due to its falling from the edge of the table, we can visualize Lamb's model in real action! (visit {\it http://www.msri.org/web/msri/public/public-events-and-initiatives/tadashis-toys} for a fun digression)}

\begin{example} If we  choose $\vecc{c}(\omega) \in \RR^{d \times d}$ to be the diagonal matrix with the $k$th entry 
\begin{equation}
\frac{\alpha_k}{\sqrt{\alpha^2_k+\omega^2}} ,\end{equation} 
where the $\alpha_{k} > 0$, then we have:
\begin{equation} \vecc{\kappa}(t) = \vecc{A} e^{-\vecc{A}|t|}, \end{equation}
where $\vecc{A}$ is the constant diagonal matrix with the $k$th entry equal $\alpha_{k}$. This gives SIDE \eqref{side2}.  On the other hand, choosing $\vecc{c}(\omega)$ to be the diagonal matrix with the $k$th entry
\begin{equation}
\left(\frac{\omega_{kk}}{\tau_{kk}}\right)^2 \frac{1}{\sqrt{\omega^2 (\omega_{kk}^2/\tau_{kk})^2+(\omega^2-(\omega_{kk}/\tau_{kk})^2)^2}}\end{equation} allows us to obtain the covariance function of a harmonic noise process, where the $\omega_{kk}$ and $\tau_{kk}$ are the diagonal entries of the matrices $\vecc{\Omega}$ and $\vecc{\tau}$ respectively. In the general case where $\vecc{\kappa}(t)$ is of the form \eqref{memory_realized}, one may take $\vecc{M}_1 = \vecc{I}$, $\vecc{\Gamma}_1$ to be positive definite (so that the Lyapunov equation gives $\vecc{\Gamma}_1 = \vecc{\Sigma}_1 \vecc{\Sigma}_1^*/2$) and choose 
\begin{equation}
\vecc{c}(\omega) = \frac{1}{\sqrt{2}} \vecc{C}_1 (\vecc{\Gamma}^2_1 +  \omega^2 \vecc{I})^{-1/2} \vecc{\Sigma}_1.\\
\end{equation}
\end{example}

\noindent {\bf Conclusion.} The goal of this paper is to study effective dynamics of systems which  can be modeled by equations of the form \eqref{genle_general}. As argued above, Hamiltonian systems describing particles interacting with heat baths can be modeled by equations of the form \eqref{genle_general}. 

%As remarked in section \ref{intro}, a  more general form of SIDE could also be treated by methods of this paper (in a straightforward manner, albeit with slightly more tedious analysis), in which  the factor multiplying the integral in $\eqref{genle}$ is replaced by $\vecc{h}(\vecc{x}_t)$ where $\vecc{g}$ and $\vecc{h}$ are two distinct functions. In this case, the SIDE would  model our Hamiltonian systems in the multi-dimensional case where $d=r$ is arbitrary.  

\section{Outline of the Proof of Theorem \ref{skthm}} \label{proof_sketch}
We provide minimal outline of the proof of Theorem \ref{skthm} in the following.

\subsection{Derivation of Limiting SDE} 
We start by rewriting SDE \eqref{gsk2} as
\begin{equation}\vecc{v}^m_{t} dt = - m \boldsymbol{\gamma}^{-1}(\vecc{x}^m_{t})  d\vecc{v}^m_{t} + \boldsymbol{\gamma}^{-1}(\vecc{x}^m_{t}) \vecc{F}(\vecc{x}^m_{t}) dt + \boldsymbol{\gamma}^{-1}(\vecc{x}^m_{t}) \vecc{\sigma}(\vecc{x}^m_{t}) d\vecc{W}_{t}.\end{equation}

The integral form of the above is given by
\begin{align}
\vecc{x}^m_{t} &= \vecc{x} - m\int_{0}^{t} \boldsymbol{\gamma}^{-1}(\vecc{x}^m_{s})  d\vecc{v}^m_{s} + \int_{0}^{t} \boldsymbol{\gamma}^{-1}(\vecc{x}^m_{s}) \vecc{F}(\vecc{x}^m_{s}) ds + \int_{0}^{t} \boldsymbol{\gamma}^{-1}(\vecc{x}^m_{s}) \vecc{\sigma}(\vecc{x}^m_{s}) d\vecc{W}_{s}.
\end{align}

We are interested in the limit as $m \to 0$ of the process $\vecc{x}^m_{t}$. As $m \to 0$, we expect the sum of the second and third integral terms in the right hand side above to converge to $\int_0^t \vecc{\gamma}^{-1}(\vecc{X}_s) \vecc{F}(\vecc{X}_s) ds + \int_0^t \vecc{\gamma}^{-1}(\vecc{X}_s) \vecc{\sigma}(\vecc{X}_s) d\vecc{W}_s$. 

To examine the asymptotics of the first integral term when $m$ becomes small, we integrate by parts to write its $i$th component  as:
\begin{align} 
\int_{0}^{t} (\gamma^{-1})_{ij}(\vecc{x}^m_{s}) d(m v^m_{s})_{j} &= (\gamma^{-1})_{ij}(\vecc{x}^m_{t})  m (v^m_{t})_{j} - (\gamma^{-1})_{ij}(\vecc{x})  m (v^m)_{j} \nonumber
\\ 
&\ \ \ \ \ \ - \int_{0}^{t} \frac{\partial}{\partial x^m_{l}}[(\gamma^{-1})_{ij}(\vecc{x}^m_{s})] m (v^m_{s})_{j} (v^m_{s})_{l} ds.
\end{align}
Note that the product $ m (v^m_{s})_{j} (v^m_{s})_{l}$ is the $(j,l)$-entry of the matrix $m \vecc{v}^m_{s} (\vecc{v}^m_{s})^{*}$. 

%Thus, to examine the asymptotic behavior of $\vecc{x}_{t}$, we need to perform estimates on the moments of $\vecc{v}_{t}$ separately for each of the cases: $\mu = 2$, $\mu < 2$ and $\mu > 2$. 

We now examine the asymptotic behavior of the above expression in the limit as $m \to 0$.
Following \cite{hottovy2015smoluchowski}, we express the matrix $m \vecc{v}^m_{s} (\vecc{v}^m_{s})^{*}$, $s \in [0,t]$, as a solution to an equation by applying It\^o's formula to the matrix $(m \vecc{v}^m_{s})(m (\vecc{v}^m_{s})^{*})$. This leads to:
\begin{align}
d[(m \vecc{v}^m_{s})(m (\vecc{v}^m_{s})^{*})] &= -[\vecc{\gamma}(\vecc{x}^m_{s}) (m \vecc{v}^m_{s} (\vecc{v}^m_{s})^{*} ds) + (m \vecc{v}^m_{s} (\vecc{v}^m_{s})^{*} ds) \vecc{\gamma}^{*}(\vecc{x}^m_{s})] + \vecc{\sigma}(\vecc{x}^m_{s}) \vecc{\sigma}^{*}(\vecc{x}^m_{s}) ds + d\vecc{U}^m_{s} + d(\vecc{U}^m_s)^{*},
\end{align}
where 
\begin{align}
d\vecc{U}^m_{s} &= (\vecc{F}(\vecc{x}^m_{s}) ds + \vecc{\sigma}(\vecc{x}^m_{s}) d\vecc{W}_{s}) m (\vecc{v}^m_{s})^{*}, \\ 
d(\vecc{U}^m_s)^{*} &= m \vecc{v}^m_{s}(\vecc{F}^{*}(\vecc{x}^m_{s}) ds + d\vecc{W}_{s}^{*} \vecc{\sigma}^{*}(\vecc{x}^m_{s})).
\end{align}

Denoting $m \vecc{v}^m_{s} (\vecc{v}^m_{s})^{*} ds$ by $\vecc{V}$, $-\vecc{\gamma}(\vecc{x}^m_{s})$ by $\vecc{Q}$ and letting \begin{equation}\vecc{C} := d[(m \vecc{v}^m_{s})(m (\vecc{v}^m_{s})^{*})] - \vecc{\sigma}(\vecc{x}^m_{s}) \vecc{\sigma}^{*}(\vecc{x}^m_{s}) ds - d\vecc{U}^m_{s} - d(\vecc{U}^m_{s})^{*}, \end{equation} we can write the above equation as the following Lyapunov equation
\begin{equation}\vecc{Q} \vecc{V} + \vecc{V} \vecc{Q}^{*} = \vecc{C}. \end{equation}
By Assumption \ref{a2}, all  real parts of the eigenvalues of $\vecc{Q}$ are negative, thus the Lyapunov equation has a unique solution given by: 
\begin{equation}\vecc{V} = - \int_{0}^{\infty} e^{\vecc{Q} y} \vecc{C} e^{\vecc{Q}^{*}y} dy. \end{equation}

Writing this out explicitly, we obtain:
\begin{align}
m \vecc{v}^m_{s} (\vecc{v}^m_{s})^{*} ds &= - \int_{0}^{\infty} e^{-\vecc{\gamma}(\vecc{x}^m_{s}) y} d[(m \vecc{v}^m_{s})(m (\vecc{v}^m_{s})
^{*})]  e^{-\vecc{\gamma}^{*}(\vecc{x}^m_{s}) y} dy  \nonumber \\ 
&\ \ \ \ + \int_{0}^{\infty} e^{-\vecc{\gamma}(\vecc{x}^m_{s}) y} (\vecc{\sigma}(\vecc{x}^m_{s}) \vecc{\sigma}^{*}(\vecc{x}^m_{s}) ds) e^{-\vecc{\gamma}^{*}(\vecc{x}^m_{s}) y} dy  \nonumber \\
&\ \ \ \ + \int_{0}^{\infty} e^{-\vecc{\gamma}(\vecc{x}^m_{s}) y} (d\vecc{U}^m_{s}+ d(\vecc{U}^m_{s})^{*})   e^{-\vecc{\gamma}^{*}(\vecc{x}^m_{s}) y} dy.\label{estb}
\end{align}

Based on the prior result in \cite{hottovy2015smoluchowski}, we expect that only the second term on the right hand side has a nonzero limit as $m \to 0$. The other terms are expected to vanish as $m \to 0$. Thus, in the limit $m \to 0$, we expect that $m \vecc{v}^m_{t} (\vecc{v}^m_{t})^{*}$ converges to the solution, $\vecc{J}$, of the Lyapunov equation:
\begin{equation}\vecc{J} \vecc{\gamma}^{*} + \vecc{\gamma} \vecc{J} = \vecc{\sigma} \vecc{\sigma}^{*}, \end{equation}
 given in the statement of Theorem \ref{skthm} (see eqn. \eqref{lyp}). 

\subsection{Moment Estimates}
To justify the above convergence arguments, we provide  estimate on the $p$th moment of the momentum process $\vecc{p}^m_{t} := m\vecc{v}^m_{t}$, in the limit $m \to 0$. 

\begin{proposition} Suppose that Assumption \ref{a1}-\ref{a4} hold. For all $p\geq 1$, $T > 0$, there exists a positive random variable $m_1$ such that: 
\begin{equation} \mathbb{E}\left[ \sup_{t \in [0,T]} |\vecc{p}^m_{t}|^{p}; m \leq m_1 \right] \to 0 \end{equation}
as $m \to 0$.  
\end{proposition}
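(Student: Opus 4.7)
The plan is to establish the stated $L^p$ decay of the momentum $\vecc{p}^m_t := m \vecc{v}^m_t$ by constructing a state-dependent quadratic Lyapunov function tailored to the relaxed spectral assumption on $\vecc{\gamma}$. Rewriting $\eqref{gsk2}$ as
\begin{equation}
d\vecc{p}^m_t = \vecc{F}(\vecc{x}^m_t) dt - \frac{1}{m}\vecc{\gamma}(\vecc{x}^m_t) \vecc{p}^m_t dt + \vecc{\sigma}(\vecc{x}^m_t) d\vecc{W}^{(k)}_t,
\end{equation}
the naive choice $V = |\vecc{p}|^2$ fails because under Assumption \ref{a2} alone, $\vecc{p}^{*} \vecc{\gamma}(\vecc{x}) \vecc{p}$ need not be nonnegative. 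Overcoming this is the main new ingredient relative to \cite{hottovy2015smoluchowski}.

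I would instead use $V(\vecc{x}, \vecc{p}) := \vecc{p}^{*} \vecc{P}(\vecc{x}) \vecc{p}$, with $\vecc{P}(\vecc{x})$ the unique solution of the Lyapunov equation $\vecc{\gamma}^{*}(\vecc{x}) \vecc{P}(\vecc{x}) + \vecc{P}(\vecc{x}) \vecc{\gamma}(\vecc{x}) = \vecc{I}$, given explicitly by the absolutely convergent integral
\begin{equation}
\vecc{P}(\vecc{x}) = \int_0^\infty e^{-\vecc{\gamma}^{*}(\vecc{x}) s} e^{-\vecc{\gamma}(\vecc{x}) s} ds.
\end{equation}
Combining the uniform spectral gap from Assumption \ref{a2} with the boundedness and smoothness from Assumption \ref{a1} gives uniform bounds $c_1 |\vecc{p}|^2 \leq V \leq c_2 |\vecc{p}|^2$ together with uniformly bounded first and second derivatives of $\vecc{P}$ in $\vecc{x}$. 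Applying It\^o's formula to $V(\vecc{x}^m_t, \vecc{p}^m_t)$ and using the Lyapunov identity $2 \vecc{p}^{*} \vecc{P} \vecc{\gamma} \vecc{p} = |\vecc{p}|^2$, I expect an inequality of the form
\begin{equation}
dV \leq -\frac{1}{c_3 m} V \, dt + \frac{C_1}{m} V^{3/2} dt + C_2 (1 + \sqrt{V}) dt + dM_t,
\end{equation}
where $M_t$ is a local martingale with $d[M]_t \leq C V dt$. The term $V^{3/2}/m$ arises from the $\partial_{\vecc{x}} \vecc{P}$ contribution via the chain rule applied to $d\vecc{x}^m_t = \vecc{p}^m_t / m \, dt$, and is the main technical nuisance.

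The hard part is to control this cubic-in-$|\vecc{p}|$ term by the dissipation. Guided by the expectation that equilibrium fluctuations of $\vecc{p}^m_t$ are of order $\sqrt{m}$, I would localize by the stopping time $\tau_R := \inf\{t : |\vecc{p}^m_t| \geq R\sqrt{m}\}$. On $[0, \tau_R]$ one has $V^{3/2}/m \leq C R V/\sqrt{m}$, which for $m$ sufficiently small is dominated by $V/(c_3 m)$, so that the effective dynamics of $V$ is strictly dissipative on time scales of order $m$. A Gronwall plus BDG estimate on the stopped process then yields
\begin{equation}
\mathbb{E}\bigl[\sup_{t \leq T \wedge \tau_R} V(\vecc{x}^m_t, \vecc{p}^m_t)^{p/2}\bigr] \leq C_T \bigl(\mathbb{E}[V(\vecc{x}, \vecc{p}^m_0)^{p/2}] + m^{p/2}\bigr),
\end{equation}
whose right-hand side is $O(m^{p/2})$ by Assumption \ref{a3}. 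A Chebyshev-type tail bound then shows that $\mathbb{P}(\tau_R < T)$ decays rapidly in $m$, so the stopped estimate transfers to an unstopped one on the event $\{m \leq m_1\}$, with $m_1$ the positive (random) scale at which the small-momentum regime first fails. This delivers the claimed $\mathbb{E}[\sup_{t \in [0,T]} |\vecc{p}^m_t|^p ; m \leq m_1] \to 0$; the remaining components of the proof of Theorem \ref{skthm} then follow by the convergence arguments outlined earlier in this appendix, along the lines of \cite{birrell2017homogenization}.
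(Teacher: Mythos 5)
Your Lyapunov-function strategy (solving $\vecc{\gamma}^{*}\vecc{P}+\vecc{P}\vecc{\gamma}=\vecc{I}$ to convert uniform positive stability into a coercive dissipation term) is a legitimate way to handle the relaxed spectral assumption, and your It\^o computation up to the drift inequality is sound. It is, however, genuinely different from what the paper does: the paper writes $\vecc{p}^m_t$ via a frozen-coefficient Duhamel representation with the semigroup $e^{-\vecc{\gamma}(\vecc{x}^m_t)(t-s)/m}$, bounds that semigroup by $Ce^{-\kappa(t-s)/m}$ using positive stability together with boundedness of $\vecc{\gamma}$, treats $\frac1m\int_0^t e^{-\vecc{\gamma}(\vecc{x}^m_t)(t-s)/m}(\vecc{\gamma}(\vecc{x}^m_t)-\vecc{\gamma}(\vecc{x}^m_s))\vecc{p}^m_s\,ds$ as a perturbation, and absorbs it into $\sup_t|\vecc{p}^m_t|$ once $m$ lies below a pathwise threshold $m_1(\omega)$ determined by the modulus of continuity of $s\mapsto\vecc{\gamma}(\vecc{x}^m_s)$; this absorption step is exactly where the random variable $m_1$ in the statement comes from, and the remaining terms (initial datum, force, stochastic convolution) are then estimated directly in $L^p$.

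The genuine gap in your proposal is the final transfer step. For fixed $R$, Chebyshev gives $\mathbb{P}(\tau_R<T)\lesssim R^{-q}$, which is small in $R$ but does \emph{not} decay as $m\to0$: the equilibrium fluctuations of $\vecc{p}^m_t$ really are of size $\sqrt m$, so $\sup_{t\in[0,T]}|\vecc{p}^m_t|/\sqrt m$ is at least an $O(1)$ random variable (in fact such suprema typically grow like $\sqrt{\log(1/m)}$, so for fixed $R$ the escape probability tends to one, not zero). You can partially repair this by taking $R=R(m)\to\infty$ with $R(m)=o(m^{-1/2})$, which preserves the domination of the cubic term, but you are then still left with $\mathbb{E}[\sup_t|\vecc{p}^m_t|^p;\tau_R<T]$, which cannot be handled by Cauchy--Schwarz without an a priori unrestricted moment bound --- precisely what is being proved. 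More structurally, your good event $\{\tau_R^{(m)}\ge T\}$ depends on $m$ non-monotonically and does not define a single positive random variable $m_1$ with $\{m\le m_1\}\subseteq\{\tau_R^{(m)}\ge T\}$, so it cannot produce the statement as formulated. What your argument does deliver is an estimate on a good event of probability $1-o(1)$; that weaker conclusion would still feed into the convergence-in-probability argument of Lemma \ref{lemm}, but you would need either to prove and use that reformulated version explicitly, or to construct $m_1$ pathwise as the paper does.
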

\begin{proof}
For $t \in [0,T]$, $m >0$,  the process $\vecc{p}_t^m$ satisfies the SDE:
\begin{equation}
d\vecc{p}_t^m = -\frac{\vecc{\gamma}(\vecc{x}_t^m)}{m} \vecc{p}_t^m dt + \vecc{F}(\vecc{x}_t^m) dt + \vecc{\sigma}(\vecc{x}_t^m) d\vecc{W}_t.
\end{equation} 
Let $\tau \in [0,t]$ and rewrite the above equation as:
\begin{align}
d\vecc{p}_t^m &= -\frac{\vecc{\gamma}(\vecc{x}_\tau^m)}{m} \vecc{p}_t^m dt + \frac{1}{m} \left( \vecc{\gamma}(\vecc{x}_{\tau}^m)  - \vecc{\gamma}(\vecc{x}_t^m) \right) \vecc{p}_t^m  dt +  \vecc{F}(\vecc{x}_t^m) dt + \vecc{\sigma}(\vecc{x}_t^m) d\vecc{W}_t,
\end{align} 
which admits the following solution representation:
\begin{align}
\vecc{p}_t^m &= e^{-\frac{\vecc{\gamma}(\vecc{x}_\tau^m)}{m}t} \vecc{p}_0^m + \int_0^t e^{-\frac{\vecc{\gamma}(\vecc{x}_\tau^m)}{m}(t-s)} \vecc{F}(\vecc{x}_s^m) ds + \int_0^t e^{-\frac{\vecc{\gamma}(\vecc{x}_\tau^m)}{m}(t-s)} \vecc{\sigma}(\vecc{x}_s^m) d\vecc{W}_s \nonumber \\
&\ \ \ \  + \frac{1}{m} \int_0^t e^{-\frac{\vecc{\gamma}(\vecc{x}_\tau^m)}{m}(t-s)} (\vecc{\gamma}(\vecc{x}_\tau^m)-\vecc{\gamma}(\vecc{x}_s^m)) \vecc{p}_s^m ds.    
\end{align}

We set $\tau = t$ in the above representation to obtain:
\begin{align}
\vecc{p}_t^m &= e^{-\frac{\vecc{\gamma}(\vecc{x}_t^m)}{m}t} \vecc{p}_0^m + \int_0^t e^{-\frac{\vecc{\gamma}(\vecc{x}_t^m)}{m}(t-s)} \vecc{F}(\vecc{x}_s^m) ds + \int_0^t e^{-\frac{\vecc{\gamma}(\vecc{x}_t^m)}{m}(t-s)} \vecc{\sigma}(\vecc{x}_s^m) d\vecc{W}_s \nonumber \\
&\ \ \ \  + \frac{1}{m} \int_0^t e^{-\frac{\vecc{\gamma}(\vecc{x}_t^m)}{m}(t-s)} (\vecc{\gamma}(\vecc{x}_t^m)-\vecc{\gamma}(\vecc{x}_s^m)) \vecc{p}_s^m ds.    
\end{align}
Therefore, 
\begin{align}
|\vecc{p}_t^m| &\leq \left|e^{-\frac{\vecc{\gamma}(\vecc{x}_t^m)}{m}t} \vecc{p}_0^m\right| + \int_0^t \left\|e^{-\frac{\vecc{\gamma}(\vecc{x}_t^m)}{m}(t-s)}\right\| \cdot |\vecc{F}(\vecc{x}_s^m)| ds + \left| \int_0^t e^{-\frac{\vecc{\gamma}(\vecc{x}_t^m)}{m}(t-s)} \vecc{\sigma}(\vecc{x}_s^m) d\vecc{W}_s \right| \nonumber \\
&\ \ \ \  + \sup_{s \in [0,T]} |\vecc{p}_s^m| \cdot \frac{1}{m} \int_0^t \left\| e^{-\frac{\vecc{\gamma}(\vecc{x}_t^m)}{m}(t-s)} \right\| \cdot \|(\vecc{\gamma}(\vecc{x}_t^m)-\vecc{\gamma}(\vecc{x}_s^m)) \|  ds.    
\end{align}
By assumption on the boundedness and spectrum of $\vecc{\gamma}(\vecc{x}_\tau) \in \RR^{n \times n}$ (see Assumption \ref{a1}-\ref{a2}), there exists positive constants $\kappa>0$ and $C>0$ such that
\begin{equation}
\left\|e^{-\frac{\vecc{\gamma}(\vecc{x}^m_\tau)}{m}s}\right\| \leq C e^{-\frac{\kappa}{m}s},
\end{equation}
for all $s, \tau \in [0,T]$, $m>0$. Indeed, applying the formula (A.2.4) in \cite{kabanov2013two}, one has:
\begin{equation}
\left\|e^{-\frac{\vecc{\gamma}(\vecc{x}^m_\tau)}{m}s}\right\| \leq e^{-\frac{\Lambda}{m}s} \left(1 + 2 \|\vecc{\gamma}\| \sum_{k=1}^{n-1} \frac{1}{k!} \left(\frac{2 s \|\vecc{\gamma}\|}{m}\right)^k \right),
\end{equation}
where $\Lambda := \min_k Re(\lambda_k) > 0$ and the $\lambda_k$ are the eigenvalues of $\vecc{\gamma}$. 
Therefore, there exists a constant $C>0$ such that $\left\|e^{-\frac{\vecc{\gamma}(\vecc{x}^m_\tau)}{m}s}\right\| \leq C e^{-\frac{\Lambda}{2m}s}$. 

Using this, we obtain the following $\mathbb{P}$-a.s. estimate:
\begin{align}
\sup_{t \in [0,T]} |\vecc{p}_t^m| &\leq C  \left| \vecc{p}_0^m\right| + C  \sup_{t \in [0,T]} \int_0^t  e^{-\frac{\kappa}{m}(t-s)} |\vecc{F}(\vecc{x}_s^m)| ds + \sup_{t \in [0,T]} \left| \int_0^t e^{-\frac{\vecc{\gamma}(\vecc{x}_t^m)}{m}(t-s)} \vecc{\sigma}(\vecc{x}_s^m) d\vecc{W}_s \right| \nonumber \\
&\ \ \ \  + \sup_{t \in [0,T]} |\vecc{p}_t^m| \left( \sup_{t \in [0,T]} \frac{\tilde{C}}{m} \int_0^t  e^{-\frac{\kappa}{m}(t-s)} \|\vecc{\gamma}(\vecc{x}_t^m)-\vecc{\gamma}(\vecc{x}_s^m) \|  ds \right),    \label{as_bound}
\end{align}
where $\tilde{C}>0$ is a constant. 

Next, the key observation on \eqref{as_bound} is that the term  in parenthesis in the second line above can be made  small by choosing a sufficiently small $m$. More precisely, using an adapted version of Lemma A.2.4 in \cite{kabanov2013two}, there exists a (generally random) $m_1 > 0$ such that for $m \leq m_1$, 
\begin{align}
\sup_{t \in [0,T]} |\vecc{p}_t^m| &\leq C  \left| \vecc{p}_0^m\right| + C  \sup_{t \in [0,T]} \int_0^t  e^{-\frac{\kappa}{m}(t-s)} |\vecc{F}(\vecc{x}_s^m)| ds + \sup_{t \in [0,T]} \left| \int_0^t e^{-\frac{\vecc{\gamma}(\vecc{x}_t^m)}{m}(t-s)} \vecc{\sigma}(\vecc{x}_s^m) d\vecc{W}_s \right| \nonumber \\
&\ \ \ \  + \frac{1}{2} \sup_{t \in [0,T]} |\vecc{p}_t^m|, 
\end{align}
and so:
\begin{align}
\sup_{t \in [0,T]} |\vecc{p}_t^m| &\leq 2C  \left| \vecc{p}_0^m\right| + 2C  \sup_{t \in [0,T]} \int_0^t  e^{-\frac{\kappa}{m}(t-s)} |\vecc{F}(\vecc{x}_s^m)| ds + 2\sup_{t \in [0,T]} \left| \int_0^t e^{-\frac{\vecc{\gamma}(\vecc{x}_t^m)}{m}(t-s)} \vecc{\sigma}(\vecc{x}_s^m) d\vecc{W}_s \right| \\ 
&\leq 2C m \left| \vecc{v}_0^m\right| + 2C \frac{m}{\kappa}   \sup_{\vecc{u} \in \RR^n} |\vecc{F}(\vecc{u})| + 2\sup_{t \in [0,T]} \left| \int_0^t e^{-\frac{\vecc{\gamma}(\vecc{x}_t^m)}{m}(t-s)} \vecc{\sigma}(\vecc{x}_s^m) d\vecc{W}_s \right|.
\end{align}
Therefore, for $p \geq 1$,
\begin{align} 
\mathbb{E}\left[\sup_{t \in [0,T]} |\vecc{p}_t^m|^p; m \leq m_1 \right] &\leq C_1(p)  \mathbb{E}\left[ \left|m \vecc{v}_0^m\right|^p; m \leq m_1 \right]  + C_2(p) \mathbb{E}\left[ m^p; m \leq m_1 \right] \nonumber \\ 
&\ \ \ \  + C_3(p) \mathbb{E} \left[  \sup_{t \in [0,T]} \left| \int_0^t e^{-\frac{\vecc{\gamma}(\vecc{x}_t^m)}{m}(t-s)} \vecc{\sigma}(\vecc{x}_s^m) d\vecc{W}_s \right|^p; m \leq m_1 \right], \label{estt} 
\end{align}
where $C_1(p)$, $C_2(p)$ and $C_3(p)$ are some positive constants.

We estimate the last term in the above, using in particular the Burkholder-Davis-Gundy inequality \cite{karatzas2012Brownian}:
\begin{align}
\mathbb{E} \left[  \sup_{t \in [0,T]} \left| \int_0^t e^{-\frac{\vecc{\gamma}(\vecc{x}_t^m)}{m}(t-s)} \vecc{\sigma}(\vecc{x}_s^m) d\vecc{W}_s \right|^p; m \leq m_1 \right] &\leq \mathbb{E} \left[  \sup_{t \in [0,T]} \left| \int_0^t e^{-\frac{\vecc{\gamma}(\vecc{x}_t^m)}{m}(t-s)} \vecc{\sigma}(\vecc{x}_s^m) d\vecc{W}_s \right|^p\right]  \\ 
&\leq C(p,n) \mathbb{E}  \left(\int_0^T \|  e^{-\frac{\vecc{\gamma}(\vecc{x}_t^m)}{m}(t-s)} \vecc{\sigma}(\vecc{x}_s^m) \|_{F}^2 ds \right)^{p/2} \\ 
&\leq \tilde{C}(p,n) \mathbb{E} \left(\int_0^T  e^{-2\kappa(t-s)/m} ds \right)^{p/2} \\
&\leq \frac{\tilde{C}(p,n)}{\kappa} m^{p/2}, 
\end{align}
where $C(p,n)$, $\tilde{C}(p,n)$ are positive constants dependent on $p$ and $n$, and $\|\cdot \|_F$ denotes Frobenius norm. 
Using this estimate, \eqref{estt} and Assumption \ref{a3}, we see that  $\mathbb{E}\left[\sup_{t \in [0,T]} |\vecc{p}_t^m|^p; m \leq m_1 \right] \to 0$ as $m \to 0$. 
\end{proof}

%\begin{proposition} Suppose that Assumption \ref{a1}-\ref{a4} hold. For all $p>0$, $T>0$, $0<\beta < p/2$, we have \begin{equation} \mathbb{E} \sup_{t \in [0,T]} \|\vecc{p}^m_{t}\|^{p}  = O(m^{\beta}) \end{equation} as $m \to 0$. %Therefore, for any $p>0$, $T>0$, $\beta>0$, we have %\begin{equation}
%\mathbb{E} \sup_{t \in [0,T]} \|m\vecc{v}^m_{t} (\vecc{v}^m_{t})^{*}\|^{p} = O(m^{-\beta}), \end{equation} as $m \to 0$.
%\end{proposition}

%We remark that the stronger decay result obtained in Proposition 5.2 of \cite{birrell2017small} is no longer valid when the matrix $\vecc{\gamma}+\vecc{\gamma}^{*}$ is not positive definite but this stronger result is not necessary for our proof of convergence on compact time intervals.

We also need the following estimate on a class of integrals with respect to products of the components of the momentum process $\vecc{p}^m_{t} = m\vecc{v}^m_{t}$. The estimate is a straightforward modification  of the one given in Proposition 2.3 in \cite{birrell2017homogenization}.

\begin{proposition} 
Suppose that Assumption \ref{a1}-\ref{a4} hold.
Let $h :\RR^{n} \to \RR$ be a $C_{b}^{1}$ function (i.e. continuously differentiable and bounded function) on $[0,T]$, with bounded first derivative $\vecc{\nabla}_{\vecc{x}} h(\vecc{x})$ for every $\vecc{x} \in \RR^n$. Then for any $p \geq 1$, $T>0$, $i,j = 1, \dots, n$,
\begin{equation} \mathbb{E} \left[ \sup_{t \in [0,T]} \left| \int_{0}^{t} h(\vecc{x}^m_{s}) d((p^m_{s})_{i} (p^m_{s})_{j}) \right|^p; m \leq m_1\right]  \to 0 \end{equation}  as $m \to 0$, where the $m_1$ is from Proposition 1. Here $i,j$ denote the components of the momentum process $\vecc{p}^m_{t}$ in the standard basis for $\RR^{n}$. 
\end{proposition}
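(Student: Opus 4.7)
The natural strategy is to integrate by parts, moving the differential off the semimartingale $(p^m)_i(p^m)_j$ and onto the bounded-variation process $s \mapsto h(\vecc{x}^m_s)$. Since $d\vecc{x}^m_s = \vecc{v}^m_s\, ds$ has $C^1$ paths a.s., $h(\vecc{x}^m)$ is of finite variation on $[0,T]$ and has vanishing quadratic covariation with any semimartingale, so the classical integration-by-parts formula applies without an It\^o correction:
\begin{align*}
\int_0^t h(\vecc{x}^m_s)\, d\bigl((p^m_s)_i (p^m_s)_j\bigr) &= h(\vecc{x}^m_t)(p^m_t)_i(p^m_t)_j - h(\vecc{x})(p^m_0)_i(p^m_0)_j \\
&\quad - \frac{1}{m}\int_0^t (p^m_s)_i (p^m_s)_j\, \vecc{\nabla}_{\vecc{x}} h(\vecc{x}^m_s) \cdot \vecc{p}^m_s\, ds,
\end{align*}
where I used $dh(\vecc{x}^m_s)=\vecc{\nabla}_{\vecc{x}} h(\vecc{x}^m_s)\cdot\vecc{v}^m_s\, ds$ and $\vecc{v}^m_s=\vecc{p}^m_s/m$.

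Taking the supremum over $t\in[0,T]$ and using the uniform bounds on $h$ and $\vecc{\nabla}_{\vecc{x}} h$, this yields the pathwise estimate
$$\sup_{t\in[0,T]}\Bigl|\int_0^t h(\vecc{x}^m_s)\, d\bigl((p^m_s)_i(p^m_s)_j\bigr)\Bigr| \leq 2\|h\|_\infty \sup_{s\in[0,T]}|\vecc{p}^m_s|^2 + \frac{T\|\vecc{\nabla} h\|_\infty}{m}\sup_{s\in[0,T]}|\vecc{p}^m_s|^3.$$
Raising to the $p$th power, restricting to $\{m\le m_1\}$, and applying Minkowski's inequality in $L^p$, I reduce the problem to two moment estimates for $\vecc{p}^m$. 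The first term is controlled by $2\|h\|_\infty(\mathbb{E}[\sup|\vecc{p}^m|^{2p};m\le m_1])^{1/p}$, which vanishes by Proposition 1 with exponent $2p$. For the second term, one needs the quantitative version $\mathbb{E}[\sup|\vecc{p}^m|^{3p};m\le m_1]=O(m^{3p/2})$, which is extracted by re-reading the proof of Proposition 1: the three contributions in the bound \eqref{estt} give, respectively, $O(m^\alpha)$ with $\alpha\ge 3p/2$ (initial datum, via Assumption \ref{a3}), $O(m^{3p})$ (the $\vecc{F}$ integral), and $O(m^{3p/2})$ (the stochastic integral via Burkholder--Davis--Gundy). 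Thus $m^{-1}(\mathbb{E}[\sup|\vecc{p}^m|^{3p};m\le m_1])^{1/p}=O(m^{1/2})\to 0$, which gives the desired conclusion after raising to the $p$th power.

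The principal obstacle is that the naive route via direct It\^o expansion of $d((p^m)_i(p^m)_j)$ produces a bracket term $\int_0^t h(\vecc{x}^m_s)(\vecc{\sigma}\vecc{\sigma}^*)_{ij}(\vecc{x}^m_s)\,ds$ of order $O(1)$, which does not vanish individually and is cancelled only in an equipartition sense by the $-\frac{1}{m}\int h\, (p\gamma p)\, ds$ drift term; quantifying this cancellation pathwise is awkward. The integration-by-parts manoeuvre sidesteps this by exploiting the finite variation of $h(\vecc{x}^m)$, reducing the estimate to the pure momentum moments already controlled in Proposition 1. The one subtlety is that the $m^{-1}$-scaled term forces one to use the sharp rate $m^{q/2}$ implicit in the proof of Proposition 1, rather than its qualitative statement.
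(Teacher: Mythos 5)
Your proof is correct and follows the same route the paper intends: integrate by parts to move the differential onto the finite-variation process $h(\vecc{x}^m)$, then close the estimate with the momentum moment bounds. This is exactly the mechanism behind the Proposition 2.3 of \cite{birrell2017homogenization} that the paper references, and your identification of the key subtlety is right: because the third term carries a $1/m$ prefactor, the \emph{qualitative} statement of Proposition 1 does not suffice; one must read off the sharp rate $\mathbb{E}\left[\sup_{[0,T]}|\vecc{p}^m|^q;\, m\le m_1\right]=O(m^{q/2})$ from the three contributions in \eqref{estt} (the initial-datum term via Assumption \ref{a3} with $\alpha\ge q/2$, the $\vecc{F}$ term at $O(m^q)$, and the stochastic integral at $O(m^{q/2})$ via Burkholder--Davis--Gundy). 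Applied with $q=3p$ this gives $m^{-p}\,O(m^{3p/2})=O(m^{p/2})\to 0$, while $q=2p$ handles the boundary terms. The integration-by-parts identity is legitimate because $\vecc{x}^m$ has a.s.\ $C^1$ paths, so $h(\vecc{x}^m)$ is of finite variation and the bracket $[h(\vecc{x}^m),\,(p^m)_i(p^m)_j]$ vanishes — no It\^o correction, as you note. Nothing is missing.
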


%\begin{proposition} 
%Suppose that Assumption \ref{a1}-\ref{a4} hold.
%Let $h :\RR^{n} \to \RR$ be a $C_{b}^{1}$ function (i.e. %continuously differentiable and bounded function) on $[0,T]$, %with bounded first derivative $\vecc{\nabla}_{\vecc{x}} %h(\vecc{x})$ for every $\vecc{x} \in \RR^n$. Then for any %$p>0$, $T>0$, $0< \beta < p/2$, $i,j = 1, \dots, n$, we have
%\begin{equation} \mathbb{E} \sup_{t \in [0,T]} \left| \int_{0}^{t} h(\vecc{x}^m_{s}) d((p^m_{s})_{i} (p^m_{s})_{j}) \right|^{p}  = O(m^{\beta}),\end{equation} as $m \to 0$. Here $i,j$ denote the components of the momentum process $\vecc{p}^m_{t}$ in the standard basis for $\RR^{n}$.  %\end{proposition}

%The above proposition is needed to derive the estimate on the $(i,j)$-matrix elements of the integral involving the first term in \eqref{estb}, i.e. {equation}\int_{0}^{\infty} (e^{-\vecc{\gamma}(\vecc{x}_{s}) y})_{i,k_{1}} (e^{-\vecc{\gamma}^{*}(\vecc{x}_{s}) y })_{k_{2},j}  dy d[(m (\vecc{v}_{s})_{k_{1}})(m (\vecc{v}_{s})_{k_{2}}^{*})]  {equation}

Using the above moment estimates and the proof techniques (the main tools are well known ordinary and stochastic integral inequalities as well as a Gronwall type argument) in \cite{birrell2017small, birrell2017homogenization}, one  obtain the convergence of $\vecc{x}^m_{t}$ to $\vecc{X}_{t}$ in the limit as $m \to 0$ in the following sense:   for all finite $T>0$, $p \geq 1$, 
\begin{equation} \label{nonst}
\mathbb{E}\left[ \sup_{t \in [0,T]} |\vecc{x}_t^m - \vecc{X}_t|^p; m \leq m_1\right] \to 0,
\end{equation}
as $m \to 0$, where the $m_1$ is from Proposition 1. This implies that for all finite $T>0$,  $\sup_{t \in [0,T]} |\vecc{x}_t^m - \vecc{X}_t| \to 0$ in probability, in the limit as $m \to 0$ (see Lemma \ref{lemm} below).

%Indeed, for any bounded, continous function $g$ on $\RR^+$,
%\begin{align}
%\mathbb{E} [g(sup_{t \in [0,T]} |\vecc{x}_t^m - \vecc{X}_t|)] = \mathbb{E}[g(sup_{t \in [0,T]} |\vecc{x}_t^m - \vecc{X}_t|); m \leq m_1] + \mathbb{E}[g(sup_{t \in [0,T]} |\vecc{x}_t^m - \vecc{X}_t|); m > m_1]. 
%\end{align}
%The second term on the right hand side tends to zero as $m \to 0$ and \eqref{nonst} implies that the first term converges to zero by the following lemma.

\begin{lemma}  \label{lemm}
Let $Y_m \ge 0$ ($m \ge 0$) be a family of random variables.  Suppose that there exists a strictly positive random variable $m^*(\omega)$ such that 
$$
\lim_{m \to 0}\mathbb{E} \left[Y_m^p; m^* \ge m\right] \to 0
$$
for some $p \ge 1$.  Then $Y_m \to 0$ in probability as $m \to 0$.  
\end{lemma}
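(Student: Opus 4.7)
The plan is to show, for an arbitrary fixed $\epsilon > 0$, that $\mathbb{P}(Y_m > \epsilon) \to 0$ as $m \to 0$. The natural decomposition is to split the event $\{Y_m > \epsilon\}$ according to whether the ``good set'' $\{m^* \ge m\}$ occurs:
\[
\mathbb{P}(Y_m > \epsilon) = \mathbb{P}(Y_m > \epsilon,\, m^* \ge m) + \mathbb{P}(Y_m > \epsilon,\, m^* < m),
\]
and then bound each term separately.

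For the first term I would apply Markov's inequality to the nonnegative random variable $Y_m^p \mathbf{1}_{\{m^* \ge m\}}$:
\[
\mathbb{P}(Y_m > \epsilon,\, m^* \ge m) \le \mathbb{P}\bigl( Y_m^p \mathbf{1}_{\{m^* \ge m\}} > \epsilon^p \bigr) \le \frac{\mathbb{E}[Y_m^p;\, m^* \ge m]}{\epsilon^p},
\]
and the right-hand side tends to zero as $m \to 0$ by the hypothesis of the lemma. For the second term the trivial bound $\mathbb{P}(Y_m > \epsilon,\, m^* < m) \le \mathbb{P}(m^* < m)$ suffices. Since $m^*(\omega) > 0$ almost surely, the events $\{m^* < m\}$ decrease (as $m \downarrow 0$) to a subset of the null set $\{m^* \le 0\}$, so continuity of probability from above yields $\mathbb{P}(m^* < m) \to 0$.

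Combining the two estimates gives $\limsup_{m \to 0} \mathbb{P}(Y_m > \epsilon) = 0$ for every $\epsilon > 0$, which is exactly convergence in probability. There is no real technical obstacle: the lemma is a standard ``truncate to the good set and apply Markov'' device whose purpose is to convert the conditional moment bound of Proposition~1 (where the cutoff $m \le m_1$ is precisely of the form $m^* \ge m$ with $m^* := m_1$) into the in-probability convergence claimed in Theorem~\ref{skthm} and used throughout the homogenization results.
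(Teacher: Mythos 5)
Your proof is correct, but it takes a more direct route than the paper's. You estimate $\mathbb{P}(Y_m > \epsilon)$ head-on: split on the good set $\{m^* \ge m\}$, apply Markov's inequality to $Y_m^p \mathbf{1}_{\{m^* \ge m\}}$ there, and use continuity from above ($\{m^* < m\} \downarrow \{m^* \le 0\}$, a null set) for the complement. The paper instead tests against bounded continuous functions $g$ vanishing at the origin: it writes $\mathbb{E}[g(Y_m)] = \mathbb{E}[g(Y_m); m^* \le m] + \mathbb{E}[g(Y_m); m^* \ge m]$, kills the first term by $(\sup|g|)\,\mathbb{P}(m^* \le m) \to 0$, and for the second chooses $\delta$ with $|g(y)|<\epsilon$ on $[0,\delta)$ and bounds the contribution from $\{Y_m \ge \delta\}$ by $M\,\mathbb{E}[Y_m; m^* \ge m]$ using the linear bound $|g(y)|/y \le M$ on $\{y \ge \delta\}$; convergence of $\mathbb{E}[g(Y_m)]$ for all such $g$ is then equivalent to convergence in probability to $0$. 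The decomposition on $\{m^* \ge m\}$ versus its complement is the same in both arguments; the difference is only in what is integrated over each piece. Your version is arguably cleaner: it uses the $p$-th moment hypothesis exactly as stated, whereas the paper's second term implicitly needs the first moment $\mathbb{E}[Y_m; m^* \ge m] \to 0$, which for $p>1$ requires an extra (easy, but unstated) appeal to Hölder's inequality on the set $\{m^* \ge m\}$. The paper's test-function formulation buys nothing extra here since the limit is a constant, so there is no gap in either argument.
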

\begin{proof}
Let $g$ be a bounded continuous function on $\RR_+$.  We have
\begin{equation}
\mathbb{E} \left[g(Y_m)\right] = \mathbb{E} \left[g(Y_m); m^* \le m\right] + \mathbb{E} \left[g(Y_m); m^* \ge m\right].
\end{equation}
The first term is bounded from above by $(\sup |g|) \mathbb{P}\left[m^* \le m\right]$ which goes to zero as $m \to 0$, since $\cap_{m > 0}\{m^* \le m\} = \emptyset$.  To estimate the second term, let $\epsilon > 0$.  Choose $\delta > 0$ such that $|g(y)| < \epsilon$ whenever $y < \delta$.  We have
\begin{equation}
\mathbb{E} \left[g(Y_m); m^* \ge m\right] \le \mathbb{E} \left[g(Y_m); m^* \ge m, Y_m < \delta\right] +  \mathbb{E} \left[g(Y_m); m^* \ge m, Y_m \ge \delta\right].
\end{equation}
The first term is bounded from above by $\epsilon$.  The function ${g(y) \over y}$ on $\{y: y\ge \delta\}$ is bounded by some constant $M$, so the second term on the right-hand side of the above equation is bounded from above by $M \mathbb{E} \left[Y_m; m^* \ge m\right]$ which goes to zero as $m \to 0$ by assumption.  The lemma is proven.
\end{proof}

We end this appendix with a remark on the mode of convergence stated in Theorem 1.

\begin{remark}
Provided that  for $p \geq 1$,  there exists a constant $C>0$ such that $\mathbb{E} \sup_{t \in [0,T]} |\vecc{p}_t^m|^p < C$ for all $m>0$ (so that the family of random variables $(\sup_{t \in [0,T]} |\vecc{p}_t^m|^p)_{m > 0}$ is uniformly integrable), one could, using Proposition 1, Lemma 1 and Theorem 13.7 in \cite{williams1991probability}, obtain $L^p$-convergence of $\sup_{t \in [0,T]} |\vecc{p}_t^m|$ to zero and hence strengthen the convergence result stated in Theorem 1 to $L^p$-convergence. The uniform integrability condition is satisfied if $\mathbb{E} \sup_{t \in [0,T]} \|\vecc{\Phi}^m(t)\|^p$ is bounded uniformly in $m$, where $\vecc{\Phi}^m(t)$ is the fundamental matrix that solves the random initial value problem: 
\begin{equation}
\frac{\partial}{\partial t} \vecc{\Phi}^m(t) = -\frac{\vecc{\gamma}(\vecc{x}_t^m)}{m} \vecc{\Phi}^m(t),  \ \ \vecc{\Phi}^m(0) = \vecc{I}, \ \  t \in [0,T].
\end{equation}
However, it is not obvious how one could verify the latter condition from our assumptions on $\vecc{\gamma}$. Roughly speaking, one does not have a good control of $\sup_{t \in [0,T]} |\vecc{p}_t^m|^p$ outside of the set $\{m \leq m_1\}$. If $\vecc{\gamma}$ was, in addition, symmetric (and so all the (real) eigenvalues of $\vecc{\gamma}$ are bounded from below by a positive constant --  c.f. \cite{ birrell2017homogenization}),  the condition can be easily verified. 
\end{remark}

\end{document}